\newcommand\EFFACE[1]{}
\newtheorem{theorem}{Theorem}
\newtheorem{proposition}[theorem]{Proposition}
\newtheorem{definition}[theorem]{Definition}
\newtheorem{lemma}[theorem]{Lemma}
\newtheorem{corollary}[theorem]{Corollary}
\newtheorem{conjecture}[theorem]{Conjecture}
\newtheorem{example}[theorem]{Example}
\newtheorem{observation}[theorem]{Observation}
\newenvironment{proof}{
\par
\noindent {\bf Proof.}\rm}{\mbox{}\hfill$\square$\par\vskip 3mm}
\newcommand\SOMMET[1]{\draw[fill=black] (#1) circle (2pt)}
\def\ZZZ{\mathbb{Z}}
\def\Aut{{\rm Aut}}
\def\Id{{\rm Id}}
\def\IS{\sigma} % indegree sequence
\def\p2{\frac{q}{2}}  % p/2
\let\@fnsymbol\@arabic
\begin{document}

%\vspace*{2cm}

\title{On the Distinguishing Number of Cyclic Tournaments:
Towards the Albertson-Collins Conjecture}

%\label{adr-algerie}
\author{Kahina MESLEM~\thanks{LaROMaD, Faculty of mathematics, University of Sciences and Technology
Houari Boumediene (USTHB), B.P.~32 El-Alia, Bab-Ezzouar, 16111 Algiers, Algeria.}
\and \'Eric SOPENA~\thanks{Univ. Bordeaux, Bordeaux INP, CNRS, LaBRI, UMR5800, F-33400 Talence, France.}
%~$^,$\footnote{Corresponding author. Eric.Sopena@labri.fr}
}

\maketitle

\abstract{
A distinguishing $r$-labeling of a digraph $G$ is a mapping $\lambda$ from the set of vertices
of $G$ to the set of labels $\{1,\dots,r\}$ such that 
no nontrivial automorphism of $G$ preserves all the labels.
The distinguishing number $D(G)$ of $G$ is then the smallest $r$ for which $G$ admits a distinguishing $r$-labeling.
From a result of Gluck (David Gluck, Trivial set-stabilizers in finite permutation groups,
{\em Can. J. Math.} 35(1) (1983), 59--67),
it follows that $D(T)=2$ for every cyclic tournament~$T$ of (odd) order $2q+1\ge 3$.
Let $V(T)=\{0,\dots,2q\}$ for every such tournament.
Albertson and Collins conjectured in 1999
that the canonical 2-labeling $\lambda^*$, given by
$\lambda^*(i)=1$ if and only if $i\le q$, is distinguishing.

We prove that whenever one of the sub-tournaments of $T$ induced by the set of vertices $\{0,\dots,q\}$
or $\{q+1,\dots,2q\}$ is rigid, $T$ satisfies the Albertson-Collins Conjecture.
Using this property, we prove that several classes of cyclic tournaments satisfy the Albertson-Collins Conjecture.
Moreover, we also prove that every Paley tournament satisfies the Albertson-Collins Conjecture.
}%abstract

\medskip

\noindent
{\bf Keywords:} Distinguishing number; Automorphism group; Cyclic tournament; Albertson-Collins Conjecture.

\noindent
{\bf MSC 2010:} 05C20, 20B25.

%%%%%%%%%%%%%%%%%%%%%%%%%%%%%%%%%%%%%%%%%%%%%%%%%%%%%%%%%%%%%%%%%%%%%%%%%%%%%%%%%%%%%%%%%%%%%%%%%%%%%%%%%%%%%%%%%%%%%%%%%%%%%
%%%%%%%%%%%%%%%%%%%%%%%%%%%%%%%%%%%%%%%%%%%%%%%%%%%%%%%%%%%%%%%%%%%%%%%%%%%%%%%%%%%%%%%%%%%%%%%%%%%%%%%%%%%%%%%%%%%%%%%%%%%%%
%%%%%%%%%%%%%%%%%%%%%%%%%%%%%%%%%%%%%%%%%%%%%%%%%%%%%%%%%%%%%%%%%%%%%%%%%%%%%%%%%%%%%%%%%%%%%%%%%%%%%%%%%%%%%%%%%%%%%%%%%%%%%
%%%%%%%%%%%%%%%%%%%%%%%%%%%%%%%%%%%%%%%%%%%%%%%%%%%%%%%%%%%%%%%%%%%%%%%%%%%%%%%%%%%%%%%%%%%%%%%%%%%%%%%%%%%%%%%%%%%%%%%%%%%%%
%%%%%%%%%%%%%%%%%%%%%%%%%%%%%%%%% INTRODUCTION
\section{Introduction}

An \emph{$r$-labeling} of a graph or a digraph $G$ is a mapping $\lambda$ from the set of vertices $V(G)$
of $G$ to the set of labels $\{1,\dots,r\}$.
An automorphism $\phi$ of $G$ is \emph{$\lambda$-preserving} if $\lambda(\phi(u))=\lambda(u)$ for
every vertex $u$ of $G$.
An $r$-labeling  $\lambda$ of $G$ is \emph{distinguishing} if the only
$\lambda$-preserving automorphism of $G$ is the identity, that is, the labeling $\lambda$
breaks all the symmetries of $G$.
In~\cite{AC96}, Albertson and Collins introduce the \emph{distinguishing number} of $G$,
denoted $D(G)$, defined as the smallest $r$ for which $G$ admits a distinguishing $r$-labeling.

A digraph $G$ is \emph{rigid} (or \emph{asymmetric}) if the only automorphism of $G$ is the identity. Therefore,
$D(G)=1$ if and only if $G$ is rigid.

In the last decade, distinguishing numbers have been studied for several families of graphs, such
as trees~\cite{C06,WZ07},
interval graphs~\cite{C09},
planar graphs~\cite{ACD08},
hypercubes~\cite{BC04},
Cartesian products of graphs~\cite{A05,B13a,EIKPT16,FI08,IJK08,IK06,KZ07}, 
%and of complete graphs~\cite{FI08,IJK08}, 
Kneser graphs~\cite{AB07,B13b},
infinite graphs~\cite{BD10,IKT07,SW14},
for instance.
However, distinguishing numbers of digraphs have been less frequently considered
since the paper of Albertson and Collins~\cite{AC99}
(see \cite{LNS10,LS12,L13}).

Symmetry breaking in tournaments is studied in~\cite{L13} by Lozano
who considers
other ways of distinguishing vertices, namely by means of determining sets~\cite{B06}
 (sometimes called fixing sets) or  resolving sets.

\medskip

We denote by $\ZZZ_n$ the group of residues modulo $n$.
Let $q$ be an integer, $q\ge 1$, and $S$ be a subset of $\ZZZ_{2q+1}\setminus\{0\}$ such that
for every $k\in\ZZZ_{2q+1}\setminus\{0\}$, $|S\cap\{k,2q+1-k\}|=1$.
The \emph{cyclic tournament} $T=T(2q+1;S)$ 
 is the tournament of order $2q+1$ defined by $V(T)=\ZZZ_{2q+1}$
and $ij$ is an arc in $T$ if and only if $j-i\in S$.
Cyclic tournaments are sometimes called \emph{circulant} or \emph{rotational} tournaments in the literature.

Let $\Gamma$ be a permutation group on a finite set $\Omega$.
A subset $R$ of $\Omega$ is said to be \emph{$\Gamma$-regular}
if its stabilizer $S(R)=\{g\in \Gamma\ |\ gR=R\}$ is trivial,
that is, $S(R)=\{\Id_{\Omega}\}$, where $\Id_{\Omega}\in\Gamma$ stands for the identity 
permutation acting on ${\Omega}$.
In~\cite{G83}, Gluck proves the following theorem (see also~\cite{M02}
for a simpler proof of this result).

\begin{theorem}[Gluck~\cite{G83}]
\label{th:Gluck}
Let $\Gamma$ be a permutation group of odd order on a finite set $\Omega$.
Then~$\Gamma$ has a regular subset in $\Omega$.
\end{theorem}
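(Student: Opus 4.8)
The plan is to prove the equivalent statement that $\Gamma$ has a \emph{regular orbit} on the power set $2^{\Omega}$, i.e.\ a subset $R$ whose orbit has full size $|\Gamma|$; such an $R$ is exactly a $\Gamma$-regular set. The starting point is that a subset of $\Omega$ is fixed (setwise) by $g\in\Gamma$ if and only if it is a union of cycles of $g$, so $g$ fixes exactly $2^{c(g)}$ subsets, where $c(g)$ is the number of cycles of $g$ on $\Omega$. Here the oddness of $|\Gamma|$ enters decisively: every $g\neq\mathrm{Id}$ has odd order, so all its cycles have odd length and every nontrivial cycle has length at least $3$. Writing $m(g)$ for the number of points moved by $g$ and $n=|\Omega|$, this gives $c(g)\le n-\tfrac{2}{3}m(g)$; since $m(g)\ge 3$ for $g\neq\mathrm{Id}$, we get $c(g)\le n-2$, so each nonidentity element fixes at most $2^{n-2}$ subsets.

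A first attempt is then a union bound: if $\sum_{g\neq\mathrm{Id}}2^{c(g)}<2^{n}$, then some subset is fixed by no nontrivial element and we are done. This estimate is by itself too weak, however, since a group built from many disjoint $3$-cycles (for instance $(\ZZZ_3)^k$ acting on $3k$ points) contains far too many small-support elements for the bound to close, even though it does admit a regular set. This forces an induction on $|\Omega|$, and the intransitive case is precisely where the union bound fails but induction succeeds. Suppose $\Gamma$ has a proper orbit $\Omega_1\subsetneq\Omega$. Let $\Gamma_1$ be the (odd-order) image of $\Gamma$ in $\mathrm{Sym}(\Omega_1)$ and let $K$ be the kernel of $\Gamma\to\Gamma_1$, which acts faithfully on $\Omega\setminus\Omega_1$ and again has odd order. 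By induction $\Gamma_1$ has a regular set $R_1\subseteq\Omega_1$ and $K$ has a regular set $R_2\subseteq\Omega\setminus\Omega_1$. Setting $R=R_1\cup R_2$ and using that $\Omega_1$ is $\Gamma$-invariant, any $g$ with $gR=R$ must satisfy $gR_1=R_1$, hence lie in $K$, and then $gR_2=R_2$ forces $g=\mathrm{Id}$; so $R$ is $\Gamma$-regular.

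It remains to treat the transitive case, which is the main obstacle. When $\Gamma$ is imprimitive one reduces further using a nontrivial block system: $\Gamma$ acts transitively, with odd order, on the smaller set of blocks, and one combines a regular set obtained there by induction with a careful choice inside the blocks governed by the kernel of the block action. This leaves the genuinely hard core, the \emph{primitive} case, where the structure of odd-order groups must be exploited. The key point is that a primitive permutation group of odd order on $n\ge 4$ points contains no element of small support: by Jordan-type theorems a primitive group containing a $3$-cycle, or any short-support element, must contain a large alternating group, which is impossible for odd order. Hence every nonidentity $g$ moves at least some $s$ points with $s$ large, so $c(g)\le n-\tfrac{2}{3}s$, and combined with the strong restrictions that solvability (Feit--Thompson) and the classification of primitive solvable groups impose on $|\Gamma|$ relative to $n$, the union bound $\sum_{g\neq\mathrm{Id}}2^{c(g)}<2^{n}$ can be made to close. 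I expect this primitive case, where one must balance the number of elements of each support size against $2^{n}$, to be the delicate part; the combinatorial counting and the intransitive induction are by comparison routine.
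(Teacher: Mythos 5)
The paper does not prove this statement at all: it is Gluck's theorem, imported verbatim from \cite{G83} (with \cite{M02} cited as a simpler proof), so there is no internal argument to compare yours against. Your attempt must therefore stand on its own, and as written it is a roadmap rather than a proof: the two hardest of your three cases are left unestablished.

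What you do prove is correct. The counting setup is right (a nonidentity $g$ of odd order fixes $2^{c(g)}$ subsets, and $c(g)\le n-\tfrac{2}{3}m(g)$ since all nontrivial cycles have odd length, hence length at least $3$), and the intransitive reduction is a complete, valid argument: with $\Omega_1$ a proper orbit, $R_1$ regular for the image $\Gamma_1$ and $R_2$ regular for the kernel $K$ acting faithfully on $\Omega\setminus\Omega_1$, the invariance of $\Omega_1$ forces any stabilizing element into $K$ and then to the identity. The first genuine gap is the imprimitive case. If $R$ is chosen as a union of blocks realizing (by induction) a regular set for the action on the block system, then the stabilizer of $R$ is exactly the kernel $K$ of the block action; but $K$ acts on all of $\Omega$, not on a smaller set, so your induction on $|\Omega|$ does not apply to it. The ``careful choice inside the blocks governed by the kernel'' that simultaneously preserves block-level regularity and breaks $K$ is precisely the missing content, and it is not routine. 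The second and fatal gap is the primitive case, which is the actual substance of Gluck's theorem: you reduce it to the hope that the union bound $\sum_{g\neq Id}2^{c(g)}<2^n$ ``can be made to close'' by invoking Feit--Thompson, Jordan-type minimal-degree results, and the structure of primitive solvable groups. To close that bound one needs a polynomial upper bound on $|\Gamma|$ in terms of $n$ (P\'alfy--Wolf type bounds for solvable groups), a linear lower bound on the minimal degree of a primitive solvable group, and a separate treatment of the small degrees where these asymptotic estimates fail; none of this is carried out, and it is exactly where Gluck's and Matsuyama's papers spend their effort. So the proposal identifies a sensible strategy but does not constitute a proof of the theorem.
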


Theorem~\ref{th:Gluck} implies that $D(T)=2$ for every cyclic tournament $T=T(2q+1;S)$, $q\ge 1$.
Indeed, it follows from Gluck's Theorem that there exists a subset $R_T$ of $V(T)$ 
such that $\gamma(R_T)\neq R_T$
for every nontrivial automorphism $\gamma$ of $T$.
Therefore, the 2-labeling $\lambda_{R_T}$ defined by
$\lambda_{R_T}(i)=1$ if and only if $i\in R_T$ is clearly distinguishing.
On the other hand, $D(T)>1$ since $T$ is a cyclic tournament, and thus not rigid.

In~\cite{AC99}, Albertson and Collins study distinguishing 2-labelings of cyclic tournaments
and propose the following conjecture.

\begin{conjecture}[Albertson-Collins~\cite{AC99}]
For every cyclic tournament $T=T(2q+1;S)$, $q\ge 1$, the $2$-labeling $\lambda^*$, given
by $\lambda^*(i)=1$ if $0\le i\le q$ and $\lambda^*(i)=2$ otherwise, is distinguishing.
\label{conj:AC}
\end{conjecture}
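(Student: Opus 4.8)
The plan is to recast the statement group-theoretically. The two color classes of $\lambda^*$ are $A=\{0,\dots,p\}$ and $B=\{p+1,\dots,2p\}$, and a $\lambda^*$-preserving automorphism is exactly one that maps $A$ to $A$; since $|A|=p+1\neq p=|B|$, the classes cannot even be swapped, so $\lambda^*$ is distinguishing if and only if the setwise stabilizer of $A$ in $\Aut(T)$ is trivial. Two facts should be recorded at the outset. First, $\Aut(T)$ has odd order, because a tournament admits no arc-reversing involution; this is what makes Gluck's Theorem applicable and, as we shall see, also rules out the dilation $x\mapsto -x$. Second, $x\mapsto x+1$ is always an automorphism, so $\Aut(T)$ is transitive and contains the full rotation group $\ZZZ_{2p+1}$.

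First I would dispose of the affine automorphisms, those of the form $x\mapsto ax+b$ with $aS=S$. The key device is an autocorrelation count: the map $\mu\colon x\mapsto ax+b$ sends each pair $(x,x+1)$ to $(\mu(x),\mu(x)+a)$, so if $\mu$ stabilizes $A$ then $A$ must contain as many ordered pairs at difference $a$ as at difference $1$. Writing $N_a=|\{x\in A:x+a\in A\}|$, a direct count on the interval $A$ gives $N_1=p$, while $N_a=p+1-a$ for $1\le a\le p$ and $N_a=a-p$ for $p+1\le a\le 2p$. Hence $N_a=N_1$ forces $a\in\{1,2p\}$; the value $a=2p\equiv -1$ is excluded because $\Aut(T)$ has odd order (equivalently $-S\neq S$), and $a=1$ then forces $b=0$ since a nontrivial rotation moves the interval $A$. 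This already settles every cyclic tournament whose automorphism group consists only of affine maps, in particular all tournaments of prime order $2p+1$, where Burnside's theorem on transitive groups of prime degree forces $\Aut(T)\le\{x\mapsto ax+b\}$ (a tournament cannot be $2$-transitive, as that would reverse an arc), and in particular the Paley tournaments.

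For the remaining tournaments, whose automorphism group may contain non-affine elements, the plan is to use the rigid-half reduction. If $T[A]$ is rigid, then an $A$-stabilizing automorphism $\phi$ restricts to the identity on $A$, and it remains to propagate this to $B$. The vertices of $B$ need not be separated by their out-neighborhood $N^+(v)\cap A$ alone (small examples show two vertices of $B$ can share the same neighborhood in $A$), so the argument must combine these neighborhoods with the internal structure of $T[B]$: one shows that the only automorphism of $T[B]$ that simultaneously preserves $N^+(v)\cap A$ for every $v\in B$ is the identity, exploiting that these neighborhoods are traced out by a sliding length-$(p+1)$ window of $S$ as $v$ runs through $B$. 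The symmetric argument applies when $T[B]$ is rigid. Granting this reduction, the conjecture for a concrete family reduces to verifying that one of the two halves is rigid, which for structured $S$ can be read off from the score sequence or from arithmetic properties of $S$.

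The main obstacle is the genuinely general case: a tournament of composite order $2p+1$ whose automorphism group contains non-affine automorphisms and for which neither $T[A]$ nor $T[B]$ is rigid. There the autocorrelation count is unavailable (it is tailored to affine maps) and the reduction has no rigid half from which to start, so a new idea is required; this is precisely why the conjecture remains open. A secondary difficulty, internal to the reduction itself, is the propagation step from $A$ to $B$: because the adjacency profiles toward $A$ are not injective, establishing $\phi|_B=\mathrm{id}$ demands a careful analysis of the window structure of $S$ rather than a one-line counting argument.
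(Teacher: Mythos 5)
You were asked to prove Conjecture~\ref{conj:AC} itself, and it is worth being explicit that the paper does not prove it either: the conjecture is open, and the paper only establishes it for special classes (a rigid half, $|S^-|\le 2$, $S^-$ an interval, Paley tournaments). Your proposal concedes this at the end, so it cannot be judged as a complete proof; what can be judged are its two partial arguments. The first one --- the affine analysis --- is correct and in fact stronger than what the paper states. The reduction to the setwise stabilizer of $A=\{0,\dots,p\}$ is right (since $|A|\neq|B|$ the classes cannot swap), the autocorrelation count $N_a=p+1-a$ for $1\le a\le p$ and $N_a=a-p$ for $p+1\le a\le 2p$ is right, so $N_a=N_1=p$ forces $a\in\{1,2p\}$; $a=2p\equiv -1$ is impossible because $S\cap(-S)=\emptyset$, and $a=1$ forces $b=0$. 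Combined with Burnside's theorem on transitive groups of prime degree (2-transitivity is excluded because $\Aut(T)$ has odd order, so $\Aut(T)$ lies in the normalizer of the rotation group, i.e.\ consists of affine maps), this settles \emph{every} cyclic tournament of prime order $2p+1$, not only the Paley tournaments. The paper's Theorem~\ref{th:paley} treats only the Paley case and relies on the explicit description of $\Aut(QR_n)$; your route subsumes it and also subsumes Case~1 of Lemma~\ref{lem:sufficient} for prime $2p+1$.

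The genuine gap is in your rigid-half reduction, which is exactly the paper's Proposition~\ref{prop:T1orT2isrigid}: you state it, build your third paragraph on it, but never prove the propagation step from $\phi|_A=Id$ to $\phi|_B=Id$. Your sketched route --- separating vertices of $B$ by the traces $N^+(v)\cap A$ together with a ``sliding window'' analysis of $S$ --- is not carried out, and as you yourself observe these traces need not be injective on $B$, so it is not clear the route closes. The paper's argument is different and concrete: if $\phi_2\neq Id$, take the smallest non-fixed vertex $a_1$ of $B$, normalize $a_1=p+1$ by conjugating with a rotation (which preserves the hypothesis $\phi_1=Id$), note that the orbit of $a_1$ under $\phi_2$ has size at least~3 and induces a regular subtournament (Observation~\ref{obs:orbit-regular}), hence $a_1$ lies on a 3-cycle $a_1a_2a_3$ inside that orbit; then the arithmetically chosen vertex $w=a_1-a_3+a_2$ (if $a_2<a_3$) or $w=a_1+a_3-a_2$ (if $a_2>a_3$) lies in $A$, is fixed by $\phi$, and forces contradictory adjacencies because $a_3-a_1\notin S$ (resp.\ $a_1-a_2\notin S$). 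Without this argument, or an equivalent one, everything in your proposal that ``grants this reduction'' is unsupported; with it, your program and the paper's coincide on that branch, and the remaining obstruction you name (composite order, non-affine automorphisms, neither half rigid) is indeed exactly where the conjecture stays open.
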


In this paper, we prove that several classes of cyclic tournaments satisfy the Albertson-Collins Conjecture.
We first give some definitions, notation and basic results in Section~\ref{sec:basic}.
We then consider the so-called pseudo-cyclic tournaments in Section~\ref{sec:pseudo-cyclic} and 
prove our main results in Section~\ref{sec:main}.
We finally propose some directions for future work in Section~\ref{sec:discussion}.

%%%%%%%%%%%%%%%%%%%%%%%%%%%%%%%%%%%%%%%%%%%%%%%%%%%%%%%%%%%%%%%%%%%%%%%%%%%%%%%%%%%%%%%%%%%%%%%%%%%%%%%%%%%%%%%%%%%%%%%%%%%%%
%%%%%%%%%%%%%%%%%%%%%%%%%%%%%%%%%%%%%%%%%%%%%%%%%%%%%%%%%%%%%%%%%%%%%%%%%%%%%%%%%%%%%%%%%%%%%%%%%%%%%%%%%%%%%%%%%%%%%%%%%%%%%
%%%%%%%%%%%%%%%%%%%%%%%%%%%%%%%%%%%%%%%%%%%%%%%%%%%%%%%%%%%%%%%%%%%%%%%%%%%%%%%%%%%%%%%%%%%%%%%%%%%%%%%%%%%%%%%%%%%%%%%%%%%%%
%%%%%%%%%%%%%%%%%%%%%%%%%%%%%%%%%%%%%%%%%%%%%%%%%%%%%%%%%%%%%%%%%%%%%%%%%%%%%%%%%%%%%%%%%%%%%%%%%%%%%%%%%%%%%%%%%%%%%%%%%%%%%
%%%%%%%%%%%%%%%%%%%%%%%%%%%%%%%%% BASIC PROPERTIES
\section{Preliminaries}
\label{sec:basic}

We denote by $V(G)$ and $E(G)$ the set of vertices and the set of arcs of a digraph $G$, respectively.
Let $G$ be a digraph and $u$ a vertex of $G$.
The \emph{outdegree} of $u$ in $G$, denoted $d^+_G(u)$, is the number of arcs in $E(G)$ of the form $uv$,
and the \emph{indegree} of $u$ in $G$, denoted $d^-_G(u)$, is the number of arcs in~$E(G)$ of the form $vu$.
The \emph{degree} of $u$, denoted $d_G(u)$, is then defined by $d_G(u)=d_G^+(u)+d_G^-(u)$.
If $uv$ is an arc in $G$, $u$ is an \emph{in-neighbour} of $v$ and $v$ is an \emph{out-neighbour} of $u$.
We denote by $N_G^+(u)$ and $N_G^-(u)$ the set of out-neighbours and the set of in-neighbours of $u$ in $G$,
respectively. Hence, $d^+_G(u)=|N_G^+(u)|$ and $d^-_G(u)=|N_G^-(u)|$.
Let $v$ and $w$ be two neighbours of $u$. We say that $v$ and $w$ \emph{agree on $u$}
if either both $v$ and $w$ are in-neighbours of $u$, or both $v$ and $w$ are out-neighbours of $u$,
and that $v$ and $w$ \emph{disagree on $u$} otherwise.

A digraph $G$ is \emph{regular} if all its vertices have the same in-degree and the same out-degree,
that is, $d_G^+(u)=d_G^+(v)$ and $d_G^-(u)=d_G^-(v)$ for every two vertices $u$ and $v$ in $G$.
For a subset $W$ of $V(G)$, we denote by $G[W]$ the sub-digraph of $G$
induced by $W$.

For any finite set $\Omega$, $\Id_\Omega$ denotes the identity permutation acting on $\Omega$.
Since the set $\Omega$ is always clear from the context, we simply write
$\Id$ instead of $\Id_\Omega$ in the following.

An \emph{automorphism} of a digraph $G$ is an arc-preserving permutation of its vertices, that is,
a one-to-one mapping $\phi:V(G)\rightarrow V(G)$ such that $\phi(u)\phi(v)$ is an arc in $G$ whenever $uv$
is an arc in~$G$.
The set of automorphisms of $G$ is denoted $\Aut(G)$.
The \emph{order} of an automorphism $\phi$ is the smallest integer $k>0$ for which $\phi^k=\Id$.
%, where $\Id$ denotes the identity.
An automorphism $\phi$ of a digraph $G$ is \emph{nontrivial} if $\phi\neq \Id$. 
A vertex $u$ of $G$ is \emph{fixed} by $\phi$ if $\phi(u)=u$.
%, and a subdigraph $H$ of $G$ is \emph{fixed} by $\phi$ if every vertex of $H$ is fixed by~$\phi$.
The \emph{orbit of $u$ with respect to $\phi$} is the set $\{u,\phi(u),\dots,\phi^{r-1}(u)\}$,
where $r$ is the \emph{order of $u$ with respect to $\phi$}, that is, the smallest positive 
integer for which $\phi^r(u)=u$ (note that the order of $u$ necessarily divides the order of~$\phi$).
An \emph{orbit of~$\phi$} is then an orbit of some vertex with respect to~$\phi$.
A tournament cannot admit an automorphism of order 2 (such an automorphism would interchange the ends
of some arc) and thus the automorphism group of a tournament has odd order, and
every orbit of an automorphism of a tournament contains either one or at least three elements.

The following observations will be useful later.

\begin{observation}
\label{obs:agree-fixpoint}
Let $T$ be a tournament and $u$, $v$ two vertices of $T$.
If $v$ is fixed by an automorphism $\phi\in\Aut(T)$ and the orbit of $u$ with respect to $\phi$
is of size at least~3, then all vertices in the orbit of $u$ agree on $v$.
\end{observation}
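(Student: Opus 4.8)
The plan is to exploit the fact that $\phi$ fixes $v$, which forces $\phi$ to permute each of the two neighbour-sets $N_T^+(v)$ and $N_T^-(v)$ among themselves. Since $\phi$ is an automorphism, if $w$ is an out-neighbour of $v$, then $\phi(v)\phi(w)=v\,\phi(w)$ must be an arc, so $\phi(w)\in N_T^+(v)$; similarly $\phi$ maps $N_T^-(v)$ into $N_T^-(v)$. Thus both neighbour-sets are $\phi$-invariant, and in particular each orbit of $\phi$ (other than the fixed point $v$ itself) lies entirely inside one of these two sets.

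First I would let $O=\{u,\phi(u),\dots,\phi^{q-1}(u)\}$ be the orbit of $u$, with $|O|=q\ge 3$ by hypothesis. Since $u\neq v$ (the orbit of $v$ is a singleton, being fixed, whereas $|O|\ge 3$), $u$ is a neighbour of $v$: in a tournament every pair of distinct vertices is joined by exactly one arc, so $u$ lies in exactly one of $N_T^+(v)$, $N_T^-(v)$. By the invariance established above, the whole orbit $O$ lies in that same neighbour-set. If $u\in N_T^+(v)$, then every element of $O$ is an out-neighbour of $v$; if $u\in N_T^-(v)$, then every element of $O$ is an in-neighbour of $v$. In either case all vertices of the orbit agree on $v$, which is exactly the claim.

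There is really no hard part here; the statement follows immediately from the defining arc-preservation property of automorphisms together with the tournament property that any two distinct vertices are adjacent by a unique arc in one direction. The only point requiring a word of care is to confirm that $u$ is genuinely distinct from $v$ so that $u$ is actually a neighbour of $v$ and the dichotomy ``in-neighbour or out-neighbour'' applies; this is guaranteed by the size hypothesis $|O|\ge 3$, since the orbit of a fixed point has size $1$. The hypothesis that $T$ is a tournament (rather than an arbitrary digraph) is what makes every non-fixed vertex a neighbour of $v$, and hence what lets us conclude that the orbit lands wholly in one neighbour-set.
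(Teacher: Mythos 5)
Your proof is correct and rests on the same idea as the paper's: since $\phi$ fixes $v$ and preserves arcs, the orientation of the arc between $v$ and any vertex of the orbit is preserved under the powers of $\phi$ (you phrase this as $\phi$-invariance of $N_T^+(v)$ and $N_T^-(v)$, the paper applies $\phi^r$ directly to the pair $u,v$). The two formulations are interchangeable, so there is nothing further to add.
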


Indeed, if $u'$ belongs to the orbit of $u$ with respect to $\phi$, say $u'=\phi^s(u)$ for some integer $s>0$,
the result follows from the fact that $\phi^s(v)=v$ and $\phi^s$ is an automorphism of $T$.

\begin{observation}
\label{obs:agree-odd-orbits}
Let $T$ be a tournament and $\phi\in\Aut(T)$.
If $O_1$ and $O_2$ are two (not necessarily distinct) orbits of $\phi$ of size at least~3,
then, for every two vertices $u$ and $u'$ in $O_1$, 
$$|N_T^+(u)\cap O_2|=|N_T^+(u')\cap O_2| \ \ \ \mbox{and}\ \ \ |N_T^-(u)\cap O_2|=|N_T^-(u')\cap O_2|.$$
\end{observation}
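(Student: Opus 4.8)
The plan is to exploit the single fact that $u$ and $u'$ lie in a common orbit, so that some power of $\phi$ carries one to the other while leaving $O_2$ setwise invariant. Concretely, since $u,u'\in O_1$, there is an integer $r\geq 0$ with $\phi^r(u)=u'$. The map $\psi=\phi^r$ is again an automorphism of $T$, and because $O_2$ is an orbit of $\phi$ we have $\phi(O_2)=O_2$ and hence $\psi(O_2)=O_2$; thus $\psi$ restricts to a permutation of $O_2$.

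Next I would push out-neighbourhoods forward through $\psi$. If $v\in N_T^+(u)\cap O_2$, then $uv$ is an arc of $T$, so $\psi(u)\psi(v)=u'\psi(v)$ is an arc of $T$, and $\psi(v)\in O_2$ since $\psi$ preserves $O_2$. Therefore $\psi(v)\in N_T^+(u')\cap O_2$. As $\psi$ is injective, this exhibits an injection from $N_T^+(u)\cap O_2$ into $N_T^+(u')\cap O_2$, whence $|N_T^+(u)\cap O_2|\leq|N_T^+(u')\cap O_2|$.

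To finish the out-neighbour equality I would simply run the same argument with the roles of $u$ and $u'$ interchanged, using the automorphism $\psi^{-1}=\phi^{-r}$ (equivalently a suitable positive power of $\phi$), which sends $u'$ to $u$ and also preserves $O_2$; this yields the reverse inequality and hence $|N_T^+(u)\cap O_2|=|N_T^+(u')\cap O_2|$. The in-neighbour statement follows verbatim by replacing $N_T^+$ with $N_T^-$ throughout, since $\psi$ preserves arcs in both directions.

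There is no real obstacle here: the whole content is that an orbit of $\phi$ is setwise invariant under every power of $\phi$, so the only point requiring a moment's care is confirming that $\psi(v)$ stays inside $O_2$, which is immediate from $\phi(O_2)=O_2$. I note that the hypothesis that $O_1$ and $O_2$ have size at least $3$ is not actually used in this argument; it is presumably recorded because, in a tournament, every orbit has size one or at least three, and the situation involving a size-one orbit is already covered by Observation~\ref{obs:agree-fixpoint}.
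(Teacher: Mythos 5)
Your proof is correct and follows essentially the same route as the paper: both apply the power $\phi^r$ (the paper's $\phi^s$) carrying $u$ to $u'$, use the fact that this automorphism preserves $O_2$ setwise to map $N_T^+(u)\cap O_2$ into $N_T^+(u')\cap O_2$, and conclude equality of cardinalities (the paper asserts the image is exactly the target set, while you derive the two inequalities via $\psi$ and $\psi^{-1}$ -- a minor difference in bookkeeping, not in substance). Your closing remark is also accurate: neither your argument nor the paper's uses the size-at-least-3 hypothesis.
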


To see that, suppose $\phi^s(u)=u'$ for some integer $s>0$, and let 
$N_T^+(u)\cap O_2=\{v_1,\dots,v_k\}$, $k>0$ (if $N_T^+(u)\cap O_2=\emptyset$, the result is obvious).
We then have $N_T^+(u')\cap O_2=\{\phi^s(v_1),\dots,\phi^s(v_k)\}$, which implies
$|N_T^+(u)\cap O_2|=|N_T^+(u')\cap O_2|$ since $\phi^s$ is an automorphism of $T$,
and thus $|N_T^-(u)\cap O_2|=|N_T^-(u')\cap O_2|$.

\begin{observation}
\label{obs:orbit-regular}
Let $T$ be a tournament and $\phi\in\Aut(T)$.
Every orbit of $\phi$ induces a regular sub-tournament of $T$.
\end{observation}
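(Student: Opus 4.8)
The plan is to exploit the fact that an automorphism acts transitively on each of its orbits while preserving all arcs. Let $O=\{u,\phi(u),\dots,\phi^{q-1}(u)\}$ be an orbit of $\phi$, where $q$ is the order of $u$ with respect to~$\phi$. Since a tournament admits no automorphism of order~2, every orbit of $\phi$ has size either~1 or at least~3, as noted in the text. If $|O|=1$, then $T[O]$ is a single vertex and is trivially regular, so I would dispose of this case first and then focus on the case $|O|\ge 3$.

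For the main case, the key observation is that $\phi$ maps $O$ onto itself and preserves arcs, so its restriction to $O$ is an automorphism of the induced sub-tournament $T[O]$, and this restriction acts transitively on $O$. I would then argue that all vertices of $O$ share a common out-degree and a common in-degree inside $T[O]$. Given any two vertices $u'$ and $u''$ of $O$, choose $s>0$ with $\phi^s(u')=u''$; since $\phi^s$ is an automorphism of $T$ preserving $O$ setwise, it maps $N_T^+(u')\cap O$ bijectively onto $N_T^+(u'')\cap O$, whence these two sets have the same cardinality, and similarly for in-neighbours. Because $N_T^+(u')\cap O$ is exactly the out-neighbourhood of $u'$ within $T[O]$, and likewise for in-neighbourhoods, this equality of cardinalities is precisely the statement that $T[O]$ is regular.

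In fact this is nothing more than the special case $O_1=O_2=O$ of Observation~\ref{obs:agree-odd-orbits}, which already yields $|N_T^+(u')\cap O|=|N_T^+(u'')\cap O|$ and the in-neighbour analogue for any two vertices $u',u''$ of $O$. There is essentially no obstacle here: the only point requiring care is to handle the fixed-point (size~1) orbits separately, since Observation~\ref{obs:agree-odd-orbits} is stated only for orbits of size at least~3.
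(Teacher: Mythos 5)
Your proof is correct and follows essentially the same route as the paper, which also derives the observation as the special case $O_1=O_2$ of Observation~\ref{obs:agree-odd-orbits}. Your explicit treatment of the size-1 orbits (which Observation~\ref{obs:agree-odd-orbits} does not formally cover) is a minor but sensible addition of rigour that the paper leaves implicit.
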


This directly follows from Observation~\ref{obs:agree-odd-orbits}, when considering the case $O_1=O_2$.

\medskip

The \emph{transitive tournament} of order $n$, denoted $TT_n$, is defined by 
$V(TT_n)=\{0,\dots,n-1\}$ and, for every $i,j\in V(TT_n)$, $ij$ is an arc whenever $i<j$.
Clearly, every transitive tournament is rigid (all its vertices have distinct indegrees)
and thus $D(TT_n)=1$ for every $n$.
The \emph{almost transitive tournament} of order $n$, denoted $TT^*_n$, is obtained
from $TT_n$ by reversing the arc from 0 to $n-1$. The tournament $TT^*_n$ is thus Hamiltonian.
It is also known that every almost transitive tournament of order at least~4 is rigid
(note that $TT^*_3$ is the directed 3-cycle, and this is not rigid).

\begin{observation}
\label{obs:almost-rigid}
For every $n>3$, the tournament $TT^*_n$ is rigid, and thus $D(TT^*_n)=1$.
\end{observation}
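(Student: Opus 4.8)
The plan is to exploit the fact that any automorphism of a digraph preserves outdegrees, and to observe that the outdegree sequence of $TT^*_n$ is almost injective: only two of its values are repeated. First I would record the outdegrees. In the transitive tournament $TT_n$ one has $d^+(i)=n-1-i$ for every vertex $i$. Reversing the arc from $0$ to $n-1$ to obtain $TT^*_n$ decreases the outdegree of $0$ by one and increases that of $n-1$ by one, leaving all other outdegrees unchanged. Hence in $TT^*_n$ we get $d^+(0)=d^+(1)=n-2$, $d^+(n-2)=d^+(n-1)=1$, and $d^+(i)=n-1-i$ for $2\le i\le n-3$, so that each of the intermediate values $n-3,n-4,\dots,2$ is attained by exactly one vertex.

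Since $n>3$, the two sets $\{0,1\}$ and $\{n-2,n-1\}$ are disjoint, and every vertex outside them is the unique vertex of its outdegree. Because any automorphism $\phi$ of $TT^*_n$ preserves outdegrees, $\phi$ must fix each of the vertices $2,\dots,n-3$, and it restricts to a permutation of $\{0,1\}$ and to a permutation of $\{n-2,n-1\}$. In other words, $\phi$ fixes all other vertices and either fixes or exchanges $0$ and $1$, and either fixes or exchanges $n-2$ and $n-1$; thus $\phi$ is one of exactly four permutations: the identity, the transposition $\tau_1$ exchanging $0$ and $1$, the transposition $\tau_2$ exchanging $n-2$ and $n-1$, or their product $\tau_1\tau_2$.

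The final step is to rule out the three nontrivial candidates. Each of $\tau_1$, $\tau_2$, and $\tau_1\tau_2$ is a transposition or a product of two disjoint transpositions, hence has order $2$. But a tournament admits no automorphism of order $2$, as recalled in the preliminaries. Therefore $\phi=Id$, which shows that $TT^*_n$ is rigid, and consequently $D(TT^*_n)=1$.

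The only point needing care is the degree bookkeeping guaranteeing that the repeated outdegrees occur exactly at $\{0,1\}$ and $\{n-2,n-1\}$ and that these two sets are disjoint, which is precisely where the hypothesis $n>3$ enters. The low cases confirm sharpness: for $n=4$ there are no intermediate singletons but the argument still applies verbatim, whereas for $n=3$ the two sets would collapse into a single class (indeed $TT^*_3$ is the regular directed triangle, which has a nontrivial automorphism of order $3$). Once the candidate automorphisms are pinned down, the order-$2$ obstruction closes the argument immediately, so there is no genuine computational difficulty beyond this verification.
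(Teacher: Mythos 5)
Your proof is correct and follows essentially the same route as the paper: both arguments note that all outdegrees (equivalently indegrees) in $TT^*_n$ are distinct except for the two pairs $\{0,1\}$ and $\{n-2,n-1\}$, and then invoke the fact that a tournament has no automorphism of order $2$ to rule out the remaining nontrivial candidates. Your version merely spells out the degree bookkeeping and the four candidate permutations in more detail than the paper does.
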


Indeed, every two vertices in $TT^*_n$ have distinct indegrees, and thus distinct outdegrees,
except the pairs $\{0,1\}$ and $\{n-2,n-1\}$,
as $d_{TT^*_n}^-(0)=d_{TT^*_n}^-(1)=1$ and $d_{TT^*_n}^+(n-2)=d_{TT^*_n}^+(n-1)=1$.
Since every automorphism of $TT^*_n$ has odd order, $TT^*_n$ is a rigid tournament.

%But the in-neighbours of $0$ and $1$, namely $n-1$ and $0$, respectively, have distinct indegrees,
%and the out-neighbours of $n-2$ and $n-1$ are also $n-1$ and $0$, respectively.
%This implies that $TT^*_n$ is a rigid tournament.

Let $q$ be an integer, $q\ge 1$, and $S$ be a subset of $\ZZZ_{2q+1}\setminus\{0\}$ such that
for every $k\in\ZZZ_{2q+1}\setminus\{0\}$, $|S\cap\{k,2q+1-k\}|=1$. Let
$$S^+=S\cap\{1,\dots,q\},\ \ \ \mbox{and}\ \ \ S^-=\{1,\dots,q\}\setminus S^+.$$
We call the elements of $S^+$ the \emph{positive connectors} of the cyclic tournament $T(2q+1,S)$
and the elements of $S^-$ the \emph{negative connectors} of $T(2q+1,S)$.
Note that knowing either $S^+$ or $S^-$ is enough to determine $S$ since
$$S=S^+\cup\left\{-s\ :\ s\in\{1,\dots,q\}\setminus S^+\right\}
=\{1,\dots,q\}\setminus S^-\cup\{-s\ :\ s\in S^-\}.$$ 
Therefore, we preferably denote the cyclic tournament $T=T(2q+1;S)$
by $T=T(2q+1;S^-)$ whenever we deal with an explicit set $S$.
In that case, for every $i,j\in\ZZZ_{2q+1}$, $i<j$, $ij$ is an arc in $T=T(2q+1;S^-)$
if and only if 
either $j-i\le q$ and $j-i\notin S^-$,
or $j-i>q$ and $2q+1-j+i\in S^-$.

Note that any cyclic tournament $T=T(2q+1;S)$ is regular. More precisely, 
$d_T^-(u)=d_T^+(u)=q=|S|$ for every vertex $u\in T$.

%, all vertices have degree $2q$, outdegree $|S^+|$ and indegree $|S^-|$.

The {\em converse} $T^c$ of a tournament $T$ is obtained from $T$ be reversing all the arcs.
Clearly, the converse of any cyclic tournament $T=T(2q+1;S)$ is the cyclic tournament
$T^c=T(2q+1;S^c)$ with $S^c=\{1,\dots,2q\}\setminus S$. Moreover, $T$ and $T^c$ are isomorphic,
via the mapping $\gamma$ defined by $\gamma(0)=0$ and $\gamma(i)=2q+1-i$ for every $i\in V(T)\setminus\{0\}$.

%%%%%%%%%%%%%%%%%%%%%%%%%%%%%%%%%%%%%%%%%%%%%%%%%%%%%%%%%%%%%%%%%%%%%%%%%%%%%%%%%%%%%%%%%%%%%%%%%%%%%%%%%%%%%%%%%%%%%%%%%%%%%
%%%%%%%%%%%%%%%%%%%%%%%%%%%%%%%%%%%%%%%%%%%%%%%%%%%%%%%%%%%%%%%%%%%%%%%%%%%%%%%%%%%%%%%%%%%%%%%%%%%%%%%%%%%%%%%%%%%%%%%%%%%%%
%%%%%%%%%%%%%%%%%%%%%%%%%%%%%%%%%%%%%%%%%%%%%%%%%%%%%%%%%%%%%%%%%%%%%%%%%%%%%%%%%%%%%%%%%%%%%%%%%%%%%%%%%%%%%%%%%%%%%%%%%%%%%
%%%%%%%%%%%%%%%%%%%%%%%%%%%%%%%%%%%%%%%%%%%%%%%%%%%%%%%%%%%%%%%%%%%%%%%%%%%%%%%%%%%%%%%%%%%%%%%%%%%%%%%%%%%%%%%%%%%%%%%%%%%%%
%%%%%%%%%%%%%%%%%%%%%%%%%%%%%%%%% PSEUDO-CYCLIC TOURNAMENTS
\section{Pseudo-cyclic tournaments}
\label{sec:pseudo-cyclic}

Let $T=T(2q+1;S)$ be a cyclic tournament and $i,j$ two vertices of $T$ with $i<j$.
We denote by $T_{i,j}$ the sub-tournament of $T$ induced
by the set of vertices $\{i,i+1,\dots,j\}$.
Note that the sub-tournament $T_{i,j}$ is not necessarily a cyclic tournament.

In the rest of this paper, we call the 2-labeling $\lambda^*$
defined in Conjecture~\ref{conj:AC} the \emph{canonical 2-labeling} of $T$.
The canonical 2-labeling $\lambda^*$ thus
assigns label~1 to vertices of $T_{0,q}$ and label~2 to vertices of $T_{q+1,2q}$.

Since the tournament $T=T(2q+1;S)$ is cyclic, the sub-tournaments $T_{q+1,2q}$ and $T_{0,q-1}$ are isomorphic.
Recall that the set $S$ is characterized by the set $S^-$ of its negative connectors.
In the following, we thus study (not necessarily cyclic) tournaments of the following form.

\begin{definition}
{\rm
Let $N$ be a subset of $\ZZZ_{q+1}\setminus\{0\}$, $q\ge 2$, whose elements are called \emph{negative connectors}.
The \emph{pseudo-cyclic} tournament $P=P(q;N)$ is the tournament of order $q+1$ defined by 
$V(P)=\ZZZ_{q+1}$ and $ij$, $i>j$, is an arc in $P$ if and only if $i-j\in N$.
}\end{definition}

Note that if $T=T(2q+1;S)$  is a cyclic tournament,
then $T_{0,q}\cong P(q;S^-)$ and $T_{q+1,2q}\cong T_{0,q-1}\cong P(q-1;S^-)$ (if $q\notin S^-$),
or $T_{q+1,2q}\cong T_{0,q-1}\cong P(q-1;S^-\setminus\{q\})$ (if $q\in S^-$).

We first prove that $T=T(2q+1;S)$ satisfies the Albertson-Collins Conjecture
 whenever $T_{0,q}$ or $T_{q+1,2q}$ is rigid.

\begin{proposition}
\label{prop:T1orT2isrigid}
Let $T=T(2q+1;S)$ be a cyclic tournament.
If $T_{0,q}$ is rigid or $T_{q+1,2q}$ is rigid, then the canonical $2$-labeling
$\lambda^*$ of $T$ is distinguishing.
\end{proposition}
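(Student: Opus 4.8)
The plan is to argue by contradiction: suppose $\lambda^*$ is not distinguishing, so there exists a nontrivial $\lambda^*$-preserving automorphism $\phi\in\Aut(T)$. Since $\lambda^*$ assigns label~$1$ exactly to the vertices of $V_1=\{0,\dots,p\}$ and label~$2$ exactly to the vertices of $V_2=\{p+1,\dots,2p\}$, any $\lambda^*$-preserving automorphism must map $V_1$ to $V_1$ and $V_2$ to $V_2$. Consequently, $\phi$ restricts to an automorphism $\phi_1$ of the induced sub-tournament $T[V_1]=T_{0,p}$ and to an automorphism $\phi_2$ of $T[V_2]=T_{p+1,2p}$. This is the key structural observation that lets the hypothesis bite.

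**Using the rigidity hypothesis.**

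Assume first that $T_{0,p}$ is rigid. Then $\phi_1=Id$ on $V_1$, so $\phi$ fixes every vertex of $\{0,\dots,p\}$ pointwise. I now want to leverage the cyclic structure of $T$ together with the fact that $\phi$ fixes so many vertices. The natural tool is Observation~\ref{obs:agree-fixpoint}: if some vertex of $V_2$ lay in an orbit of size at least~$3$ under $\phi$, then all vertices of that orbit would have to agree on each fixed vertex of $V_1$. I would push this to show that a vertex $v\in V_2$ moved by $\phi$ cannot have the same in/out-neighbour pattern toward the fully-fixed set $V_1$ as its image $\phi(v)$, unless it is in fact fixed. Because $T$ is a \emph{cyclic} tournament, the arc relation between a vertex $v$ and the block $\{0,\dots,p\}$ is governed by the connector set $S$ in a rigid, position-dependent way; two distinct vertices of $V_2$ will have distinct "neighbourhood signatures" into the fixed set $V_1$. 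Since an automorphism must preserve these signatures, every vertex of $V_2$ must also be fixed, forcing $\phi=Id$, a contradiction. The symmetric case where $T_{p+1,2p}$ is rigid is handled identically, with the roles of $V_1$ and $V_2$ exchanged (or simply by passing to the converse tournament, noting that $T\cong T^c$ via the stated isomorphism $\gamma$).

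**Where the real work lies.**

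The main obstacle is the middle step: proving that the fixed block $V_1$ "pins down" the vertices of $V_2$, i.e.\ that distinct vertices of $V_2$ genuinely have distinct adjacency patterns into $\{0,\dots,p\}$, so that no nontrivial permutation of $V_2$ can preserve all arcs to the fixed set. This is where the cyclic arithmetic of $T(2p+1;S)$ must be used explicitly rather than just the abstract orbit observations. The cleanest route is probably to count, for each $v\in V_2$, a quantity such as $|N_T^+(v)\cap V_1|$ or a finer invariant, and show via the defining condition $j-i\in S$ that these invariants strictly increase (or otherwise vary injectively) as $v$ ranges over $p+1,\dots,2p$; Observation~\ref{obs:agree-odd-orbits} then rules out any nontrivial orbit touching $V_2$. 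I would expect to need the regularity of $T$ and the balanced structure of $S$ (that $|S\cap\{k,2p+1-k\}|=1$) to make this monotonicity argument go through, and this arithmetic bookkeeping is the part most likely to require care.
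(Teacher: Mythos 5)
Your setup (a $\lambda^*$-preserving automorphism restricts to automorphisms $\phi_1$ of $T_{0,p}$ and $\phi_2$ of $T_{p+1,2p}$, and rigidity kills one of them) matches the paper, but the step you yourself identify as ``where the real work lies'' rests on a false claim. It is \emph{not} true that distinct vertices of $V_2=\{p+1,\dots,2p\}$ must have distinct neighbourhood signatures into $V_1=\{0,\dots,p\}$, nor that $|N_T^+(v)\cap V_1|$ varies injectively over $V_2$ --- and this fails even under the hypothesis that $T_{0,p}$ is rigid. Concretely, take $T=T(13;S)$ with $S=\{3,5,7,9,11,12\}$, i.e.\ $S^-=\{1,2,4,6\}$. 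The sub-tournament $T_{0,6}$ is rigid: its indegree sequence is $(4,3,3,3,3,3,2)$, so vertices $0$ and $6$ are fixed by every automorphism, and the sub-tournament induced by $\{1,2,3,4,5\}$ has indegree sequence $(3,2,2,2,1)$ with $\{2,3,4\}$ inducing a transitive tournament, hence is rigid. Yet the vertices $8$ and $10$ of $V_2$ have \emph{identical} signatures toward $V_1$: checking $w-v \bmod 13 \in S$ for $w=0,\dots,6$ shows that both have out-neighbourhood $\{0,2,4,6\}$ and in-neighbourhood $\{1,3,5\}$ in $V_1$. So fixing $V_1$ pointwise does not ``pin down'' $V_2$ by adjacency patterns alone; the most your argument can conclude is $\phi(8)\in\{8,10\}$, and no monotonicity or counting refinement of the kind you sketch can close this, because the invariant itself does not separate the two vertices.

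The paper's proof supplies exactly the idea that is missing here, and it is genuinely different from signature-injectivity. Assuming $\phi_1=Id$ and $\phi_2\neq Id$, it takes the smallest non-fixed vertex $a_1$ of $V_2$ and normalizes $a_1=p+1$ by conjugating with a rotation (a legitimate use of cyclicity). The orbit of $a_1$ has size at least $3$ and induces a regular sub-tournament (Observation~\ref{obs:orbit-regular}), hence contains a $3$-cycle $a_1a_2a_3$, which forces the three differences $a_2-a_1$, $a_3-a_2$, $a_1-a_3$ to lie in $S$. Then a single explicitly constructed vertex, $w=a_1-a_3+a_2$ or $w=a_1+a_3-a_2$, lies in $V_1$, is fixed, and by Observation~\ref{obs:agree-fixpoint} all vertices of the orbit must agree on $w$; the arithmetic of $S$ then makes $a_1$ and another orbit vertex disagree on $w$, a contradiction. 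In other words, the separating witness in $V_1$ is not produced for arbitrary pairs of vertices of $V_2$ (your pair $\{8,10\}$ shows it cannot be), but only for vertices lying in a \emph{common orbit}, exploiting the $3$-cycle they span. To complete your proof you would need to replace the injectivity claim by this orbit/$3$-cycle argument or something equivalent.
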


\begin{proof}
Let $\phi$ be a $\lambda^*$-preserving automorphism of $T$, that is, $\lambda^*\circ\phi=\lambda^*$,
and let $\phi_1$ and $\phi_2$ denote the restriction of $\phi$ to $T_{0,q}$ and $T_{q+1,2q}$, respectively.
Since $\lambda^*\circ\phi=\lambda^*$, both $\phi_1$ and $\phi_2$ are automorphisms.
Moreover, since $T_{0,q}$ is rigid or $T_{q+1,2q}$ is rigid, we get $\phi_1=\Id$ or $\phi_2=\Id$.
We will prove that we necessarily have $\phi_1=\Id$ and $\phi_2=\Id$, which gives $\phi=\Id$
so that $\lambda^*$ is a distinguishing labeling of $T$.

Suppose first that $\phi_1=\Id$ and assume to the contrary that $\phi_2\neq \Id$.
Let $a_1\in\{q+1,\dots,2q\}$ be the ``smallest'' non-fixed vertex of $T_{q+1,2q}$.
Since $T$ is cyclic, we can assume without loss of generality $a_1=q+1$.
(If $a_1\neq q+1$, by using the ``shift'' automorphism $\alpha:i\mapsto i+a_1-q-1$,
the sub-tournaments $T_{0,q}$ and $T_{q+1,2q}$ are shifted to $T_{a_1-q-1,a_1-1}$
and $T_{a_1,a_1-q-2}$, respectively. The two restricted automorphisms $\phi_1$
and $\phi_2$ become $\phi'_1=\alpha\phi_1\alpha^{-1}$ and 
$\phi'_2=\alpha\phi_2\alpha^{-1}$, respectively, and we still have $\phi'_1=\Id$.)
Let $O_1$ denote the orbit of $a_1$ with respect to $\phi_2$.
Since the size of $O_1$ is at least~3 and the
sub-tournament $T[O_1]$ induced by $O_1$ is regular (Observation~\ref{obs:orbit-regular}),
$a_1$ necessarily belongs to a
cycle in $T[O_1]$ and thus to a 3-cycle in $T[O_1]$, say $a_1a_2a_3$.

Since $a_1a_2a_3$ is a 3-cycle, we get $\{a_2-a_1,a_3-a_2,a_1-a_3\}\subseteq S$.
If $a_2<a_3$, let $w=a_1-a_3+a_2$, so that $wa_1$ is an arc of $T$. We then have $w\in V(T_{0,q})$ and thus, since $w$ is fixed
by $\phi$, $wa_1$ and $wa_2$ are both arcs of $T$, a contradiction since $a_2-w=a_3-a_1$
and $a_3-a_1\notin S$.
If $a_2>a_3$, we get a similar contradiction by considering the vertex $w=a_1+a_3-a_2$:
$a_1w$ is an arc of $T$, again $w\in V(T_{0,q})$, which implies that $a_1w$ and $a_3w$ are both arcs of $T$,
a contradiction since $w-a_3=a_1-a_2\notin S$.

The case $\phi_2=\Id$ is similar.
\end{proof}

Note that Proposition~\ref{prop:T1orT2isrigid} implies that whenever
$T_{0,q}$ or $T_{q+1,2q}$ is rigid, both sets $V(T_{0,q})$ and $V(T_{q+1,2q})$ are $\Aut(T)$-regular.
It should also be noticed that the condition in Proposition~\ref{prop:T1orT2isrigid}
is sufficient for a cyclic tournament to satisfy the Albertson-Collins Conjecture,
but not necessary, as shown by the following example.

\begin{example}
{\rm
Consider the cyclic tournament $T=T(13;\{2,5,6\})$.
The automorphism group of $T$ only contains rotations (that is, mappings $\phi:i\mapsto i+b$, $0\le b\le 12$),
so that the canonical 2-labeling $\lambda^*$ is clearly distinguishing, and thus $T$
satisfies the Albertson-Collins Conjecture.
However, none of the sub-tournaments $T_{0,6}$ and $T_{7,12}$ is rigid.
The sub-tournament $T_{0,6}\cong P(6;\{2,5,6\})$ admits
an automorphism of order 3, namely $\phi'=(0,3,6)$,
and the
sub-tournament $T_{7,12}\cong P(5;\{2,5\})$ also admits an automorphism of order 3,
namely $\phi''=(7,8,9)(10,11,12)$. 
%$\phi''=(0,1,2)(3,4,5)$
}%rm
\label{ex:T13}
\end{example}

Moreover, we have the following.

\begin{proposition}
Let $T=T(2q+1;S)$ be a cyclic tournament and $\lambda^*$ be the canonical $2$-labeling of $T$.
If $\phi$ is a nontrivial $\lambda^*$-preserving automorphism of $T$, 
then $\phi$ has at least two orbits, both in $T_{0,q}$ and in $T_{q+1,2q}$.
%neither $T_{0,q}$ nor $T_{q+1,2q}$ has only a single orbit with respect to $\phi$.
\label{prop:no-single-orbit}
\end{proposition}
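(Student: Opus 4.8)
The plan is to reduce the statement to the claim that neither $T_{0,p}$ nor $T_{p+1,2p}$ is a single orbit of $\phi$, and then to exclude each single‑orbit possibility in turn. Since $\lambda^*\circ\phi=\lambda^*$, the automorphism $\phi$ stabilises each of $V(T_{0,p})$ and $V(T_{p+1,2p})$ setwise, so its restrictions $\phi_1$ and $\phi_2$ to these sets are automorphisms of $T_{0,p}$ and $T_{p+1,2p}$, and every orbit of $\phi$ lies entirely in one part. The argument used in the proof of Proposition~\ref{prop:T1orT2isrigid} in fact shows that if one of $\phi_1,\phi_2$ is the identity then so is the other; hence, as $\phi\neq Id$, both $\phi_1\neq Id$ and $\phi_2\neq Id$, so each part contains an orbit of size at least~$3$. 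Consequently $\phi$ fails to have two orbits in a given part only if that part is a single orbit of $\phi$, and it remains to exclude this.

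Suppose first that $V(T_{0,p})$ is a single orbit. By Observation~\ref{obs:orbit-regular} the tournament $T_{0,p}$ is then regular, which forces $|S^+|=|S^-|=p/2$ and, by a short computation on the within‑part arcs, the symmetry $S^-=(p+1)-S^+$. This makes the extreme vertices $0$ and $p$ \emph{twins with respect to the other part}, i.e.\ $N^+(0)\cap V(T_{p+1,2p})=N^+(p)\cap V(T_{p+1,2p})$ (and likewise for in‑neighbours). The relation ``$x,y$ have the same neighbours in $V(T_{p+1,2p})$'' is preserved by $\phi$, and since $\phi_1$ is transitive on $V(T_{0,p})$ its classes all have a common size $c$, with $c\geq 2$ because $0$ and $p$ lie in one class. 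Writing $b_w=|N^-(w)\cap V(T_{0,p})|$ for $w\in V(T_{p+1,2p})$, each class contributes $0$ or $c$ to $b_w$, so $c\mid b_w$ for every $w$. But $b_w=|S\cap\{w-p,\dots,w\}|$ gives $b_{p+1}=|S^+|+[\,p\in S^-\,]$ and $b_{2p}=|S^-|+[\,p\in S^+\,]$, and since exactly one of $p\in S^+$, $p\in S^-$ holds these differ by exactly $1$; thus $c\mid 1$, contradicting $c\geq 2$.

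For the remaining possibility, suppose $V(T_{p+1,2p})$ is a single orbit; here the direct twin argument is unavailable and the analysis is more delicate. Transitivity of $\phi_2$ first forces $b_w$ to be constant, and $b_{p+1}=b_{2p}$ already yields $\bigl|\,|S^+|-|S^-|\,\bigr|=1$. Next set $e_i=|N^-(i)\cap V(T_{p+1,2p})|$ for $i\in V(T_{0,p})$; a window computation gives $e_i=|S\cap\{i+1,\dots,i+p\}|$, so $e$ changes by at most $1$ as $i$ increases by $1$, while $e_0=|S^+|\neq|S^-|=e_p$ shows $e$ is non‑constant. By Observation~\ref{obs:agree-odd-orbits} (applied with the single orbit $V(T_{p+1,2p})$) the value $e_i$ is constant on each orbit of $\phi_1$, and by Observation~\ref{obs:agree-fixpoint} any fixed vertex $v$ of $\phi_1$ satisfies $e_v\in\{0,p\}$. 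Finally, since $\phi$ maps $N^+(i)\cap V(T_{p+1,2p})$ to $N^+(\phi_1(i))\cap V(T_{p+1,2p})$, the permutation $\phi_2$ is an automorphism of the set system $\{\,N^+(i)\cap V(T_{p+1,2p})\,\}_i$ that respects set sizes; hence for each $k$ the size‑graded in‑degree $w\mapsto|\{i:\ i\to w,\ e_i=k\}|$ is $\phi_2$‑invariant, and so constant by transitivity. I would then derive the contradiction from this forced uniformity together with the near‑shift identity $B_{w+1}=B_w+1$ for the sets $B_w=N^-(w)\cap V(T_{0,p})$ (valid up to the two boundary corrections coming from circulancy), which is incompatible with the non‑constant profile of $e$.

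The main obstacle is exactly this last step. The single‑orbit case for $V(T_{p+1,2p})$ cannot be closed by counting alone, since the degree and orbit‑size constraints admit perfectly balanced solutions: small examples show that the contradiction surfaces only once one tracks the actual sets $N^-(w)\cap V(T_{0,p})$ rather than merely their sizes. The tools I expect to be decisive are the coupling of orders $\mathrm{ord}(\phi_1)=\mathrm{ord}(\phi_2)$ (again extracted from the proof of Proposition~\ref{prop:T1orT2isrigid}) and the circulant near‑shift relation for the $B_w$; the delicate part is the bookkeeping at the window boundaries, where the shift loses the vertex of difference $p$ and gains the vertex $0$, and it is there that the non‑constancy of some size‑graded in‑degree has to be forced.
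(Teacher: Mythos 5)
Your reduction to the two single-orbit cases is sound, and your argument for the case where $V(T_{0,p})$ is a single orbit is correct and genuinely different from the paper's: regularity of the orbit (Observation~\ref{obs:orbit-regular}) forces $|S^+|=|S^-|=p/2$ and $j\in S^-\Leftrightarrow p+1-j\in S^+$, hence $0$ and $p$ have identical neighbourhoods in $V(T_{p+1,2p})$; the twin classes then form a block system for $\langle\phi_1\rangle$ of common size $c\ge 2$, $c$ divides every $b_w$, and yet $b_{p+1}$ and $b_{2p}$ differ by exactly $1$ --- a clean contradiction. (The paper proves this case differently: no vertex of $T_{p+1,2p}$ can be fixed, so $T_{p+1,2p}$ splits into an even number of odd orbits of size at least $3$, and a suitable power of $\phi$ then fixes one of these orbits pointwise while still having a single orbit on $T_{0,p}$.)

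The genuine gap is the second case, $V(T_{p+1,2p})$ a single orbit: it is not proved, and you say so yourself. After assembling the counting invariants (constancy of $b_w$, $\bigl|\,|S^+|-|S^-|\,\bigr|=1$, the window function $e_i$, the size-graded in-degrees), you state that you ``would then derive the contradiction'' from a near-shift identity, and you identify exactly this step as the main obstacle; no argument is supplied, and your own observation that counting alone admits perfectly balanced solutions means the missing step is the entire content of this case. For comparison, the paper closes it structurally rather than by counting: if some $a\in V(T_{0,p})$ were fixed by $\phi$, then by Observation~\ref{obs:agree-fixpoint} all of $V(T_{p+1,2p})$ (one orbit of size $p\ge 3$) would agree on $a$, which forces $S^+$ or $S^-$ to consist of consecutive integers; this makes all in-degrees in $T_{0,p}$ pairwise distinct, so $T_{0,p}$ is rigid and Proposition~\ref{prop:T1orT2isrigid} yields $\phi=Id$, a contradiction. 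Hence $\phi$ fixes no vertex of $T_{0,p}$, so $T_{0,p}$ (of even order $p+1$) splits into an even number of orbits of odd size at least $3$, and raising $\phi$ to the power of one orbit's size gives a $\lambda^*$-preserving automorphism that fixes vertices of $T_{0,p}$ while still acting with a single orbit on $T_{p+1,2p}$ --- contradicting the no-fixed-vertex claim just established. This ``fixed vertex $\Rightarrow$ interval $\Rightarrow$ distinct in-degrees $\Rightarrow$ rigidity'' step is precisely the tracking of actual neighbourhoods your sketch anticipates but does not carry out; until it (or a substitute) is written down, the proposition is only half proved.
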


\begin{proof}
Suppose first that $\phi$ has only a single orbit in $T_{0,q}$,
 which implies that $q$ is even.
%Suppose first that $T_{0,q}$ has only a single orbit with respect to $\phi$,
% which implies that $q$ is even.
No vertex $a$ of $T_{q+1,2q}$ can be fixed by $\phi$, since otherwise,
by Observation~\ref{obs:agree-fixpoint}, all vertices in $T_{0,q}$
would agree on $a$, which would imply
either $d^-_T(a)\ge q+1$ or $d^+_T(a)\ge q+1$, in contradiction with the definition of $T$.
Therefore, $\phi$ has an even number of orbits 
in $T_{q+1,2q}$, of respective odd sizes $q_1,\dots,q_{2\ell}$,
$\ell\ge 1$,
with $q_i\ge 3$ for every $i$, $1\le i\le 2\ell$, and $\sum_{i=1}^{2\ell} q_i=q$.
Since $q_1<q$, the automorphism $\phi_1=\phi^{q_1}$ is $\lambda^*$-preserving and has a single
orbit in $T_{0,q}$. But $\phi_1$ fixes $q_1$ vertices of $T_{q+1,2q}$, contradicting
the fact that any $\lambda^*$-preserving automorphism for which $T_{0,q}$
has a single orbit fixes no vertices of $T_{q+1,2q}$.

Suppose now that $\phi$ has only a single orbit in $T_{q+1,2q}$, which implies that $q$ is odd.
As before, we claim that no vertex of $T_{0,q}$ is fixed by $\phi$. 
Assume to the contrary that $a$ is such a vertex. 
As above, by Observation~\ref{obs:agree-fixpoint}, we get
$d^-_{T_{0,q}}(a)\ge q$ or $d^+_{T_{0,q}}(a)\ge q$, 
and thus $d^-_{T_{0,q}}(a)=d^+_{T_{0,q}}(a)=q$ according to the definition of $T$.
Since $V(T_{q+1,2q})$ is either the set of in-neighbours or the set of out-neighbours
of $a$, we get that the elements of $S^-$ or of $S^+$, respectively, are consecutive integers, and thus
% and $T$ is a cyclic tournament, we get that 
 every two vertices in $T_{0,q}$
have distinct indegrees in $T_{0,q}$.
Hence, the sub-tournament $T_{0,q}$ is rigid,
which implies $\phi=\Id$ by Proposition~\ref{prop:T1orT2isrigid}, a contradiction.
Therefore, $\phi$ has an even number of orbits in $T_{0,q}$, each of size at least~3, 
and a contradiction arises as in the previous case.
\end{proof}

Let $P=P(q;N)$ be a pseudo-cyclic tournament.
The \emph{indegree sequence} of~$P$ is the sequence defined by
$$\IS(P)=(d_P^-(0),\dots,d_P^-(q)).$$
The value of $d_P^-(i)$ for any vertex $i$ of~$P$ is given by the following.

\begin{proposition}
The indegree $d_P^-(i)$ of any vertex $i$ of the pseudo-cyclic tournament $P=P(q;N)$, $0\le i\le q$,
is given by 
$$d_P^-(i)=\left\{\begin{array}{ll}
              i+|N\cap\{i+1,\dots,q-i\}| & \hspace{3mm} \mbox{if}\  0\leq i\leq\left\lfloor{\p2}\right\rfloor, \\
              q-d_P^-(q-i) & \hspace{3mm}  \mbox{otherwise}.\\
            \end{array}
\right.$$
\label{prop:indegree}
\end{proposition}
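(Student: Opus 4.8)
The plan is to derive a single closed formula for $d_P^-(i)$ valid for \emph{every} vertex $i$, and then to read off the two cases of the statement from it. First I would split the in-neighbours of $i$ according to whether they are smaller or larger than $i$, since the orientation convention of $P=P(p;N)$ depends on exactly this comparison: by definition, for $i>j$ the pair $\{i,j\}$ carries the arc $ij$ (i.e.\ $i\to j$) precisely when $i-j\in N$.

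Carrying this out, for a vertex $j$ with $j<i$ the vertex $j$ is an in-neighbour of $i$ iff the arc goes from the smaller to the larger endpoint, that is, iff $i-j\notin N$; as $j$ ranges over $\{0,\dots,i-1\}$ the difference $i-j$ ranges over $\{1,\dots,i\}$, so these contribute $i-|N\cap\{1,\dots,i\}|$ in-neighbours. Symmetrically, a vertex $k$ with $k>i$ is an in-neighbour of $i$ iff $k-i\in N$, and as $k$ ranges over $\{i+1,\dots,p\}$ the difference $k-i$ ranges over $\{1,\dots,p-i\}$, contributing $|N\cap\{1,\dots,p-i\}|$ further in-neighbours. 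Adding the two counts gives the master formula
$$d_P^-(i)=i-|N\cap\{1,\dots,i\}|+|N\cap\{1,\dots,p-i\}|,$$
valid for all $0\le i\le p$.

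From the master formula both cases of the statement should follow with no real difficulty. When $0\le i\le\lfloor\p2\rfloor$ one has $i\le p-i$, hence $\{1,\dots,i\}\subseteq\{1,\dots,p-i\}$, and the difference of the two cardinalities collapses to $|N\cap\{i+1,\dots,p-i\}|$, which is exactly the first line of the claimed expression. For the remaining range I would simply substitute $i\mapsto p-i$ in the master formula and add the two instances: the two $N$-counts cancel in pairs and the linear terms sum to $p$, giving $d_P^-(i)+d_P^-(p-i)=p$, i.e.\ $d_P^-(i)=p-d_P^-(p-i)$, which is the second line. (Note that $i>\lfloor\p2\rfloor$ forces $p-i\le\lfloor\p2\rfloor$, so the right-hand side refers only to values already computed by the first case, and the definition is well founded. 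Conceptually, this symmetry just reflects that $i\mapsto p-i$ is an isomorphism from $P$ onto its converse, so that $d_P^-(p-i)=d_P^+(i)=p-d_P^-(i)$.) The only point requiring any care is keeping the orientation convention straight while counting the two families of neighbours; once the master formula is established, both cases are immediate, so I do not expect a genuine obstacle here.
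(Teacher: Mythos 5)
Your proof is correct and takes essentially the same route as the paper: both derive the master formula $d_P^-(i)=\bigl(i-|N\cap\{1,\dots,i\}|\bigr)+|N\cap\{1,\dots,p-i\}|$ by splitting the in-neighbours of $i$ into those smaller and those larger than $i$, and then read off the first case from the inclusion $\{1,\dots,i\}\subseteq\{1,\dots,p-i\}$. The only immaterial difference is in the second case, where you add the master formula at $i$ and at $p-i$ to obtain $d_P^-(i)+d_P^-(p-i)=p$, whereas the paper simplifies directly to $i-|N\cap\{p-i+1,\dots,i\}|$ and identifies this with $p-d_P^-(p-i)$.
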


\begin{proof}
By definition of~$P$, there is an arc from $j$ to $i$, $j>i$, if and only if $j-i\in N$.
%This happens either if $j<i$ and $i-j\not\in N$, or $j>i$ and $j-i\in N$. 
Hence,
$$d_P^-(i)=(i-|N\cap\{1,\dots,i\}|)+|N\cap\{1,\dots,q-i\}|.$$
Therefore,
we get
$$d_P^-(i)=i+|N\cap\{i+1,\dots,q-i\}|$$
if $0\leq i\leq\left\lfloor{\p2}\right\rfloor$ (with, in particular,  $d_P^-({\p2})={\p2}$ if $q$ is even), and
$$d_P^-(i)=i-|N\cap\{q-i+1,\dots,i\}|=q-d_P^-(q-i)$$
if $\left\lceil{\p2}\right\rceil\le i\le q$.
\end{proof}

From Proposition~\ref{prop:indegree}, we get that for every 
pseudo-cyclic tournament $P=P(q;N)$, $d_P^-(0)=|N|$ and $d_P^-(q)=q-|N|$,
and that the indegree sequence $\IS(P)$ admits a central symmetry.
Moreover, the difference of the indegrees of any two consecutive vertices
is either $-1$, $0$ or $1$.

\begin{proposition}
For any two consecutive vertices $i$ and $i+1$ of the pseudo-cyclic tournament $P=P(q;N)$, 
$0\le i\le q-1$, $d_P^-(i+1)-d_P^-(i)\in\{-1,0,1\}$. More precisely, we have
\begin{enumerate}
\item $d_P^-(i+1)-d_P^-(i)=1 - |N\cap\{i+1,q-i\}|$,
\item if $q$ is odd, 
$d_P^-\left(\left\lceil{\p2}\right\rceil\right)-d_P^-\left(\left\lfloor{\p2}\right\rfloor\right)=
1-2\left|N\cap\left\{\left\lfloor{\p2}\right\rfloor+1\right\}\right|$.
\end{enumerate}
\label{prop:indegree-difference}
\end{proposition}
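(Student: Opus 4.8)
The plan is to work directly from the explicit counting expression for $d_P^-(i)$ that is obtained in the course of proving Proposition~\ref{prop:indegree}, namely
$$d_P^-(i) = \bigl(i - |N\cap\{1,\dots,i\}|\bigr) + |N\cap\{1,\dots,p-i\}|,$$
which is valid for every $i$ with $0\le i\le p$. First I would write this same identity with $i$ replaced by $i+1$ and subtract the two expressions, so that the difference $d_P^-(i+1)-d_P^-(i)$ telescopes into three pieces: the constant part $(i+1)-i=1$, the increment of the term $-|N\cap\{1,\dots,i\}|$, and the increment of the term $|N\cap\{1,\dots,p-i\}|$.

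The key observation is that each of these two increments is the cardinality of a single-element set. Passing from $\{1,\dots,i\}$ to $\{1,\dots,i+1\}$ adjoins exactly the element $i+1$, so the first increment contributes $-|N\cap\{i+1\}|$; symmetrically, passing from $\{1,\dots,p-i\}$ to $\{1,\dots,p-i-1\}$ removes exactly the element $p-i$, so the second increment contributes $-|N\cap\{p-i\}|$. Hence
$$d_P^-(i+1)-d_P^-(i) = 1 - |N\cap\{i+1\}| - |N\cap\{p-i\}|.$$
The membership claim $d_P^-(i+1)-d_P^-(i)\in\{-1,0,1\}$ is then immediate, since the two subtracted terms each lie in $\{0,1\}$.

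It remains to repackage this single identity into the two stated formulas, and the only point requiring care is whether the two affected elements $i+1$ and $p-i$ are distinct. When $i+1\ne p-i$ the set $\{i+1\}\cup\{p-i\}$ has two elements and the two cardinalities merge into the single term $|N\cap\{i+1,p-i\}|$, yielding item~1. The two elements coincide precisely when $p=2i+1$, that is, when $p$ is odd and $i=\lfloor p/2\rfloor$, in which case $i+1=p-i=\lfloor p/2\rfloor+1=\lceil p/2\rceil$; here the two cardinalities are literally the same quantity, so their sum is $2\,|N\cap\{\lfloor p/2\rfloor+1\}|$, yielding item~2. I expect this boundary overlap at the centre of the sequence to be the only real subtlety of the argument: there is no deep obstacle, merely the need to track the floor/ceiling bookkeeping carefully, consistently with the central symmetry of $\IS(P)$ already noted after Proposition~\ref{prop:indegree}.
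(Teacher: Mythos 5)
Your proof is correct, but it takes a slightly different route from the paper's. The paper works from the \emph{statement} of Proposition~\ref{prop:indegree} (the piecewise formula), and therefore needs three cases: $0\le i<\left\lfloor p/2\right\rfloor$ by direct subtraction, $\left\lceil p/2\right\rceil\le i<p$ by reducing to the first case via the central symmetry $d_P^-(i)=p-d_P^-(p-i)$, and the middle position for odd $p$ separately. You instead go back to the uncased counting identity
$d_P^-(i)=\bigl(i-|N\cap\{1,\dots,i\}|\bigr)+|N\cap\{1,\dots,p-i\}|$,
which appears inside the paper's proof of Proposition~\ref{prop:indegree} and is valid for every $i$, and telescope it once to get the uniform identity
$d_P^-(i+1)-d_P^-(i)=1-|N\cap\{i+1\}|-|N\cap\{p-i\}|$
for all $0\le i\le p-1$. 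This buys you two things: the membership in $\{-1,0,1\}$ follows at once for every $i$ (including the central one), and the split between items~1 and~2 is exposed as a purely set-theoretic dichotomy --- the two affected elements $i+1$ and $p-i$ either are distinct (so the cardinalities merge into $|N\cap\{i+1,p-i\}|$) or coincide (exactly when $p=2i+1$, producing the factor~$2$ in item~2). This also makes explicit the caveat, implicit in the paper, that the formula of item~1 is to be read only for $i+1\neq p-i$. The paper's version, in exchange, treats Proposition~\ref{prop:indegree} as a black box rather than reusing an identity from its proof, and re-exercises the central symmetry of $\IS(P)$; your version is the more uniform and self-contained of the two.
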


\begin{proof}
By Proposition~\ref{prop:indegree},
if $0\le i<\left\lfloor{\p2}\right\rfloor$ we have  
$$
\begin{array}{lll}
d_P^-(i+1)-d_P^-(i) & = & i+1+ |N\cap\{i+2,\dots,q-i-1\}|\\
                    &   & -\ i - |N\cap\{i+1,\dots,q-i\}|\\
                    & = & 1 - |N\cap\{i+1,q-i\}|.
\end{array}$$
Since $0\le |N\cap\{i+1,q-i\}|\le 2$, we get $d_P^-(i+1)-d_P^-(i)\in\{-1,0,1\}$.

If $\left\lceil{\p2}\right\rceil\le i< p$, then, by Proposition~\ref{prop:indegree}, we get
$$
\begin{array}{lll}
d_P^-(i+1)-d_P^-(i)
  & = & q-d^-(q-i-1)-q+d^-(q-i)\\
  & = & d^-(q-i)-d^-(q-i-1)\\
  & = & 1 - |N\cap\{q-i-1+1,q-q+i+1\}|\\
  & = & 1 - |N\cap\{i+1,q-i\}|.
\end{array}$$

Finally, if $q$ is odd, we have 
$$
\begin{array}{lll}
d_P^-\left(\left\lceil{\p2}\right\rceil\right)-d_P^-\left(\left\lfloor{\p2}\right\rfloor\right)
  & = & q-(\left\lfloor{\p2}\right\rfloor + \left|N\cap\left\{\left\lfloor{\p2}\right\rfloor+1\right\}\right|)
  -(\left\lfloor{\p2}\right\rfloor + \left|N\cap\left\{\left\lfloor{\p2}\right\rfloor+1\right\}\right|)\\
                    & = & 1-2\left|N\cap\left\{\left\lfloor{\p2}\right\rfloor+1\right\}\right|.
\end{array}$$
This completes the proof.
\end{proof}

Note that if $q$ is odd, the value of 
$d_P^-\left(\left\lceil{\p2}\right\rceil\right)-d_P^-\left(\left\lfloor{\p2}\right\rfloor\right)$
is either $1$ or $-1$.

We can then define three types of vertices as follows.

\begin{definition}
{\rm
Let $P=P(q;N)$ be a pseudo-cyclic tournament. A vertex $i\in V(P)$, $0\le i\le q-1$,
is said to be
\begin{enumerate}
\item an \emph{ascent-vertex} if $d^-_P(i+1)=d^-_P(i)+1$, that is, $N\cap\{i+1,q-i\}=\emptyset$,
\item a \emph{descent-vertex} if $d^-_P(i+1)=d^-_P(i)-1$, that is, $\{i+1,q-i\}\subseteq N$, or
\item a \emph{plateau-vertex} if $d^-_P(i+1)=d^-_P(i)$, that is, $|N\cap\{i+1,q-i\}|=1$.
\end{enumerate}
}\end{definition}

A pseudo-cyclic tournament $P=P(q;N)$ is cyclic if and only if 
all its vertices have the same indegree, which implies $q$ is even and $|N|={\p2}$,
and the set $N$ is such that $i\in N$ if and only $q+1-i\notin N$.
In other words, we have the following.

\begin{observation}
A pseudo-cyclic tournament $P=P(q;N)$ is cyclic if and only if every vertex $i\in V(P)$,
$0\le i\le q-1$, is a plateau-vertex.
\label{obs:pseudo-is-cyclic}
\end{observation}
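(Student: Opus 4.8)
The plan is to recognize that the condition ``every vertex is a plateau-vertex'' is, by the very definition of plateau-vertices, equivalent to $d_P^-(i+1)=d_P^-(i)$ holding for all $i$, $0\le i\le p-1$; that is, to $P$ being \emph{regular} in the sense that all its vertices share a common indegree. The observation thus reduces to showing that a pseudo-cyclic tournament is cyclic if and only if it is regular. The forward implication is immediate: every cyclic tournament is regular, as recorded in Section~\ref{sec:basic} (where $d_T^-(u)=d_T^+(u)=p$ for each vertex), so if $P$ is cyclic then all its indegrees coincide and hence every vertex is a plateau-vertex.

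For the converse, I would suppose that every vertex is a plateau-vertex, so that all indegrees equal some common value. Since Proposition~\ref{prop:indegree} gives $d_P^-(0)=|N|$ and $d_P^-(p)=p-|N|$, this equality forces $|N|=p-|N|$, whence $p$ is even and $|N|=\p2$. Feeding $d_P^-(i+1)-d_P^-(i)=0$ into Proposition~\ref{prop:indegree-difference}(1), which for even $p$ applies to every $i\in\{0,\dots,p-1\}$, yields $|N\cap\{i+1,p-i\}|=1$ for all such $i$. As $i$ runs through $\{0,\dots,p-1\}$ the sets $\{i+1,p-i\}$ run through all pairs $\{k,p+1-k\}$ with $k\in\{1,\dots,p\}$ (taking $k=i+1$), so we obtain the symmetry condition $k\in N\iff p+1-k\notin N$. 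I would also note here that the hypothesis automatically excludes odd $p$, since by Proposition~\ref{prop:indegree-difference}(2) the central vertex then satisfies $d_P^-(\lceil\p2\rceil)-d_P^-(\lfloor\p2\rfloor)=\pm1$ and can never be a plateau-vertex.

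It then remains to upgrade this symmetry condition on $N$ into a genuine circulant structure, and this verification is the main technical point. I would set $S=\{1,\dots,p\}\setminus N$ and check that $P=P(p;N)$ coincides with the cyclic tournament $T(p+1;S)$. The symmetry condition makes $S$ a legitimate connection set at once, since it gives $|S\cap\{k,p+1-k\}|=1$ for every $k$; one then matches arcs pair by pair. For integers $i>j$ the arc $ij$ of $P$ exists iff $i-j\in N$, whereas in $T(p+1;S)$ the arc between $i$ and $j$ is governed by the residue $(j-i)\bmod(p+1)=p+1-(i-j)$, so agreement is exactly the requirement $i-j\in N\iff p+1-(i-j)\notin N$, which is precisely the symmetry condition. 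The one delicate point is this wrap-around bookkeeping: one must confirm that the condition forces consistency both on the ``direct'' differences and on the complementary differences coming from the cyclic identification, which is where keeping the parity of $p$ straight matters. Once the arc sets are seen to coincide, $P=T(p+1;S)$ is cyclic, completing the converse.
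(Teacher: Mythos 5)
Your proof is correct and follows essentially the same route as the paper's inline justification: plateau-vertices everywhere $\iff$ all indegrees equal $\iff$ $p$ even, $|N|=p/2$ and $k\in N\iff p+1-k\notin N$, which is exactly the condition for $P$ to be circulant. The only difference is that you explicitly carry out the final verification that this symmetry condition on $N$ yields the circulant structure (identifying $P$ with $T(p+1;\{1,\dots,p\}\setminus N)$ and checking the wrap-around differences), a step the paper states without proof.
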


Moreover, we also have the following.

\begin{observation}
Let $P^c=P(q;N^c)$ be the converse pseudo-cyclic tournament of $P=P(q;N)$,
that is, $N^c=\{1,\dots,q\}\setminus N$.
If $i$ is an ascent-vertex (resp. a descent-vertex, a plateau-vertex) in $P$, $0\le i\le q-1$,
then $i$ is a descent-vertex (resp. an ascent-vertex, a plateau-vertex) in $P^c$.
\label{obs:converse-pseudo}
\end{observation}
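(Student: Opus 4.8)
The plan is to exploit the relationship between indegrees in a pseudo-cyclic tournament and in its converse, and then read off the three vertex types directly from the sign of the consecutive indegree difference, as prescribed by the primary clauses of the definition.

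First I would record the elementary fact that, since $P=P(p;N)$ is a tournament of order $p+1$, every vertex has total degree $p$, so that $d_P^+(i)=p-d_P^-(i)$ for every $i\in V(P)$. Passing to the converse $P^c$ reverses all arcs, turning out-neighbours into in-neighbours; hence $d_{P^c}^-(i)=d_P^+(i)=p-d_P^-(i)$ for every $i\in V(P^c)$.

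The key step is then purely arithmetic. For $0\le i\le p-1$,
$$d_{P^c}^-(i+1)-d_{P^c}^-(i)=\bigl(p-d_P^-(i+1)\bigr)-\bigl(p-d_P^-(i)\bigr)=-\bigl(d_P^-(i+1)-d_P^-(i)\bigr).$$
Since a vertex $i$ is an ascent-, descent-, or plateau-vertex exactly when $d_P^-(i+1)-d_P^-(i)$ equals $+1$, $-1$, or $0$, respectively (the only possible values, by Proposition~\ref{prop:indegree-difference}), negating this difference interchanges $+1$ and $-1$ while fixing $0$. This immediately yields that ascent-vertices of $P$ are descent-vertices of $P^c$, descent-vertices of $P$ are ascent-vertices of $P^c$, and plateau-vertices are preserved, which is exactly the claim.

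I do not expect a genuine obstacle here, since the statement reduces to a single sign computation. The only point requiring mild care is that the three vertex types should be read off from the indegree difference $d_P^-(i+1)-d_P^-(i)$ itself, rather than from the set-theoretic reformulations in the ``that is'' clauses of the definition. Working with the difference keeps the argument uniform: one could alternatively argue from the complement relation $x\in N^c\iff x\notin N$ applied to the pair $\{i+1,p-i\}\subseteq\{1,\dots,p\}$, but that route forces one to treat separately the central vertex $i=\lfloor{\p2}\rfloor$ when $p$ is odd (where $i+1=p-i$ and the set-theoretic characterization of a plateau-vertex degenerates), whereas the sign argument handles all vertices at once.
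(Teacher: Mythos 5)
Your proof is correct and follows essentially the same route as the paper, whose entire argument is the observation that $d^-_{P^c}(i)=p-d^-_P(i)$ for every vertex $i$; you simply spell out the resulting sign flip of the consecutive indegree differences. Your closing remark about avoiding the set-theoretic clauses (which degenerate at the central vertex $i=\left\lfloor p/2\right\rfloor$ when $p$ is odd, since then $i+1=p-i$) is a sound extra precaution, consistent with the paper's choice to argue via indegrees.
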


This observation directly follows from the fact that $d^-_{P^c}(i)=q-d^-_P(i)$ for every vertex $i$.

We denote by $\alpha(P)$, $\delta(P)$ and $\pi(P)$ the number of 
ascent-vertices, descent-vertices and plateau-vertices
in $\IS(P)$, respectively, so that $\alpha(P)+\delta(P)+\pi(P)=q$.
Moreover, according to the three above types of vertices, we define three types of subsequences of
the indegree sequence $\IS(P)$.

\begin{definition}
{\rm
Let $P=P(q;N)$ be a pseudo-cyclic tournament.
A sequence of $k\ge 2$ consecutive vertices $(i,\dots,i+k-1)$, $i\le n-k+1$, of~$P$ is called
\begin{enumerate}
\item an \emph{ascent of size $k$} if $d_P^-(i+j+1)=d_P^-(i+j)+1$ for every $j$, $0\le j\le k-2$,
\item a \emph{descent of size $k$} if $d_P^-(i+j+1)=d_P^-(i+j)-1$ for every $j$, $0\le j\le k-2$,
\item a \emph{plateau of size $k$} if $d_P^-(i+j+1)=d_P^-(i+j)$ for every $j$, $0\le j\le k-2$.
\end{enumerate}
}\end{definition}

Note here that an ascent, a descent or a plateau of size $k$ contains $k-1$ ascent-vertices, descent-vertices
or plateau-vertices, respectively.

\begin{figure}
\begin{center}
\begin{tikzpicture}[domain=0:11,x=1.3cm,y=1.3cm]
% sommets
\node[draw,circle] (v0)at(0,0){0};
\node[draw,circle] (v1)at(1,0){1};
\node[draw,circle] (v2)at(2,0){2};
\node[draw,circle] (v3)at(3,0){3};
\node[draw,circle] (v4)at(4,0){4};
\node[draw,circle] (v5)at(5,0){5};
\node[draw,circle] (v6)at(6,0){6};
\node[draw,circle] (v7)at(7,0){7};
\node[draw,circle] (v8)at(8,0){8};
% arcs -2
\draw[->,thick] (v2) to[bend right] (v0);
\draw[->,thick] (v3) to[bend right] (v1);
\draw[->,thick] (v4) to[bend right] (v2);
\draw[->,thick] (v5) to[bend right] (v3);
\draw[->,thick] (v6) to[bend right] (v4);
\draw[->,thick] (v7) to[bend right] (v5);
\draw[->,thick] (v8) to[bend right] (v6);
% arcs -4
\draw[->,thick] (v4) to[bend right] (0.15,0.25); %(v0);
\draw[->,thick] (v5) to[bend right] (1.15,0.25); %(v1);
\draw[->,thick] (v6) to[bend right] (2.15,0.25); %(v2);
\draw[->,thick] (v7) to[bend right] (3.15,0.25); %(v3);
\draw[->,thick] (v8) to[bend right] (4.15,0.25); %(v4);
% arcs -5
\draw[->,thick] (v5) to[bend left] (v0);
\draw[->,thick] (v6) to[bend left] (v1);
\draw[->,thick] (v7) to[bend left] (v2);
\draw[->,thick] (v8) to[bend left] (v3);
% legende
\node[below]at(4,-1){(a) The pseudo-cyclic tournament $P(8;\{2,4,5\})$};
 
\end{tikzpicture}
\vskip 1cm
\begin{tikzpicture}[domain=0:11,x=1cm,y=1cm]
% axes
\draw[very thick,->] (-0.5,0) -- ++(5,0);
\draw[very thick,->] (0,-0.5) -- ++(0,3.5);
% légende des axes
\node[right]at(4.5,0){$i$};
\node[above]at(0,3){$d^-_P(i)$};
% sommets de la courbe
\SOMMET{0,1.5}; \SOMMET{0.5,2}; \SOMMET{1,2}; \SOMMET{1.5,2.5}; \SOMMET{2,2}; \SOMMET{2.5,1.5}; \SOMMET{3,2}; \SOMMET{3.5,2}; \SOMMET{4,2.5}; 
% courbe
\draw[very thick] (0,1.5)--(0.5,2);
\draw[very thick](0.5,2)--(1,2);
\draw[very thick](1,2)--(1.5,2.5);
\draw[very thick](1.5,2.5)--(2,2);
\draw[very thick](2,2)--(2.5,1.5);
\draw[very thick](2.5,1.5)--(3,2);
\draw[very thick](3,2)--(3.5,2);
\draw[very thick](3.5,2)--(4,2.5);
%\node[draw,circle,fill=black] (d0)at(3,0){};
% mediane et lignes verticales
\draw[thick,dashed] (2,0)--(2,2);
\draw[thick,dashed] (1,0)--(1,2);
\draw[thick,dashed] (3,0)--(3,2);
\draw[thick,dashed] (4,0)--(4,2.5);
% 2 lignes horizontales
\draw[thick,dotted](0,1.5)--(4,1.5);
\draw[thick,dotted](0,2.5)--(4,2.5);
% legende axe x
\node[below] at(1,0){2};
\node[below] at(2,0){4};
\node[below] at(3,0){6};
\node[below] at(4,0){8};
% legende axe y
\node[left]at(0,1.5){3};
%\node[left]at(0,2){4};
\node[left]at(0,2.5){5};
% legende
\node[below]at(2,-1){(b) The indegree path of $P(8;\{2,4,5\})$};
\end{tikzpicture}
\caption{The pseudo-cyclic tournament $P(8;\{2,4,5\})$ and its indegree path}
\label{fig:ex1}
\end{center}
\end{figure}

\begin{example}
{\rm The pseudo-cyclic tournament $P=P(8;\{2,4,5\})$ is depicted in Figure~\ref{fig:ex1}(a)
(only arcs corresponding to negative connectors are drawn, every missing arc is thus directed from
left to right).
The indegree sequence of~$P$ is given by
$$\IS(P)=(3,4,4,5,4,3,4,4,5)$$
and is represented by the (centrally symmetric) \emph{indegree path} of~$P$ depicted in Figure~\ref{fig:ex1}(b).
We then have $\pi(P)=2$, $\delta(P)=2$ and $\alpha(P)=4$.
}\end{example}

The number of descent-vertices and the number of plateau-vertices are related
to the cardinality of the set $N$ as follows.

\begin{proposition}
For every pseudo-cyclic tournament  $P=P(q;N)$,
$$\delta(P)+\frac{1}{2}\pi(P)=|N|.$$
\label{prop:cardinality-of-N}
\end{proposition}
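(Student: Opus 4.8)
The plan is to avoid any case analysis on the parity of $p$ or on the behaviour of the central vertex, and instead to exploit two global invariants of the indegree sequence: the net change of the indegree from the first to the last vertex, and the total number of its steps. First I would recall the two endpoint values, which were already observed right after the proof of Proposition~\ref{prop:indegree}, namely $d_P^-(0)=|N|$ and $d_P^-(p)=p-|N|$. These carry exactly the information about $|N|$ that the statement involves.

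Next I would write the total variation of the indegree along the whole sequence as a telescoping sum,
\[
d_P^-(p)-d_P^-(0)=\sum_{i=0}^{p-1}\bigl(d_P^-(i+1)-d_P^-(i)\bigr).
\]
By Proposition~\ref{prop:indegree-difference} each summand is $+1$, $-1$ or $0$ according as the vertex $i$ is an ascent-, descent- or plateau-vertex; and by the remark following that proposition, even the central vertex (when $p$ is odd) has difference $\pm1$, hence is still an ascent- or descent-vertex and never a plateau-vertex. Consequently the sum counts each ascent-vertex with weight $+1$ and each descent-vertex with weight $-1$, so it equals $\alpha(P)-\delta(P)$. Together with the endpoint values this gives $\alpha(P)-\delta(P)=\bigl(p-|N|\bigr)-|N|=p-2|N|$.

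Finally, since every one of the $p$ vertices $i$ with $0\le i\le p-1$ is of exactly one of the three types, we have $\alpha(P)+\delta(P)+\pi(P)=p$. Subtracting the height-change identity from this vertex count eliminates $\alpha(P)$ and yields $2\delta(P)+\pi(P)=2|N|$, which is precisely $\delta(P)+\frac{1}{2}\pi(P)=|N|$.

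There is no genuinely hard step here; the only point that needs care is to be sure that the three vertex types are exhaustive and mutually exclusive on $\{0,\dots,p-1\}$, which is exactly what Proposition~\ref{prop:indegree-difference} and its concluding remark guarantee (the consecutive difference always lies in $\{-1,0,1\}$, and is $\pm1$ at the central vertex). As a sanity check, the identity can be verified on the worked example $P(8;\{2,4,5\})$, where $|N|=3$ and indeed $\delta(P)+\frac{1}{2}\pi(P)=2+1=3$.
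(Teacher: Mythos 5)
Your proof is correct, but it follows a genuinely different route from the paper's. The paper argues locally on the set $N$: for each negative connector $a\in N$ it examines whether the ``partner'' $p+1-a$ also lies in $N$, and shows that $a$ then generates either one descent-vertex (namely $a-1$, when $p+1-a\in N$) or two plateau-vertices (namely $a-1$ and $p-a$, when $p+1-a\notin N$); summing these contributions over $N$ gives the identity directly. You instead use a global invariant: the telescoping sum of consecutive indegree differences equals $d_P^-(p)-d_P^-(0)=p-2|N|$ on one hand and $\alpha(P)-\delta(P)$ on the other, which combined with the exhaustive count $\alpha(P)+\delta(P)+\pi(P)=p$ yields $2\delta(P)+\pi(P)=2|N|$. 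Your argument is shorter and arguably more robust, since it needs no bookkeeping about which connector generates which vertex and no check against double counting; it relies only on facts the paper has already recorded (the endpoint values $d_P^-(0)=|N|$, $d_P^-(p)=p-|N|$, Proposition~\ref{prop:indegree-difference}, and the remark that the central difference is $\pm 1$ when $p$ is odd, so the three vertex types do partition $\{0,\dots,p-1\}$ --- a point you rightly flag as the only delicate one). The paper's pairing argument, on the other hand, buys slightly finer structural information that your counting does not: it identifies \emph{which} vertices are descents and plateaus, namely $\delta(P)=|\{a\in N\ :\ p+1-a\in N\}|$ and $\pi(P)=2\,|\{a\in N\ :\ p+1-a\notin N\}|$, of which the stated identity is an immediate consequence.
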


\begin{proof}
If $N=\emptyset$, then $\IS(P)=(0,1,\dots,q)$ ($P$ is transitive), 
so that $\delta(P)=\pi(P)=0$ and we are done.

Otherwise, let $N=\{a_1,\dots,a_r\}$, so that $|N|=r\ge 1$.
We claim that each $a_i$ generates either a descent-vertex or two
plateau-vertices.
Indeed, for each $a_i\in N$, either
$q+1-a_i\not\in N$, which implies that both vertices $a_i-1$ and $q-a_i$ are plateau-vertices,
 or $q+1-a_i\in N$, which implies that
$a_i-1$ is a descent-vertex.
We thus get $|N|=\delta(P)+\frac{1}{2}\pi(P)$, as required.
\end{proof}

The following proposition shows that vertices with same indegree in
 a pseudo-cyclic tournament are necessarily ``not too far'' from each other.

\begin{proposition}
Let $P=P(q;N)$ be a pseudo-cyclic tournament.
If $i$ and $j$ are two vertices of~$P$ with $i<j$ and $d_P^-(i)=d_P^-(j)$, then
\begin{enumerate}
\item if $0\le i<j\le\left\lfloor{\p2}\right\rfloor$ or $\left\lceil{\p2}\right\rceil\le i<j\le q$, then
$j\le i+|N|$,
\item if $0\le i\le\left\lfloor{\p2}\right\rfloor<j\le q$, then
$j\le i+2|N|$.
\end{enumerate}
\label{prop:closeto}
\end{proposition}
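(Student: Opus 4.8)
The plan is to reduce both parts to the single telescoping identity coming from Proposition~\ref{prop:indegree-difference}. Since $d_P^-(i) = d_P^-(j)$, summing the consecutive differences over the range gives $0 = d_P^-(j) - d_P^-(i) = \sum_{k=i}^{j-1}(d_P^-(k+1) - d_P^-(k))$, where each summand lies in $\{-1, 0, 1\}$. The whole argument is an analysis of this sum, carried out differently in the two cases, the point being that Part~1 needs the sharp constant $|N|$ while Part~2 only needs $2|N|$.

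For Part~1 I would handle the first case $0 \le i < j \le \lfloor p/2 \rfloor$ directly and deduce the second case from the central symmetry $d_P^-(x) = p - d_P^-(p-x)$ of Proposition~\ref{prop:indegree}, by setting $i' = p - j$ and $j' = p - i$, which lands in the first case while preserving $j - i$. In the first case every $k$ in $[i, j-1]$ satisfies $0 \le k < \lfloor p/2 \rfloor$, so Proposition~\ref{prop:indegree-difference}(1) gives $d_P^-(k+1) - d_P^-(k) = 1 - |N \cap \{k+1, p-k\}|$ throughout, and the identity becomes $j - i = \sum_{k=i}^{j-1} |N \cap \{k+1, p-k\}|$. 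The key observation is that as $k$ runs over $[i, j-1]$ the ``small'' indices $k+1$ fill $\{i+1, \dots, j\} \subseteq \{1, \dots, \lfloor p/2 \rfloor\}$ while the ``large'' indices $p-k$ fill $\{p-j+1, \dots, p-i\} \subseteq \{\lceil p/2 \rceil + 1, \dots, p\}$; these two ranges are disjoint, so no element of $N$ is counted twice and $\sum_{k=i}^{j-1} |N \cap \{k+1, p-k\}| = |N \cap A| \le |N|$, where $A$ is their disjoint union. This yields $j - i \le |N|$.

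For Part~2 the range $[i, j-1]$ straddles the middle, the small- and large-index ranges now overlap, and the disjointness argument breaks down; moreover for odd $p$ the central transition of Proposition~\ref{prop:indegree-difference}(2) behaves differently, so I would avoid the explicit difference formula altogether and argue via vertex types. Let $a$, $d$ and $\pi$ be the numbers of ascent-, descent- and plateau-vertices among $\{i, \dots, j-1\}$. Then $a + d + \pi = j - i$, and the telescoping sum equals $a - d = 0$, so $a = d$ and hence $j - i = 2d + \pi$. Since trivially $d \le \delta(P)$ and $\pi \le \pi(P)$, Proposition~\ref{prop:cardinality-of-N} gives $j - i = 2d + \pi \le 2\delta(P) + \pi(P) = 2(\delta(P) + \frac{1}{2}\pi(P)) = 2|N|$, as required.

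The main thing to get right is the bookkeeping that separates the two regimes. Obtaining the sharp constant $|N|$ in Part~1 relies entirely on the disjointness of the small- and large-index ranges within a single half, equivalently on the fact that no vertex $k$ in the range satisfies $k+1 = p-k$; in Part~2 this disjointness is genuinely lost and one must fall back on the global count of descent- and plateau-vertices supplied by Proposition~\ref{prop:cardinality-of-N}. Routing Part~2 through vertex types rather than the explicit difference formula is precisely what lets me fold the middle vertex (an ascent- or descent-vertex when $p$ is odd) into the count seamlessly and avoid a separate parity analysis of the central transition.
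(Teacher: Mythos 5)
Your proof is correct, but it does not follow the paper's route, which is a single direct computation. The paper proves the contrapositive of both parts at once by subtracting the closed-form indegree expressions of Proposition~\ref{prop:indegree}: if $j-i$ exceeds the claimed bound, then $d_P^-(j)-d_P^-(i) = j-i-\left|N\cap\bigl(\{i+1,\dots,j\}\cup\{p-j+1,\dots,p-i\}\bigr)\right| > 0$ in Part~1 (and similarly $> j-i-2|N| \ge 0$ in Part~2), using only the trivial bound $|N\cap A|\le |N|$ --- no disjointness of index ranges is ever needed there. Your Part~1 is essentially this same computation reorganized as a telescoping sum of the consecutive differences from Proposition~\ref{prop:indegree-difference}, with the disjointness of the small-index and large-index ranges as an extra (correct) ingredient to avoid double counting; note you were right to be careful that Proposition~\ref{prop:indegree-difference}(1) is only invoked for $k<\lfloor p/2\rfloor$, since the stated formula fails at the central transition for odd $p$. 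Your Part~2, however, is genuinely different from the paper's: counting ascent-, descent- and plateau-vertices in the window, deducing $a=d$ from the vanishing telescoping sum, and then bounding $j-i=2d+\pi\le 2\delta(P)+\pi(P)=2|N|$ via Proposition~\ref{prop:cardinality-of-N}. This buys generality and cleanliness: it proves $j\le i+2|N|$ for \emph{any} pair $i<j$ with equal indegrees (the straddling hypothesis is never used), and it absorbs the odd-$p$ central vertex automatically since the type definition is in terms of indegree differences. What the paper's computation buys in exchange is self-containedness: it rests only on Proposition~\ref{prop:indegree}, whereas your Part~2 inherits its correctness from the global counting identity of Proposition~\ref{prop:cardinality-of-N}.
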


\begin{proof}
Let $i$ and $j$ be such that $0\le i<j\le\left\lfloor{\p2}\right\rfloor$ and $j>i+|N|$.
By Proposition~\ref{prop:indegree}, we have
$$
\begin{array}{lll}
d_P^-(j)-d_P^-(i) & = & j+|N\cap\{j+1,\dots,q-j\}|) - i - |N\cap\{i+1,\dots,q-i\}|)\\
                    & = & j - i - |N\cap\left(\{i+1,j\}\cup\{q-j+1,q-i\}\right)|\\
                    & > & |N|-|N|=0.
\end{array}$$
Therefore, $d_P^-(i)\neq d_P^-(j)$.
Since the indegree path of~$P$ is centrally symmetric, the result also holds when
$\left\lceil{\p2}\right\rceil\le i<j\le q$.

Suppose now that $0\le i\le\left\lfloor{\p2}\right\rfloor<j\le q$ and $j>i+2|N|$.
By Proposition~\ref{prop:indegree}, we have
$$
\begin{array}{lll}
d_P^-(j)-d_P^-(i) & = & j-|N\cap\{q-j+1,\dots,j\}|) - i - |N\cap\{i+1,\dots,q-i\}|)\\
                    & = & j - i - (|N\cap\{q-j+1,\dots,j\}| + |N\cap\{i+1,\dots,q-i\}|)\\
                    & > & 2|N|-(|N|+|N|)=0.
\end{array}$$
Again, we get $d_P^-(i)\neq d_P^-(j)$.
\end{proof}

Finally, the following property will be useful in the sequel.

\begin{proposition}
Let $P=P(q;N)$ be a pseudo-cyclic tournament.
If a vertex $i$ is fixed by every automorphism of~$P$,
then the vertex $q-i$ is also fixed by every automorphism of~$P$.
\label{prop:fixed-by-symmetry}
\end{proposition}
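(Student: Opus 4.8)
The plan is to exploit the central symmetry of $P$ at the level of its full automorphism group, not merely at the level of indegrees. Let $\sigma$ denote the permutation of $V(P)=\{0,\dots,p\}$ given by $\sigma(i)=p-i$; note that $\sigma$ is an involution, so $\sigma^{-1}=\sigma$. First I would show that $\sigma$ is an isomorphism from $P$ to its converse $P^c=P(p;N^c)$, with $N^c=\{1,\dots,p\}\setminus N$ as in Observation~\ref{obs:converse-pseudo}. Indeed, if $ij$ is an arc of $P$ with $i>j$, then $i-j\in N$ by definition; since $\sigma(i)=p-i<p-j=\sigma(j)$, the reversed pair $\sigma(j)\sigma(i)$ has ``length'' $(p-j)-(p-i)=i-j\in N$ and is therefore again an arc of $P$. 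This is exactly the statement that $\sigma$ reverses every arc of $P$, i.e.\ that $\sigma$ is an isomorphism $P\to P^c$.

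The next step is purely group-theoretic. Any digraph and its converse share the same automorphism group, so $\Aut(P)=\Aut(P^c)$: a permutation preserves all arcs of $P$ if and only if it preserves all arcs of $P^c$. Because $\sigma:P\to P^c$ is an isomorphism, conjugation by $\sigma$ sends $\Aut(P)$ onto $\Aut(P^c)$, so that $\sigma\,\Aut(P)\,\sigma^{-1}=\Aut(P^c)=\Aut(P)$. In other words, $\sigma$ normalizes $\Aut(P)$.

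Finally I would transfer this to fixed points. Assume $i$ is fixed by every automorphism of $P$, and let $\phi\in\Aut(P)$ be arbitrary. Since $\sigma$ normalizes $\Aut(P)$, the conjugate $\sigma\phi\sigma^{-1}$ again belongs to $\Aut(P)$ and hence fixes $i$, i.e.\ $(\sigma\phi\sigma^{-1})(i)=i$. Using $\sigma^{-1}(i)=\sigma(i)=p-i$, this unwinds to $\sigma(\phi(p-i))=i$, whence $\phi(p-i)=\sigma^{-1}(i)=p-i$. As $\phi$ was arbitrary, $p-i$ is fixed by every automorphism of $P$, which is what we wanted.

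The single delicate point is the arc-reversal check in the first step: one must respect the convention that arcs of a pseudo-cyclic tournament are recorded only for ordered pairs $i>j$ via $i-j\in N$, and keep careful track of which of $\sigma(i)$ and $\sigma(j)$ becomes the larger index after applying $\sigma$. Once $\sigma$ is identified as an isomorphism onto the converse, the remainder is the standard observation that such an isomorphism normalizes the automorphism group and therefore permutes the set of globally fixed vertices among themselves.
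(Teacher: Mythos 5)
Your proposal is correct and is essentially the paper's own argument: the paper defines $\phi^*(x)=p-\phi(p-x)$, which is exactly your conjugate $\sigma\phi\sigma^{-1}$, verifies by the same arithmetic on differences and membership in $N$ that $\phi^*\in\Aut(P)$, and applies it to the fixed vertex $i$ (phrased as a contradiction rather than directly). Your packaging of that verification as ``$\sigma$ is an isomorphism onto the converse $P^c$, together with $\Aut(P)=\Aut(P^c)$'' is only a modular restatement of the same computation, not a different route.
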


\begin{proof}
Suppose to the contrary that $i$ is fixed by every automorphism of~$P$
and that there exists an automorphism $\phi$ of~$P$ such that $\phi(q-i)\neq q-i$.

Let $\phi^*:V(P)\longrightarrow V(P)$ be the mapping defined by $\phi^*(i)=q-\phi(q-i)$
for every $i\in V(P)$. 

We claim that $\phi^*\in \Aut(P)$.
Let $ij$ be an arc in $P$ with $i>j$, that is, $i-j\in N$.
We then have 
$$\phi^*(i)-\phi^*(j)=q-\phi(q-i)-q+\phi(q-j)=\phi(q-j)-\phi(q-i).$$
Since $q-j>q-i$ and $q-j-q+i=i-j\in N$, 
$(q-j)(q-i)$ is an arc in $P$.
Since $\phi\in \Aut(P)$, $\phi(q-j)\phi(q-i)$ is an arc in $P$.

Suppose first that $\phi(q-j)>\phi(q-i)$, which implies $\phi(q-j)-\phi(q-i)\in N$.
In that case $\phi^*(j)=q-\phi(q-j)<q-\phi(q-i)=\phi^*(i)$ and,
since $\phi^*(i)-\phi^*(j)=\phi(q-j)-\phi(q-i)\in N$,
$\phi^*(i)\phi^*(j)$ is an arc in $P$.

Suppose now that $\phi(q-j)<\phi(q-i)$, which implies $\phi(q-i)-\phi(q-j)\notin N$.
In that case $\phi^*(i)=q-\phi(q-i)<q-\phi(q-j)=\phi^*(j)$ and,
since $\phi^*(j)-\phi^*(i)=\phi(q-i)-\phi(q-j)\notin N$,
$\phi^*(i)\phi^*(j)$ is an arc in $P$.

The case of an arc $ij$ with $i<j$ is similar, and thus $\phi^*\in \Aut(P)$.

We then have $\phi^*(i)=q-\phi(q-i)\neq i$, contradicting the fact that $i$ is fixed by
every automorphism of~$P$.
\end{proof}

This immediately gives the following.

\begin{corollary}
\label{cor:half-is-fixed}
Let $P=P(q;N)$ be a pseudo-cyclic tournament.
If all vertices $i\le\left\lfloor{\p2}\right\rfloor$ of~$P$ are fixed by every automorphism of~$P$, then $P$ is rigid.
\end{corollary}

%%%%%%%%%%%%%%%%%%%%%%%%%%%%%%%%%%%%%%%%%%%%%%%%%%%%%%%%%%%%%%%%%%%%%%%%%%%%%%%%%%%%%%%%%%%%%%%%%%%%%%%%%%%%%%%%%%%%%%%%%%%%%
%%%%%%%%%%%%%%%%%%%%%%%%%%%%%%%%%%%%%%%%%%%%%%%%%%%%%%%%%%%%%%%%%%%%%%%%%%%%%%%%%%%%%%%%%%%%%%%%%%%%%%%%%%%%%%%%%%%%%%%%%%%%%
%%%%%%%%%%%%%%%%%%%%%%%%%%%%%%%%%%%%%%%%%%%%%%%%%%%%%%%%%%%%%%%%%%%%%%%%%%%%%%%%%%%%%%%%%%%%%%%%%%%%%%%%%%%%%%%%%%%%%%%%%%%%%
%%%%%%%%%%%%%%%%%%%%%%%%%%%%%%%%%%%%%%%%%%%%%%%%%%%%%%%%%%%%%%%%%%%%%%%%%%%%%%%%%%%%%%%%%%%%%%%%%%%%%%%%%%%%%%%%%%%%%%%%%%%%%
%%%%%%%%%%%%%%%%%%%%%%%%%%%%%%%%% GENERAL RESULTS
\section{Cyclic tournaments satisfying the Albertson-Collins\\ Conjecture}
\label{sec:main}

In this section, we prove that several classes of cyclic tournaments
$T(2q+1;S)$ satisfy the Albertson-Collins Conjecture.
We first propose a few sufficient conditions for a cyclic tournament
to satisfy the Albertson-Collins Conjecture and then consider
several specific classes of cyclic tournaments, depending on the structure
of the set $S$ of connectors.

%%%%%%%%%%%%%%%%%%%%%%%%%%%%%%%%%%%%%%%%%%%%%%%%%%%%%%%%%%%%%%%%%%%%%%%%%%%%%%%%%%%%%%%%%%%%%%%%%%%%%%%%%%%%%%%%%%%%%%%%%%%%%
%%%%%%%%%%%%%%%%%%%%%%%%%%%%%%%%%%%%%%%%%%%%%%%%%%%%%%%%%%%%%%%%%%%%%%%%%%%%%%%%%%%%%%%%%%%%%%%%%%%%%%%%%%%%%%%%%%%%%%%%%%%%%
%%%%%%%%%%%%%%%%%%%%%%%%%%%%%%%%%%%%%%%%%%%%%%%%%%%%%%%%%%%%%%%%%%%%%%%%%%%%%%%%%%%%%%%%%%%%%%%%%%%%%%%%%%%%%%%%%%%%%%%%%%%%%
\subsection{Simple sufficient conditions}
\label{subsec:sufficient-conditions}

By Proposition~\ref{prop:no-single-orbit}, for every nontrivial $\lambda^*$-preserving
automorphism $\phi$ of $T=T(2q+1;S)$, none of $T_{0,q}$ and $T_{q+1,2q}$ has a single
orbit with respect to $\phi$. Using this property, we can prove the following result.

\begin{lemma}
The cyclic tournament $T=T(2q+1;S)$ satisfies the Albertson-Collins Conjecture whenever
one of the following conditions is satisfied.
\begin{enumerate}
\item $\Aut(T)$ is of order $2q+1$,
\item $q$ is even, $|S^-|={\p2}$, $q+1-s\notin S^-$ for every $s\in S^-$, and $|\Aut(T_{0,q})|=q+1$,
\item $q$ is odd, $|S^-\setminus\{p\}|=\frac{q-1}{2}$, $q-s\notin S^-$ for every $s\in S^-$, and $|\Aut(T_{q+1,2q})|=q$.
\end{enumerate}
\label{lem:sufficient}
\end{lemma}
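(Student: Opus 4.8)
The plan is to argue by contradiction in all three cases: I assume $T$ admits a nontrivial $\lambda^*$-preserving automorphism $\phi$, and I write $\phi_1=\phi|_{T_{0,p}}$ and $\phi_2=\phi|_{T_{p+1,2p}}$ for its restrictions, which are automorphisms of $T_{0,p}$ and $T_{p+1,2p}$ exactly as in the proof of Proposition~\ref{prop:T1orT2isrigid}. Condition~1 is immediate: the $2p+1$ rotations $i\mapsto i+b$ already form a subgroup of $\Aut(T)$, so $|\Aut(T)|=2p+1$ forces $\Aut(T)$ to consist precisely of these rotations; but a nontrivial rotation sends the interval $\{0,\dots,p\}$ to a distinct set of $p+1$ consecutive residues and so cannot preserve $\lambda^*$. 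Hence $\phi=Id$.

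For Conditions~2 and~3 I would first use Observation~\ref{obs:pseudo-is-cyclic} to translate the arithmetic hypotheses. Since $s\mapsto p+1-s$ (resp. $s\mapsto p-s$) is a fixed-point-free involution of $\{1,\dots,p\}$ (resp. $\{1,\dots,p-1\}$) when $p$ is even (resp. odd), the cardinality condition together with ``$p+1-s\notin S^-$'' (resp. ``$p-s\notin S^-$'') says exactly that $S^-$ (resp. $S^-\setminus\{p\}$) meets each such pair in one point, i.e. the relevant half has only plateau-vertices and is therefore cyclic. The order hypothesis then forces the automorphism group of that cyclic half to be exactly its group of rotations. Consequently the relevant restriction, $\phi_1$ in Condition~2 and $\phi_2$ in Condition~3, is a rotation of $\ZZZ_{p+1}$ (resp. $\ZZZ_p$); it is nontrivial, since a trivial restriction would give $\phi=Id$ by the argument of Proposition~\ref{prop:T1orT2isrigid}, and by Proposition~\ref{prop:no-single-orbit} it has at least two orbits.

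The next step is a divisibility reduction. Because a vanishing restriction on either half forces $\phi=Id$ (Proposition~\ref{prop:T1orT2isrigid}), the relations $\phi^{\mathrm{ord}(\phi_1)}=\phi^{\mathrm{ord}(\phi_2)}=Id$ show that $\phi_1$ and $\phi_2$ share a common order $m$, equal to $\mathrm{ord}(\phi)$, with $m\mid p+1$ in Condition~2 and $m\mid p$ in Condition~3. Replacing $\phi$ by $\phi^{m/q}$ for an odd prime $q\mid m$, I keep a nontrivial $\lambda^*$-preserving automorphism whose restriction to the cyclic half is still a nontrivial rotation; thus I may assume $\phi$ has prime order $q$, with $q\mid p+1$ (Condition~2) or $q\mid p$ (Condition~3). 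The rotation on the cyclic half is then fixed-point-free with all orbits of size $q$, while on the other half the number of fixed vertices is congruent modulo $q$ to that half's number of vertices ($p$ in Condition~2, $p+1$ in Condition~3), which is nonzero modulo $q$; so the other half must carry fixed vertices of $\phi$. The configuration to rule out is precisely these fixed vertices sitting opposite the size-$q$ rotating orbits of the cyclic half.

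The hard part is to derive the final contradiction from this configuration, and I expect it to be the main obstacle. For a vertex $a$ fixed by $\phi$ in the non-cyclic half, Observation~\ref{obs:agree-fixpoint} forces every size-$q$ orbit of the cyclic half to lie entirely in $N_T^+(a)$ or entirely in $N_T^-(a)$, so the out-neighbourhood of $a$ inside the cyclic half is a union of residue classes, hence periodic of period $(p+1)/q$ (resp. $p/q$). I would then read this periodicity back through the arcs of $T$ crossing the boundary between $\{0,\dots,p\}$ and $\{p+1,\dots,2p\}$, using the reflection relation $x\in S\Leftrightarrow 2p+1-x\notin S$ and the plateau structure of $S^-$, and show it is incompatible with the hypothesis that the cyclic half has only rotational automorphisms, either because it would manufacture an extra non-rotational automorphism of that half or because it produces a forbidden degree or $3$-cycle pattern as in Proposition~\ref{prop:T1orT2isrigid}. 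Controlling exactly how the fixed vertices interact with the rotating orbits across the boundary is where the genuine work lies, and I anticipate needing a case analysis on $q$ and on whether $1\in S^-$ and $p\in S^-$.
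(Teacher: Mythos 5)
Your Condition~1 argument is exactly the paper's. For Conditions~2 and~3, however, there is a genuine gap, and you flag it yourself: after correctly translating the hypotheses (the relevant half is cyclic with only rotational automorphisms), correctly observing that the two restrictions $\phi_1,\phi_2$ have the same order as $\phi$, reducing to prime order $q$ dividing $p+1$ (resp.\ $p$), and showing that the non-cyclic half must then contain at least $q-1\ge 2$ fixed vertices facing the $(p+1)/q$ (resp.\ $p/q$) rotating orbits of size $q$, you never derive a contradiction. Your final paragraph is a plan, not a proof: ``I would then read this periodicity back through the arcs \dots and show it is incompatible'' asserts the conclusion rather than establishing it, and since the cross-arcs are entirely determined by $S$, ruling out this configuration requires a concrete arithmetic argument about $S$ that you do not supply. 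As written, the lemma remains unproved precisely when $p+1$ (resp.\ $p$) is composite, which is the only situation in which your reduction does not already collapse to a single-orbit rotation.

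The comparison with the paper is instructive. The paper settles Conditions~2 and~3 in one line: $\phi_1$ is a rotation of the cyclic half, ``and thus $\phi_1=Id$,'' citing Proposition~\ref{prop:no-single-orbit}, after which $\phi=Id$ follows by the argument of Proposition~\ref{prop:T1orT2isrigid}. That citation is decisive when the rotation has a \emph{single} orbit in its half, i.e.\ when it rotates by an element coprime to $p+1$ (resp.\ $p$) --- in particular whenever that modulus is prime --- since Proposition~\ref{prop:no-single-orbit} forbids single-orbit restrictions. Your observation that a rotation by a non-generator has several orbits, so that Proposition~\ref{prop:no-single-orbit} does not apply to it directly, is mathematically sound, and the multi-orbit case you isolate is exactly the one the paper's one-line deduction does not explicitly address. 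So your proposal locates the real difficulty more candidly than the paper's proof does, but it does not close it: to obtain the stated lemma you would need either to complete the periodicity/boundary analysis you sketch, or to find a short argument (which the paper does not spell out) showing that a $\lambda^*$-preserving automorphism restricting to a multi-orbit rotation on the cyclic half cannot exist.
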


\begin{proof}
Recall that for every cyclic tournament $T_n$ of order $n$, the $n$ rotations of $T_n$
form a subgroup of $\Aut(T_n)$.

In Case~1, $\Aut(T)$ contains only rotations, so that $\lambda^*$ is clearly distinguishing.
In Case~2, $T_{0,q}$ is a cyclic tournament and every automorphism of $T_{0,q}$ is
a rotation. Therefore, the restriction $\phi_1$ to $T_{0,q}$
of any $\lambda^*$-preserving automorphism $\phi$ of $T$ is a rotation and thus $\phi_1=\Id$.
The result then follows by Proposition~\ref{prop:no-single-orbit}.
Case~3 is similar, using $T_{q+1,2q}$ instead of $T_{0,q}$.
\end{proof}

Let $T$ be a tournament. We denote by $V_d(T)$, $d\ge 0$,
the set of vertices of $T$ with indegree $d$. 
Since any automorphism maps every vertex to a vertex with same degree, we
have the following result.

\begin{lemma}
Let $P=P(n;N)$ be a pseudo-cyclic tournament.
If the sub-tournament $P[V_d(P)]$ of~$P$, induced by $V_d(T)$, is rigid for every $d$, $d\ge 0$, then $P$ is rigid.
\label{lem:Vd-rigid}
\end{lemma}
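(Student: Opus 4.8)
The plan is to show that any automorphism $\phi$ of $P$ must fix every vertex, by exploiting the fact that $\phi$ respects the indegree partition $V(P)=\bigcup_{d\ge 0} V_d(P)$. Since an automorphism maps each vertex to a vertex of the same indegree, $\phi$ permutes each set $V_d(P)$ among itself. The natural first step is therefore to consider, for each value $d$, the restriction $\phi_d$ of $\phi$ to $V_d(P)$. The key point I would stress is that $\phi_d$ is not merely a permutation of $V_d(P)$ but an \emph{automorphism} of the induced sub-tournament $P[V_d(P)]$: indeed, whenever $ij$ is an arc of $P$ with both endpoints in $V_d(P)$, applying $\phi$ sends it to the arc $\phi(i)\phi(j)$, whose endpoints again lie in $V_d(P)$ because indegrees are preserved. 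Thus $\phi_d\in\Aut(P[V_d(P)])$.

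With this observation in hand, the hypothesis does the rest. Since $P[V_d(P)]$ is assumed rigid for every $d\ge 0$, its only automorphism is the identity, so $\phi_d=Id$ on $V_d(P)$ for each $d$. As the sets $V_d(P)$ partition $V(P)$, and $\phi$ agrees with the identity on each block of the partition, we conclude $\phi=Id$ on all of $V(P)$. Because $\phi$ was an arbitrary automorphism of $P$, this shows $\Aut(P)=\{Id\}$, i.e.\ $P$ is rigid, as required.

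I do not expect any genuine obstacle here: the statement is essentially the formalisation of the elementary principle that an automorphism cannot mix vertices of different indegrees, so rigidity of each ``indegree layer'' forces global rigidity. The only point requiring a little care is the verification that $\phi$ restricted to $V_d(P)$ really is an automorphism of the \emph{induced} sub-tournament rather than just a degree-preserving bijection; this is immediate once one notes that arc-preservation is inherited on induced subdigraphs whose vertex set is invariant under $\phi$. It is worth remarking that the proof uses nothing specific to pseudo-cyclic tournaments beyond the existence of the indegree invariant, so the same argument would apply verbatim to any tournament; the statement is specialised to $P(n;N)$ only because that is the setting in which the lemma will later be applied.
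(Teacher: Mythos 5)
Your proof is correct and takes essentially the same approach as the paper: the paper states this lemma without a written proof, justifying it only by the preceding remark that any automorphism maps every vertex to a vertex of the same degree, which is precisely the observation you formalize. Your additional check that the restriction of an automorphism to each $V_d(P)$ is an automorphism of the induced sub-tournament (not just a bijection) is exactly the routine step the paper leaves implicit.
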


This lemma directly gives the following results.

\begin{theorem}
Let $T=T(2q+1;S)$ be a cyclic tournament, $P_1=T_{0,q}$ and $P_2=T_{q+1,2q}$. 
If $P_1[V_d(P_1)]$ is rigid for every $d$, 
or $P_2[V_d(P_2)]$ is rigid for every $d$, $d\ge 0$,
then $T$ satisfies the Albertson-Collins Conjecture.
\label{th:Vd-rigid}
\end{theorem}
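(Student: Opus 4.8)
The plan is to deduce Theorem~\ref{th:Vd-rigid} directly from Lemma~\ref{lem:Vd-rigid} together with Proposition~\ref{prop:T1orT2isrigid}, using the fact that both intermediate results are already at our disposal. First I would observe that the hypothesis is symmetric in the two sub-tournaments $P_1=T_{0,p}$ and $P_2=T_{p+1,2p}$, so it suffices to treat one case, say the assumption that $P_1[V_d(P_1)]$ is rigid for every $d\ge 0$. Note that $P_1=T_{0,p}\cong P(p;S^-)$ is genuinely a pseudo-cyclic tournament (and similarly $P_2\cong P(p-1;S^-)$ or $P(p-1;S^-\setminus\{p\})$), so Lemma~\ref{lem:Vd-rigid} applies to it.

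The core of the argument is then a single invocation of Lemma~\ref{lem:Vd-rigid}: since $P_1[V_d(P_1)]$ is rigid for every indegree value $d$, the lemma immediately yields that $P_1=T_{0,p}$ is itself rigid. Once $T_{0,p}$ is known to be rigid, Proposition~\ref{prop:T1orT2isrigid} applies verbatim and tells us that the canonical $2$-labeling $\lambda^*$ of $T$ is distinguishing, which is precisely the statement that $T$ satisfies the Albertson-Collins Conjecture. In the symmetric situation where $P_2[V_d(P_2)]$ is rigid for every $d$, the same two steps give that $T_{p+1,2p}$ is rigid, and Proposition~\ref{prop:T1orT2isrigid} again finishes the proof through its ``or'' clause.

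The one point that requires a little care, rather than being a genuine obstacle, is making sure Lemma~\ref{lem:Vd-rigid} is legitimately applicable to $P_2$ as well as to $P_1$: $P_2$ is isomorphic to a pseudo-cyclic tournament on $p$ vertices (of parameter $p-1$), so one should phrase the deduction in terms of the abstract pseudo-cyclic tournament to which each $P_i$ is isomorphic, rigidity being an isomorphism invariant. Since $V_d$ and hence the decomposition into the sub-tournaments $P_i[V_d(P_i)]$ is preserved under isomorphism, this transfer is immediate. I expect no hard step here at all; the entire theorem is a short composition of two previously established facts, and the proof should amount to little more than stating the chain Lemma~\ref{lem:Vd-rigid} $\Rightarrow$ rigidity of one sub-tournament $\Rightarrow$ Proposition~\ref{prop:T1orT2isrigid} $\Rightarrow$ the conclusion.
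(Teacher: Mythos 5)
Your proposal is correct and is exactly the paper's own argument: the paper proves Theorem~\ref{th:Vd-rigid} in one line by combining Lemma~\ref{lem:Vd-rigid} (rigidity of each $P_i[V_d(P_i)]$ implies rigidity of $P_i$) with Proposition~\ref{prop:T1orT2isrigid}. Your additional remarks about transferring the hypothesis along the isomorphism $P_2\cong P(p-1;\cdot)$ are a harmless elaboration of the same chain of reasoning.
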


\begin{proof}
The result directly follows from Lemma~\ref{lem:Vd-rigid} and Proposition~\ref{prop:T1orT2isrigid}.
\end{proof}

\begin{theorem}
Let $T=T(2q+1;S)$ be a cyclic tournament with $S^-\neq\emptyset$ and let $\min(S^-)$
denote the smallest negative connector of $S$.
If $\min(S^-)> 2|S^-|$, then $T$ satisfies the Albertson-Collins Conjecture.
\label{th:minN}
\end{theorem}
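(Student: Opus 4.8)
The plan is to reduce the statement to the rigidity criterion of Theorem~\ref{th:Vd-rigid} applied to the ``first half'' $P_1=T_{0,p}$. Recall that $P_1\cong P(p;S^-)$ is a pseudo-cyclic tournament, so I set $N=S^-$, $q=|S^-|$ and $m=\min(S^-)$, the hypothesis reading $m>2q$. I would show that for every indegree value $d\ge 0$ the induced sub-tournament $P_1[V_d(P_1)]$ is rigid; the conclusion then follows at once from Theorem~\ref{th:Vd-rigid}. The whole argument therefore comes down to understanding each indegree class $V_d=V_d(P_1)$ separately.

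The key step is a uniform ``spread'' bound on $V_d$. By Proposition~\ref{prop:closeto}, any two vertices $i<j$ of $P_1$ with $d_{P_1}^-(i)=d_{P_1}^-(j)$ satisfy $j-i\le |N|$ when they lie on the same side of the centre of the indegree path, and $j-i\le 2|N|$ when they straddle the centre. In every case one gets $j-i\le 2|N|=2q$, so any two vertices of $V_d$ differ by at most $2q$. Since the hypothesis gives $2q<m=\min(N)$, this difference is strictly smaller than the smallest negative connector, and hence never lies in $N$.

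Now I read off the arcs inside $V_d$ directly from the definition of $P(p;N)$: for $i<j$ the arc points from $j$ to $i$ exactly when $j-i\in N$, and from $i$ to $j$ otherwise. Because no pairwise difference among the vertices of $V_d$ belongs to $N$, every arc of $P_1[V_d]$ is oriented from the smaller to the larger vertex, i.e.\ $P_1[V_d]$ is a transitive tournament, and transitive tournaments are rigid. As $d$ was arbitrary, $P_1[V_d(P_1)]$ is rigid for every $d$, and Theorem~\ref{th:Vd-rigid} yields that $T$ satisfies the Albertson-Collins Conjecture.

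The only delicate point, and the step I would check most carefully, is the uniform spread bound: I must make sure that Proposition~\ref{prop:closeto} covers all three configurations of a pair in $V_d$ (both below $\lfloor p/2\rfloor$, both above $\lceil p/2\rceil$, or one on each side), so that the \emph{worst} case $2|N|$ applies to every pair, and that this worst case is still strictly below $\min(S^-)$ --- which is precisely what the inequality $\min(S^-)>2|S^-|$ provides. Once this is secured, the argument needs no explicit computation with the indegree formula of Proposition~\ref{prop:indegree}; the hypothesis is used only through the clean comparison $2|S^-|<\min(S^-)$.
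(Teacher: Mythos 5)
Your proposal is correct and follows exactly the paper's own argument: bound the spread of each indegree class $V_d(P_1)$ by $2|S^-|$ via Proposition~\ref{prop:closeto}, conclude from $\min(S^-)>2|S^-|$ that each $P_1[V_d(P_1)]$ is transitive (hence rigid), and finish with Theorem~\ref{th:Vd-rigid}. The only difference is that you spell out the transitivity step (no pairwise difference can lie in $S^-$) which the paper leaves implicit.
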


\begin{proof}
Consider the pseudo-cyclic sub-tournament $P=T_{0,q}$.
By Proposition~\ref{prop:closeto}, we know that if two vertices $i$ and $j$, $i<j$,
have the same indegree in $P$, then $j-i\le 2|S^-|$.
Since $\min(S^-)> 2|S^-|$, we get that
the sub-tournament $P[V_d(P)]$ is transitive,
and thus rigid, for every $d$, $d\ge 0$.
The result then follows by Theorem~\ref{th:Vd-rigid}.
\end{proof}

%%%%%%%%%%%%%%%%%%%%%%%%%%%%%%%%%%%%%%%%%%%%%%%%%%%%%%%%%%%%%%%%%%%%%%%%%%%%%%%%%%%%%%%%%%%%%%%%%%%%%%%
%%%%%%%%%%%%%%%%%%%%%%%%%%%%%%%%%%%%%%%%%%%%%%%%%%%%%%%%%%%%%%%%%%%%%%%%%%%%%%%%%%%%%%%%%%%%%%%%%%%%%%%
%%%%%%%%%%%%%%%%%%%%%%%%%%%%%%%%%%%%%%%%%%%%%%%%%%%%%%%%%%%%%%%%%%%%%%%%%%%%%%%%%%%%%%%%%%%%%%%%%%%%%%%
%%%%%%%%%%%%%%%%%%%%%%%%%%%%%%%%%%%%%%%%%%%%%%%%%%%%%%%%%%%%%%%%%%%%%%%%%%%%%%%%%%%%%%%%%%%%%%%%%%%%%%%
%%%%%%%%%%%%%%%%%%%%%%%%%%%%%%%%% FEW NEGATIVE CONNECTORS
\subsection{Cyclic tournaments with few negative connectors}
\label{subsec:few}

We consider in this subsection the case of cyclic tournaments with at most two negative connectors.
We prove that every such tournament satisfies the Albertson-Collins Conjecture.

%%% FIGURES
The proof of our result is by case analysis. Each of the forthcoming figures 
(including figures in the next subsection) is intended
to illustrate the indegree path associated with a specific case and, in particular, 
should allow the reader to easily identify
the sets of vertices having the same indegree (recall that every orbit of an automorphism
only contains vertices having the same indegree).

\begin{theorem}
If $T=T(2q+1;S)$ is a cyclic tournament with $q\ge 1$ and $|S^-|\le 2$,
then $T$ satisfies the Albertson-Collins Conjecture.
\label{th:S-atmost2}
\end{theorem}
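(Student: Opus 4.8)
The plan is to reduce, via Theorem~\ref{th:Vd-rigid}, to proving that the sub-tournament $P_1[V_d(P_1)]$ is rigid for every $d$, where $P_1=T_{0,p}\cong P(p;S^-)$. When $|S^-|=0$, the tournament $P_1=P(p;\emptyset)$ is transitive, hence rigid, and the claim follows immediately from Proposition~\ref{prop:T1orT2isrigid}. When $|S^-|=1$, I would first observe, using Proposition~\ref{prop:closeto} with $|N|=1$, that any two vertices of $P_1$ sharing the same indegree are either consecutive or lie symmetrically across the middle; a short check with the indegree differences of Proposition~\ref{prop:indegree-difference} then shows that each level set $V_d(P_1)$ has at most two vertices. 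Since a tournament on at most two vertices is rigid, Theorem~\ref{th:Vd-rigid} applies and $T$ satisfies the conjecture.

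The case $|S^-|=2$ is the core of the proof. Writing $N=S^-=\{a,b\}$ with $a<b$, Proposition~\ref{prop:closeto} confines each level set $V_d(P_1)$ to a window of length at most $2|N|=4$, with at most $|N|+1=3$ vertices in either half, so every $P_1[V_d(P_1)]$ is a small sub-tournament. A short check shows that the only non-rigid tournaments on at most four vertices are the directed $3$-cycle and the $4$-vertex tournament consisting of a directed $3$-cycle together with a vertex dominating, or dominated by, all of it; it therefore suffices to rule out such \emph{monochromatic} configurations inside a single level set. For a triple $i<j<k$ I would record, in terms of $d_1=j-i$, $d_2=k-j$ and $d_3=k-i=d_1+d_2$, exactly when the induced triangle is a $3$-cycle: this happens precisely when either $d_1,d_2\in N$ and $d_3\notin N$, or $d_1,d_2\notin N$ and $d_3\in N$. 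Simultaneously, equality of the indegrees $d_{P_1}^-(i)=d_{P_1}^-(j)=d_{P_1}^-(k)$ translates, through Proposition~\ref{prop:indegree}, into counting identities of the form $j-i=|N\cap\{i+1,\dots,j\}|+|N\cap\{p-j+1,\dots,p-i\}|$, and likewise for the pair $(j,k)$.

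The hard part is to show that these two systems of conditions are incompatible. The plan is a finite case analysis on the admissible values of $d_1,d_2$ dictated by the $3$-cycle conditions (each of $d_1,d_2$ being $a$, $b$, or neither): for each choice I would substitute into the counting identities above and use $|N|=2$ together with $a<b$ to force a contradiction, treating separately the few degenerate subcases (for instance $b=2a$, or a pairwise sum coinciding with a connector). The same computation, applied to the $4$-vertex pattern, excludes monochromatic copies of it. I expect a small number of residual configurations --- typically those in which the triple straddles the middle index $\lfloor p/2\rfloor$, so that the central symmetry of the indegree path interferes with the count --- to resist this direct argument; for these I would instead run the analysis on $P_2=T_{p+1,2p}\cong P(p-1;S^-)$ (or $P(p-1;S^-\setminus\{p\})$), whose connector set has the same size, or else fall back on a direct treatment of a putative nontrivial $\lambda^*$-preserving automorphism $\phi$, which by Proposition~\ref{prop:no-single-orbit} must possess at least two orbits of size $\ge 3$ in each half --- a severe constraint for a tournament built from only two negative connectors.
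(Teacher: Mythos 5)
Your reduction collapses at its central step: the claimed incompatibility between ``equal indegrees'' and ``induces a $3$-cycle'' is simply false, and the counterexamples are not fringe residual configurations but a recurring phenomenon. Take $T(17;\{4,5\})$: in $P_1=P(8;\{4,5\})$ the indegree sequence is $(2,3,4,5,4,3,4,5,6)$, so $V_4(P_1)=\{2,4,6\}$, and since $2\notin S^-$ while $4\in S^-$ this level set induces the $3$-cycle $2\to 4\to 6\to 2$. Your first fallback (switching to the other half) fails on the very same tournament: $P_2\cong P(7;\{4,5\})$ has $V_3(P_2)=\{1,4,5\}$ and $V_4(P_2)=\{2,3,6\}$, both inducing $3$-cycles. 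There are moreover infinitely many such examples: for every even $p\ge 6$, $P(p;\{2,p\})$ has $V_2=\{0,1,2\}$ inducing a $3$-cycle. Your size bound is also off: Proposition~\ref{prop:closeto} confines a level set to five consecutive positions, not four, and $5$-element level sets do occur --- $V_3\bigl(P(6;\{2,4\})\bigr)=\{1,2,3,4,5\}$ induces the regular tournament on five vertices, which is not rigid --- so even a correct classification of non-rigid tournaments on at most four vertices would not suffice. The same slip already appears in your case $|S^-|=1$: for $p$ even and $S^-=\{p/2\}$ the middle level set has three vertices (for $T(9;\{2\})$ it is $\{1,2,3\}$, a $3$-cycle), so there too one must pass to $P_2$ rather than conclude directly.

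What is missing is precisely the machinery that constitutes the paper's actual proof. The paper never claims that level sets induce rigid tournaments; guided by Proposition~\ref{prop:cardinality-of-N} ($\delta(P_1)+\tfrac{1}{2}\pi(P_1)=|S^-|=2$) it classifies the possible shapes of the indegree path, and whenever a level set of size at least $3$ could carry an orbit, it exhibits a vertex already fixed by every automorphism (for instance one lying in a level set of size at most $2$, since orbits of tournament automorphisms have size $1$ or at least $3$) on which two vertices of the suspect level set disagree; Observation~\ref{obs:agree-fixpoint} then rules out the orbit, Corollary~\ref{cor:half-is-fixed} yields rigidity of $P_1$ or $P_2$, and Proposition~\ref{prop:T1orT2isrigid} finishes. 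For $T(17;\{4,5\})$, for example, vertex $3$ is fixed because $V_5(P_1)=\{3,7\}$, and $2$ and $4$ disagree on $3$, so $\{2,4,6\}$ cannot be an orbit. Your last-resort fallback gestures toward this kind of argument, but Proposition~\ref{prop:no-single-orbit} (at least two orbits per half) is far too weak on its own; developing the fixed-vertex/disagreement analysis case by case is the bulk of the proof, not a corner to be patched.
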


\begin{proof}
The result is obvious if $q=1$ since, in that case, $T$ is the directed 3-cycle.
We thus assume $q\ge 2$.
We will prove that one of the two sub-tournaments $P_1=T_{0,q}$ or $P_2=T_{q+1,2q}$
is rigid, so that the result follows by Proposition~\ref{prop:T1orT2isrigid}.

If $S^-=\emptyset$, then $P_1$ is transitive and thus rigid.

If $|S^-|=1$, say $S^-=\{a\}$, then, by Proposition~\ref{prop:indegree}, the indegree
of any vertex $i$ of $P_1$ is as follows.
$$d_{P_1}^-(i)=\left\{\begin{array}{ll}
              i & \hspace{3mm} \mbox{if}\  a\notin\{i+1,\dots,q-i\}\ \mbox{and } 0\leq i\leq\left\lfloor{\p2}\right\rfloor, \\
              i+1 & \hspace{3mm} \mbox{if}\  a\in\{i+1,\dots,q-i\}\ \mbox{and } 0\leq i\leq\left\lfloor{\p2}\right\rfloor, \\
              q-d_{P_1}^-(q-i) & \hspace{3mm}  \mbox{otherwise}.\\
            \end{array}
\right.$$
Therefore, $d_{P_1}^-(i)=d_{P_1}^-(j)$, $i<j\le\left\lfloor{\p2}\right\rfloor$, if and only if $j=i+1$ and $j\in\{a,q+1-a\}$.
If $a\neq\left\lfloor{\p2}\right\rfloor$, then the number of vertices in $P_1$
with degree $d$ is at most 2 for every $d$. In that case, $P_1$ satisfies the condition
of Lemma~\ref{lem:Vd-rigid} and is thus rigid.
If $a=\left\lfloor{\p2}\right\rfloor$ the same property holds except if $q$ is
even (in that case, we have $|V_{{\p2}}(P_1)|=3$ and $|V_d(P_1)|=1$ for every
$d\neq{\p2}$). But then, $P_2$ satisfies the condition
of Lemma~\ref{lem:Vd-rigid} (since $q-1$ is odd) and we are done too.

Assume now that $|S^-|=2$ and let $S^-=\{a_1,a_2\}$.
If $q=2$, then $P_2$ has order 2 and is thus rigid.
If $q=3$, then either $S^-=\{1,2\}$, $S^-=\{1,3\}$ or $S^-=\{2,3\}$.
It is then easily checked that $|V_1(P_1)|=|V_2(P_1)|=2$ in all cases, so that $P_1$
satisfies the condition of Lemma~\ref{lem:Vd-rigid} and we are done.
%If $p=3$ then the set $S^-$ is either $\{1,2\}$, $\{1,3\}$ or $\{2,3\}$.
%In each case, it is easily checked that $|V_1(P_1)|=|V_2(P_1)|=2$, so that $P_1$
%satisfies the condition of Lemma~\ref{lem:Vd-rigid} and we are done.

Suppose thus that $q>3$ and, moreover, that $q$ is even (if $q$ is odd we simply
consider $P_2$ instead of $P_1$).
By Proposition~\ref{prop:cardinality-of-N}, we know that
$\delta(P_1)+\frac{1}{2}\pi(P_1)=2$. We thus have three  cases to consider.

%%%%%%%%%%%%%%%%%%%%%%%%%%%%%%%%%%%%%%
\begin{figure}
\begin{center}
%%%%%%%%%%%%%%%%%%%%%%%%%%%%%%%%% (a)
\begin{tikzpicture}[domain=0:11,x=1cm,y=1cm]
% axes
\draw[very thick,->] (-0.5,0) -- ++(5,0);
\draw[very thick,->] (0,-0.5) -- ++(0,3.5);
% légende des axes
\node[right]at(4.5,0){$i$};
\node[above]at(0,3){$d^-_{P_1}(i)$};
% sommets de la courbe
\SOMMET{0,0.5}; \SOMMET{1,1.5}; \SOMMET{1.5,2}; \SOMMET{2,1.5}; \SOMMET{2.5,1}; \SOMMET{3,1.5}; \SOMMET{4,2.5};
% courbe
\draw[very thick,dotted] (0,0.5)--(1,1.5);
\draw[very thick](1,1.5)--(1.5,2);
\draw[very thick](1.5,2)--(2,1.5);
\draw[very thick](2,1.5)--(2.5,1);
\draw[very thick](2.5,1)--(3,1.5);
\draw[very thick,dotted](3,1.5)--(4,2.5);
% mediane et lignes verticales / horizontales
\draw[thick,dashed] (2,0)--(2,1.5);
\draw[thick,dashed] (1.5,0)--(1.5,2);
%\draw[thick,dashed] (3,0)--(3,1.5);
\draw[thick,dashed] (0,2.5)--(4,2.5);
% legende axe x
\node[below] at(2,0){$d_2$};
\node[below] at(1.5,0){$d_1$};
%\node[below] at(3,0){$d_2$};
%% legende axe y
\node[left]at(0,0.5){2};
\node[left]at(0,2.5){$q-2$};
%\node[left]at(0,2.5){5};
% legende
\node[below]at(1.75,-1){(a) $\delta(P_1)=2$, $\pi(P_1)=0$, $d_2-d_1=1$};
\end{tikzpicture}
\hskip 1cm
%%%%%%%%%%%%%%%%%%%%%%%%%%%%%%%%% (b)
\begin{tikzpicture}[domain=0:11,x=1cm,y=1cm]
% axes
\draw[very thick,->] (-0.5,0) -- ++(5.5,0);
\draw[very thick,->] (0,-0.5) -- ++(0,4);
% légende des axes
\node[right]at(5,0){$i$};
\node[above]at(0,3.5){$d^-_{P_1}(i)$};
% sommets de la courbe
\SOMMET{0,0.5}; \SOMMET{1,1.5}; \SOMMET{1.5,1}; \SOMMET{2,1.5}; \SOMMET{2.5,2}; \SOMMET{3,2.5}; \SOMMET{3.5,2}; \SOMMET{4.5,3};
% courbe
\draw[very thick,dotted] (0,0.5)--(1,1.5);
\draw[very thick](1,1.5)--(1.5,1);
\draw[very thick](1.5,1)--(2,1.5);
\draw[very thick,dotted](2,1.5)--(2.5,2);
\draw[very thick](2.5,2)--(3,2.5);
\draw[very thick](3,2.5)--(3.5,2);
\draw[very thick,dotted](3.5,2)--(4.5,3);
% mediane et lignes verticales / horizontales
\draw[thick,dashed] (2.25,0)--(2.25,1.75);
\draw[thick,dashed] (1,0)--(1,1.5);
\draw[thick,dashed] (3,0)--(3,2.5);
\draw[thick,dashed] (0,3)--(4.5,3);
% legende axe x
\node[below] at(2.25,0){$\p2$};
\node[below] at(1,0){$d_1$};
\node[below] at(3,0){$d_2$};
%% legende axe y
\node[left]at(0,0.5){2};
\node[left]at(0,3){$q-2$};
% legende
\node[below]at(2,-1){(b) $\delta(P_1)=2$, $\pi(P_1)=0$, $d_2-d_1>3$};
\end{tikzpicture}
\mbox{}\\
\vskip 0.5cm
%%%%%%%%%%%%%%%%%%%%%%%%%%%%%%%%% (c)
\begin{tikzpicture}[domain=0:11,x=1cm,y=1cm]
% axes
\draw[very thick,->] (-0.5,0) -- ++(5,0);
\draw[very thick,->] (0,-0.5) -- ++(0,3.5);
% légende des axes
\node[right]at(4.5,0){$i$};
\node[above]at(0,3){$d^-_{P_1}(i)$};
% sommets de la courbe
\SOMMET{0,0.5}; \SOMMET{1,1.5}; \SOMMET{1.5,1}; \SOMMET{2,1.5}; \SOMMET{2.5,2}; \SOMMET{3,1.5}; \SOMMET{4,2.5};
% courbe
\draw[very thick,dotted] (0,0.5)--(1,1.5);
\draw[very thick](1,1.5)--(1.5,1);
\draw[very thick](1.5,1)--(2,1.5);
\draw[very thick](2,1.5)--(2.5,2);
\draw[very thick](2.5,2)--(3,1.5);
\draw[very thick,dotted](3,1.5)--(4,2.5);
% mediane et lignes verticales / horizontales
\draw[thick,dashed] (2,0)--(2,1.5);
\draw[thick,dashed] (1,0)--(1,1.5);
\draw[thick,dashed] (2.5,0)--(2.5,2);
\draw[thick,dashed] (0,2.5)--(4,2.5);
% legende axe x
\node[below] at(2,0){$\p2$};
\node[below] at(1,0){$d_1$};
\node[below] at(2.5,0){$d_2$};
%% legende axe y
\node[left]at(0,0.5){2};
\node[left]at(0,2.5){$q-2$};
% legende
\node[below]at(1.75,-1){(c) $\delta(P_1)=2$, $\pi(P_1)=0$, $d_2-d_1=3$};
\end{tikzpicture}
\caption{Indegree paths for the proof of Theorem~\ref{th:S-atmost2}}
\label{fig:Natmost2a}
\end{center}
\end{figure}

%%%%%%%%%%%%%%%%%%%%%%%%%%%%%%%%%%%%%%

\begin{enumerate}
\item $\delta(P_1)=2$ and $\pi(P_1)=0$.\\
In that case, the sequence $\IS(P_1)$ has no plateau and exactly two descent-vertices,
say $d_1$ and $d_2$.
%(Recall that 0 is not a descent-vertex since we assume $1\not\in N$.)
Since $\IS(P_1)$ is centrally symmetric and $q$ is even, 
we necessarily have $d_2=q-d_1-1$ which implies in particular that
the difference $d_2-d_1$ is necessarily odd.

Suppose first that $d_2-d_1=1$ (see Figure~\ref{fig:Natmost2a}(a)).
In that case, we necessarily have $d_2={\p2}$.
%If $d_2-d_1=1$, we necessarily have $d_2={\p2}$ (see Figure~\ref{fig:Natmost2a}(a)).
Moreover, since $q>3$, the vertex 0 cannot be a descent-vertex (this would imply $\p2=1$)
and thus $1\notin S^-$.
The only set $V_d(P_1)$ containing more than two elements is
thus $V_{{\p2}}(P_1)=\{{\p2}-2,{\p2},{\p2}+2\}$.
But since the vertex ${\p2}-1$ is fixed by every automorphism of $P_1$
and vertices ${\p2}-2$ and ${\p2}$ disagree on $\p2-1$ (as $1\notin S^-$),
${\p2}-2$ and ${\p2}$ are also fixed so that $P_1$ is rigid
by Corollary~\ref{cor:half-is-fixed}.

If $d_2-d_1>3$ (see Figure~\ref{fig:Natmost2a}(b)), then every set $V_d(P_1)$ 
contains at most two elements and Lemma~\ref{lem:Vd-rigid} again allows us to conclude.

Suppose finally that $d_2-d_1=3$ (see Figure~\ref{fig:Natmost2a}(c)).
This case is similar to the case $d_2-d_1=1$: again, 
the only set $V_d(P_1)$ containing more than two elements is
$V_{{\p2}}(P_1)=\{{\p2}-2,{\p2},{\p2}+2\}$
while the vertex ${\p2}-1$ is fixed by every automorphism of $P_1$
and vertices ${\p2}-2$ and ${\p2}$ disagree on $\p2-1$.
Again, $P_1$ is rigid thanks to Corollary~\ref{cor:half-is-fixed}.

\item $\delta(P_1)=1$ and $\pi(P_1)=2$.\\
Since $\delta(P_1)=1$, we cannot have two consecutive descent-vertices in $P_1$.
Therefore, since $\IS(P_1)$ is centrally symmetric, if $i$ is a descent-vertex, then
$q-(i+1)$ is also a descent-vertex.
As $\delta(P_1)=1$, we must have $i=q-(i+1)$, in contradiction with our assumption
on the parity of~$q$.
Hence, this case cannot happen.

%%%%%%%%%%%%%%%%%%%%%%%%%%%%%%%%%%%%%%
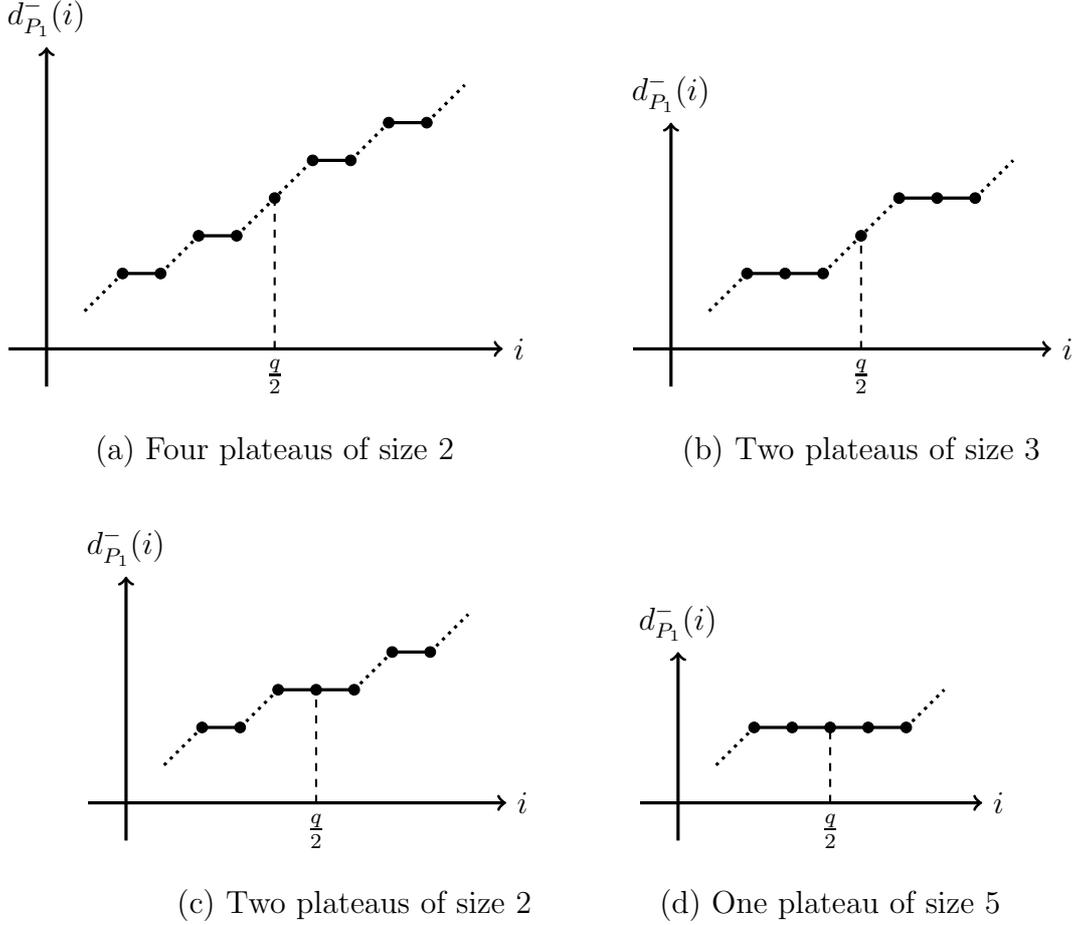
\begin{figure}
\begin{center}
%%%%%%%%%%%%%%%%%%%%%%%%%%%%%%%%% (a)
\begin{tikzpicture}[domain=0:11,x=1cm,y=1cm]
% axes
\draw[very thick,->] (-0.5,0) -- ++(6.5,0);
\draw[very thick,->] (0,-0.5) -- ++(0,4.5);
% légende des axes
\node[right]at(6,0){$i$};
\node[above]at(0,4){$d^-_{P_1}(i)$};
% sommets de la courbe
\SOMMET{1,1}; \SOMMET{1.5,1}; \SOMMET{2,1.5}; \SOMMET{2.5,1.5}; \SOMMET{3,2}; \SOMMET{3.5,2.5}; \SOMMET{4,2.5}; \SOMMET{4.5,3}; \SOMMET{5,3};
% courbe
\draw[very thick,dotted] (0.5,0.5)--(1,1);
\draw[very thick](1,1)--(1.5,1);
\draw[very thick,dotted] (1.5,1)--(2,1.5);
\draw[very thick](2,1.5)--(2.5,1.5);
\draw[very thick,dotted] (2.5,1.5)--(3.5,2.5);
\draw[very thick](3.5,2.5)--(4,2.5);
\draw[very thick,dotted] (4,2.5)--(4.5,3);
\draw[very thick](4.5,3)--(5,3);
\draw[very thick,dotted](5,3)--(5.5,3.5);
% mediane et lignes verticales / horizontales
\draw[thick,dashed] (3,0)--(3,2);
% legende axe x
\node[below] at(3,0){$\p2$};
%\node[left]at(0,2.5){5};
% legende
\node[below]at(3,-1){(a) Four plateaus of size 2};
\end{tikzpicture}
\hskip 1cm
%%%%%%%%%%%%%%%%%%%%%%%%%%%%%%%%% (b)
\begin{tikzpicture}[domain=0:11,x=1cm,y=1cm]
% axes
\draw[very thick,->] (-0.5,0) -- ++(5.5,0);
\draw[very thick,->] (0,-0.5) -- ++(0,3.5);
% légende des axes
\node[right]at(5,0){$i$};
\node[above]at(0,3){$d^-_{P_1}(i)$};
% sommets de la courbe
\SOMMET{1,1}; \SOMMET{1.5,1}; \SOMMET{2,1}; \SOMMET{2.5,1.5}; \SOMMET{3,2}; \SOMMET{3.5,2}; \SOMMET{4,2}; 
% courbe
\draw[very thick,dotted] (0.5,0.5)--(1,1);
\draw[very thick](1,1)--(2,1);
\draw[very thick,dotted] (2,1)--(3,2);
\draw[very thick](3,2)--(4,2);
\draw[very thick,dotted] (4,2)--(4.5,2.5);
% mediane et lignes verticales / horizontales
\draw[thick,dashed] (2.5,0)--(2.5,1.5);
% legende axe x
\node[below] at(2.5,0){$\p2$};
%\node[left]at(0,2.5){5};
% legende
\node[below]at(2.5,-1){(b) Two plateaus of size 3};
\end{tikzpicture}
\mbox{}\\
\vskip 0.5cm
%%%%%%%%%%%%%%%%%%%%%%%%%%%%%%%%% (c)
\begin{tikzpicture}[domain=0:11,x=1cm,y=1cm]
% axes
\draw[very thick,->] (-0.5,0) -- ++(5.5,0);
\draw[very thick,->] (0,-0.5) -- ++(0,3.5);
% légende des axes
\node[right]at(5,0){$i$};
\node[above]at(0,3){$d^-_{P_1}(i)$};
% sommets de la courbe
\SOMMET{1,1}; \SOMMET{1.5,1}; \SOMMET{2,1.5}; \SOMMET{2.5,1.5}; \SOMMET{3,1.5}; \SOMMET{3.5,2}; \SOMMET{4,2}; 
% courbe
\draw[very thick,dotted] (0.5,0.5)--(1,1);
\draw[very thick](1,1)--(1.5,1);
\draw[very thick,dotted] (1.5,1)--(2,1.5);
\draw[very thick](2,1.5)--(3,1.5);
\draw[very thick,dotted] (3,1.5)--(3.5,2);
\draw[very thick](3.5,2)--(4,2);
\draw[very thick,dotted] (4,2)--(4.5,2.5);
% mediane et lignes verticales / horizontales
\draw[thick,dashed] (2.5,0)--(2.5,1.5);
% legende axe x
\node[below] at(2.5,0){$\p2$};
%\node[left]at(0,2.5){5};
% legende
\node[below]at(3,-1){(c) Two plateaus of size 2};
\end{tikzpicture}
\hskip 1cm
%%%%%%%%%%%%%%%%%%%%%%%%%%%%%%%%% (b)
\begin{tikzpicture}[domain=0:11,x=1cm,y=1cm]
% axes
\draw[very thick,->] (-0.5,0) -- ++(4.5,0);
\draw[very thick,->] (0,-0.5) -- ++(0,2.5);
% légende des axes
\node[right]at(4,0){$i$};
\node[above]at(0,2){$d^-_{P_1}(i)$};
% sommets de la courbe
\SOMMET{1,1}; \SOMMET{1.5,1}; \SOMMET{2,1}; \SOMMET{2.5,1}; \SOMMET{3,1}; 
% courbe
\draw[very thick,dotted] (0.5,0.5)--(1,1);
\draw[very thick](1,1)--(3,1);
\draw[very thick,dotted] (3,1)--(3.5,1.5);
% mediane et lignes verticales / horizontales
\draw[thick,dashed] (2,0)--(2,1);
% legende axe x
\node[below] at(2,0){$\p2$};
%\node[left]at(0,2.5){5};
% legende
\node[below]at(2,-1){(d) One plateau of size 5};
\end{tikzpicture}
\caption{Indegree paths for the proof of Theorem~\ref{th:S-atmost2} (Cont.)}
\label{fig:Natmost2b}
\end{center}
\end{figure}

%%%%%%%%%%%%%%%%%%%%%%%%%%%%%%%%%%%%%%

\item $\delta(P_1)=0$ and $\pi(P_1)=4$.\\
Since $q$ is even and $\IS(P_1)$ is centrally symmetric, we have
four possibilities: (a) $P_1$ contains four plateaus of size 2,
(b) $P_1$ contains two plateaus of size 3,
(c) $P_1$ contains one plateau of size 3, centered at ${\p2}$, and two plateaus of size 2,
or (d) $P_1$ contains a single plateau, of size 5 and centered at ${\p2}$,
see Figure~\ref{fig:Natmost2b}(a), (b), (c) or (d), respectively.

\begin{enumerate}
\item If $P_1$ contains four plateaus of size 2, 
then $P_1$ satisfies the condition of Lemma~\ref{lem:Vd-rigid} and we are done.

\item If $P_1$ contains two plateaus of size 3, then the plateau-vertices
are necessarily of the form $i$, $i+1$, $q-i-2$ and $q-i-1$, with $i\le{\p2}-3$,
so that $S^-=\{a_1,a_2\}$, with $a_1\in\{i+1,q-i\}$, $a_2\in\{i+2,q-i-1\}$, and $d^-_{P_1}(i)=2+i$.
We then have $V_{2+i}(P_1)=\{i,i+1,i+2\}$ and all other sets $V_d(P_1)$, $2\le d\le{\p2}$, $d\neq 2+i$,
are singletons.

If $i>0$ and $i+1\in S^-$, then, since $i\notin S^-$, vertices $i$ and $i+1$ disagree on 0.
Since vertex 0 is fixed by every automorphism of $P_1$, $i$ and $i+1$ cannot belong to the same
orbit of any automorphism and thus, by Corollary~\ref{cor:half-is-fixed}, $P_1$ is rigid.
If $i>0$ and $q-i\in S^-$, 
we get the same conclusion since, in that case, 
either $i+2\in S^-$, which implies that $i$ and $i+2$ disagree on 0,
or $q-i-1\in S^-$, which implies, since $q-i-2\notin S^-$, that $i$ and $i+2$ disagree on $q$
(and $q$ is fixed by every automorphism of $P_1$).
Suppose finally that $i=0$.
Since $q$ is even, the vertex $3$ is not a plateau-vertex and thus is fixed
by every automorphism of $P_1$.
If $0$, $1$ and $2$ disagree on $3$, these three vertices
cannot belong to the same orbit of any automorphism and $P_1$ is rigid thanks to Corollary~\ref{cor:half-is-fixed}.
Otherwise, since 3 cannot be an in-neighbour of 0, 1 and 2 (this would imply $|S^-|\ge 3$),
3 is an out-neighbour of 0, 1 and 2, so that $1,2\notin S^-$, which implies
that $P_1[\{0,1,2\}]$ is transitive, and $P_1$ is rigid by Lemma~\ref{lem:Vd-rigid}.

\item If $P_1$ contains one plateau of size 3 and two plateaus of size 2,
then the plateau-vertices
are necessarily of the form $i$, ${\p2}-1$, ${\p2}$ and $q-i-1$, with $i\le{\p2}-3$,
so that $S^-=\{a_1,a_2\}$, with $a_1\in\{i+1,q-i\}$, $a_2\in\{{\p2},{\p2}+1\}$,
and $d^-_{P_1}({\p2})={\p2}$.
Therefore, the only set $V_d(P_1)$ containing more than two elements is
$V_{{\p2}}(P_1)=\{{\p2}-1,{\p2},{\p2}+1\}$.
Since $S^-$ contains either ${\p2}$ or ${\p2}+1$,
vertices ${\p2}$ and ${\p2}+1$ disagree on 0.
Since  0 is fixed by every automorphism of $P_1$, ${\p2}-1$, ${\p2}$ and ${\p2}+1$ are also fixed and thus, 
by Corollary~\ref{cor:half-is-fixed}, $P_1$ is rigid.

\item If $P_1$ contains a single plateau of size 5,
then the plateau-vertices
are necessarily ${\p2}-2$, ${\p2}-1$, $\dots$, ${\p2}+1$,
so that $S^-=\{a_1,a_2\}$, with $a_1\in\{{\p2}-1,{\p2}+2\}$, 
$a_2\in\{{\p2},{\p2}+1\}$,
and $d^-_{P_1}({\p2})={\p2}$.
Moreover, if $q>4$, then every set $V_d(P_1)$ except $V_{{\p2}}(P_1)$ is a singleton.

Suppose first that $q>4$, so that $1\notin S^-$ (otherwise, this would imply $1=\p2-1$).

If $\{2,3\}\cap S^-=\emptyset$ (which happens in particular if $q\ge 10$), 
the sub-tournament induced by $V_{{\p2}}(P_1)$
is either transitive or almost transitive, and thus rigid, so that $P_1$ is
rigid by Lemma~\ref{lem:Vd-rigid} (this also follows from Theorem~\ref{th:minN}).

Suppose now that $q\in\{6,8\}$, so that vertices 0 and $q$ are fixed by
every automorphism of $P_1$.
If all %plateau-vertices 
vertices belonging to the plateau are fixed by every automorphism of $P_1$,
then $P_1$ is rigid by Lemma~\ref{lem:Vd-rigid}.
Otherwise, some %plateau-vertices 
of these vertices may form an orbit, say $O$, of some automorphism, of size 3 or~5.
Considering the structure of the set $S^-$, we get that the vertex 0 has
three out-neighbours and two in-neighbours in 
$\{{\p2}-2,{\p2}-1,{\p2},{\p2}+1,{\p2}+2\}$.
Therefore, the size of the orbit $O$ must be 3. 
Moreover, since $0({\p2}-x)$ is an arc if and only if $({\p2}+x)p$
if an arc for every $x\in\{0,1,2\}$, we get that this orbit, if it exists,
is necessarily $O=\{{\p2}-2,{\p2},{\p2}+2\}$,
and thus $S^-=\{{\p2}-1,{\p2}+1\}$.
Moreover, $O$ must induce a 3-cycle in $P_1$, which implies
either $2\in S^-$ and $4\notin S^-$, or $2\notin S^-$ and $4\in S^-$.
This forces $S^-=\{4,6\}$ and thus $\{2,3\}\cap S^-=\emptyset$ 
for which we have seen before that $P_1$ is rigid.

Suppose finally that $q=4$.
Without loss of generality, we can suppose that $1\notin S^-$
(otherwise, we consider $P^c$ instead of~$P$ and we have $|(S^-)^c|=|S^-|$,
so that either $S^-=\{2,4\}$ or $S^-=\{3,4\}$ and,
in both cases, $P_1$ is a cyclic tournament).
Due to our initial assumption on the parity of~$q$, this case occurs
either if $T=T(9;S^-)$, in which case $P(4;S^-)=P_1$, or
$T=T(11;S^-)$, in which case $P(4;S^-)=P_2$, so that $P_1=P(5;S^-)$.

In the former case,
the pseudo-cyclic tournament $P_2$ is then either $P(3;\{2\})$ or $P(3;\{3\})$,
respectively. In both cases, $P_2$ is rigid since $|S^-|=1$.

In the latter case, we will prove that $P=P(5;S^-)$ is rigid. We consider two subcases,
according to the set $S^-$.
\begin{enumerate}
\item $S^-=\{2,4\}$.\\
In that case, we have $V_2(P)=\{0,2,4\}$ and $V_3(P)=\{1,3,5\}$.
Since both sub-tournaments $P[V_2(P)]$ and $P[V_3(P)]$ are transitive,
$P$ is rigid by Lemma~\ref{lem:Vd-rigid}.

\item $S^-=\{3,4\}$.\\
In that case, we have $V_2(P)=\{0,3,4\}$ and $V_3(P)=\{1,2,5\}$.
Again, both sub-tournaments $P[V_2(P)]$ and $P[V_3(P)]$ are transitive,
$P$ is rigid by Lemma~\ref{lem:Vd-rigid}.
\end{enumerate}

\end{enumerate}

\end{enumerate}

This completes the proof.
\end{proof}

For proving Theorem~\ref{th:S-atmost2}, we showed that whenever $|S^-|\le 2$,
at least one of the two sub-tournaments $T_{0,q}$ or $T_{q+1,2q}$ of the cyclic
tournament $T=T(2q+1;S)$ is rigid.
It should be noticed that this property does not hold in general when $|S^-|\ge 3$,
as shown by the cyclic tournament $T=T(13;\{2,5,6\})$ given in Example~\ref{ex:T13}.

Since any tournament $T$ is isomorphic to its converse, Theorem~\ref{th:S-atmost2}
gives the following.

\begin{corollary}
If $T=T(2q+1;S)$ is a cyclic tournament, such that $|S^-|\ge q-1$,
then $T$ satisfies the Albertson-Collins Conjecture.
\label{cor:atmostq-1}
\end{corollary}

%%%%%%%%%%%%%%%%%%%%%%%%%%%%%%%
%%%%%%%%%%%%%%%%%%%%%%%%%%%%%%%
%%%%%%%%%%%%%%%%%%%%%%%%%%%%%%%
%%%%%%%%%%%%%%%%%%%%%%%%%%%%%%%
\subsection{The set $S^-$ forms an interval}
\label{subsec:interval}

Observe first the following easy result.

\begin{theorem}
For every $q\ge 1$,
if $S=\{1,\dots,q\}$ or $S=\{1,\dots,q-1,q+1\}$, then the cyclic tournament
$T(2q+1;S)$ satisfies the Albertson-Collins Conjecture.
\label{th:transitive-and-almost}
\end{theorem}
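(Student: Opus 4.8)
The plan is to recognize both $S=\{1,\dots,p\}$ and $S=\{1,\dots,p-1,p+1\}$ in terms of the negative-connector description $S^-$, and then reduce to the rigidity of a sub-tournament via Proposition~\ref{prop:T1orT2isrigid}. First I would compute $S^-$ in each case. For $S=\{1,\dots,p\}$, every element of $\{1,\dots,p\}$ is a positive connector, so $S^+=\{1,\dots,p\}$ and hence $S^-=\emptyset$. For $S=\{1,\dots,p-1,p+1\}$, the element $p$ is missing from $S\cap\{1,\dots,p\}$ while $p+1=-(p)$ (mod $2p+1$) is present, so the only negative connector is $p$, giving $S^-=\{p\}$; equivalently $S^-\setminus\{p\}=\emptyset$.

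Next I would identify the resulting cyclic tournaments. In the first case, with $S^-=\emptyset$, the sub-tournament $P_1=T_{0,p}\cong P(p;\emptyset)$ is transitive, since $ij$ with $i>j$ is never an arc and the tournament is exactly $TT_{p+1}$; a transitive tournament is rigid by the remark following the definition of $TT_n$. In the second case, one checks via the arc description of $T(2p+1;S^-)$ that reversing the single arc corresponding to the connector $p$ makes $T$ an \emph{almost transitive} tournament: indeed $T$ is obtained from the transitive tournament $TT_{2p+1}$ by reversing exactly the arc between $0$ and the vertex at distance $p+1$, which is the signature of $TT^*_{2p+1}$. By Observation~\ref{obs:almost-rigid}, $TT^*_n$ is rigid for every $n>3$, and $2p+1>3$ once $p\ge 2$, so $T$ is itself rigid and the conclusion is immediate; the boundary case $p=1$ is the directed $3$-cycle, handled separately.

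For the bulk of the argument I would lean on Proposition~\ref{prop:T1orT2isrigid}: once $P_1=T_{0,p}$ is shown to be rigid (the transitive case $S^-=\emptyset$), the canonical $2$-labeling $\lambda^*$ is distinguishing, which is exactly the Albertson-Collins Conjecture for $T$. For the almost-transitive case I would either invoke the global rigidity of $T$ directly (a rigid tournament has $D(T)=1$, so trivially satisfies the conjecture), or again exhibit rigidity of one of $T_{0,p}$, $T_{p+1,2p}$ and apply Proposition~\ref{prop:T1orT2isrigid}; the cleaner route is the global one, noting $T\cong TT^*_{2p+1}$.

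The main obstacle is bookkeeping rather than depth: I must correctly translate the two explicit sets $S$ into the $S^-$ formalism and verify that the second case really yields $TT^*_{2p+1}$ and not some other orientation. The delicate point is the modular identification $p+1\equiv -p$ in $\ZZZ_{2p+1}$ together with the convention that $ij$ (for $i<j$) is an arc iff $j-i\le p$ and $j-i\notin S^-$, or $j-i>p$ and $2p+1-(j-i)\in S^-$; getting the direction of the single reversed arc right is where a careless sign error would creep in. Once that identification is secured, both statements follow from results already established earlier in the paper, so no genuinely new machinery is required.
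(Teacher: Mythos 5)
Your first case is correct and matches the spirit of the paper's machinery: for $S=\{1,\dots,p\}$ you get $S^-=\emptyset$, so $T_{0,p}\cong P(p;\emptyset)=TT_{p+1}$ is transitive, hence rigid, and Proposition~\ref{prop:T1orT2isrigid} concludes. (The paper itself proves the theorem in one line by invoking the more general Theorem~\ref{th:S-atmost2}, which covers $|S^-|\le 2$; your direct argument for the transitive case is a legitimate specialization of the same ideas.)

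Your second case, however, contains a genuine error. You claim that $T=T(2p+1;\{1,\dots,p-1,p+1\})$ is obtained from $TT_{2p+1}$ by reversing a single arc, i.e.\ that $T\cong TT^*_{2p+1}$, and hence is rigid. This is impossible: $T$ is a cyclic tournament, so the rotation $i\mapsto i+1$ is a nontrivial automorphism and $T$ is never rigid --- the paper states exactly this when it observes that $D(T)>1$ for every cyclic tournament. Concretely, for $p=2$ the tournament $T(5;\{1,3\})$ differs from $TT_5$ on four pairs (namely $\{0,2\},\{0,4\},\{1,3\},\{2,4\}$), not one; it is vertex-transitive, not almost transitive. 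The object that really is almost transitive is not $T$ but the sub-tournament $T_{0,p}$: since $S^-=\{p\}$, we have $T_{0,p}\cong P(p;\{p\})$, whose only ``backward'' arc is $p\to 0$, so $T_{0,p}\cong TT^*_{p+1}$. By Observation~\ref{obs:almost-rigid} this is rigid for $p+1>3$, i.e.\ $p\ge 3$, and then Proposition~\ref{prop:T1orT2isrigid} gives the conclusion; the cases $p=1$ (directed $3$-cycle, check $\lambda^*$ directly) and $p=2$ (where $T_{3,4}$ has order $2$ and is trivially rigid) must be handled separately. You mention this sub-tournament route as a fallback but explicitly prefer the ``global'' one, which is the wrong choice; as written, the proof of the second case does not stand.
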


\begin{proof}
This directly follows from Theorem~\ref{th:S-atmost2} since we have either $|S^-|=0$ or $|S^-|=1$.
\end{proof}

In fact, we can prove that whenever the set $S^-$ of negative connectors forms an interval of integers,
the corresponding cyclic tournament $T(2q+1;S)$ satisfies the Albertson-Collins Conjecture.

We first prove two lemmas.
The first one says that for every pseudo-cyclic tournament $P$,
 if vertices $0,\dots,i-1$ are fixed by every automorphism of $P$,
$i$ is the first vertex of a plateau whose corresponding negative connectors
 form an interval and every vertex outside the plateau having the same indegree as the vertices
 of the plateau is fixed by every automorphism of $P$, then all vertices of the plateau are also
 fixed by every automorphism of $P$.

\begin{lemma}
Let $P=P(q;N)$ be a pseudo-cyclic tournament, and $i$ a vertex of $P$ with $i<\p2$,
satisfying all the following conditions.
\begin{itemize}
\item[{\rm (i)}] Every vertex $j$, $0\le j<i$, is fixed by every automorphism of $P$.
\item[{\rm (ii)}] $i-1$ is an ascent vertex.
\item[{\rm (iii)}] $(i,i+1,\dots,i+k-1)$ is a plateau of size $k\ge 2$ such that one of
the two following conditions holds.
  \begin{itemize}
  \item[{\rm (a)}] $i+k-1<{\p2}$ and either $[i+1,i+k-1]\subseteq N$, or $[q-i-k+2,q-i]\subseteq N$.
  \item[{\rm (b)}] $i+k-1>{\p2}$ and either $[i+1,\p2]\subseteq N$, or $[\p2+1,q-i]\subseteq N$.
  \end{itemize} 
\item[{\rm (iv)}] $|V_d(P)|\ge k$, for $d=d^-_P(i)$, and, if $|V_d(P)| > k$, then every vertex in
  $V_d(P)\setminus\{i,i+1,\dots,i+k-1\}$ is fixed by every automorphism of $P$.
\end{itemize}
Then, every vertex in $\{i,i+1,\dots,i+k-1\}$ is fixed by every automorphism of $P$.
\label{lem:rigid-plateau}
\end{lemma}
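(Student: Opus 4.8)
The plan is to show that every automorphism $\phi$ of $P$ fixes each of $i,i+1,\dots,i+k-1$. First I would record two structural facts. By hypothesis (i) the vertices $0,\dots,i-1$ are fixed, and by Proposition~\ref{prop:fixed-by-symmetry} their symmetric partners $p-i+1,\dots,p$ are fixed too, so the whole set $F=\{0,\dots,i-1\}\cup\{p-i+1,\dots,p\}$ is pointwise fixed. Next, since $\phi$ preserves indegrees it maps $V_d(P)$ to itself for $d=d_P^-(i)$, and by (iv) it fixes every vertex of $V_d(P)\setminus\{i,\dots,i+k-1\}$; hence $\phi$ permutes the plateau $A=\{i,\dots,i+k-1\}$. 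It therefore suffices to prove $\phi|_A=Id$.

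The heart of the argument is a ``climbing'' induction in the sub-case where the connectors sit just above the plateau, i.e.\ $[i+1,i+k-1]\subseteq N$ (so that $i\notin N$ by (ii)). I would prove by induction on $s=0,1,\dots,k-1$ that $\phi(i+s)=i+s$. Assume $i,\dots,i+s-1$ are fixed; then $\phi$ permutes $\{i+s,\dots,i+k-1\}$, and the integer $s$ is itself a fixed vertex (it is $<i$, hence fixed by (i), or else $s\ge i$, in which case $s\in\{i,\dots,i+s-1\}$ and so is already fixed because $i\ge1$). Comparing $s$ with a not-yet-fixed vertex $i+m$ ($m\ge s$) turns on the difference $(i+m)-s=i+(m-s)$: for $m=s$ this equals $i\notin N$, making $i+s$ an out-neighbour of $s$, while for $m>s$ it lies in $[i+1,i+k-1]\subseteq N$, making $i+m$ an in-neighbour of $s$. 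Thus $i+s$ is the \emph{unique} out-neighbour of the fixed vertex $s$ inside $\{i+s,\dots,i+k-1\}$, and since $\phi$ fixes $s$ it must fix $i+s$. This settles the first sub-case of (iii)(a). For (iii)(b), where the plateau straddles $\lfloor p/2\rfloor$ and is symmetric about it, I would run the same induction only up to the middle vertex (using $[i+1,\lfloor p/2\rfloor]\subseteq N$, or its mirror with the fixed top vertices of $F$ when $[\lfloor p/2\rfloor+1,p-i]\subseteq N$), and then let Proposition~\ref{prop:fixed-by-symmetry} and Corollary~\ref{cor:half-is-fixed} propagate fixedness to the symmetric upper half, so that all of $A$ is fixed.

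The delicate case, and the one I expect to be the main obstacle, is the remaining sub-case of (iii)(a): $[p-i-k+2,p-i]\subseteq N$, equivalently $i,\dots,i+k-1\notin N$. Here condition (ii) controls only the boundary \emph{above} the connector interval ($p-i+1\notin N$), and all differences between a plateau vertex and a vertex of $F$ fall on the wrong side of this interval, so the clean threshold above collapses: the fixed vertices no longer single out the $i+m$ one at a time. My plan for this case is to exploit the internal sub-tournament $P[A]$. Relabelling $i+t\mapsto t$ shows that $P[A]$ is pseudo-cyclic, isomorphic to $P(k-1;\,N\cap\{1,\dots,k-1\})$ (a single arc when $k=2$), and $\phi|_A\in\Aut(P[A])$. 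If this internal tournament is rigid — in particular whenever $N\cap\{1,\dots,k-1\}=\emptyset$, where $P[A]$ is transitive — then $\phi|_A=Id$ and we are done.

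When $P[A]$ is not rigid I would argue that any connector $a\in N\cap\{1,\dots,k-1\}$ creating a nontrivial orbit of $\phi|_A$ is \emph{visible} to $F$: it reappears as a difference $(i+m)-v=a$ between a plateau vertex and a nearby fixed vertex $v$, because $F$ reaches to within distance $1$ of $A$. By Observation~\ref{obs:agree-fixpoint} any two vertices in a common orbit agree on every fixed vertex, which forces $\mathbb{1}_N$ to be periodic across the overlapping windows $\{m+1,\dots,i+m\}$ (and, via the top half of $F$ supplied by Proposition~\ref{prop:fixed-by-symmetry}, across the symmetric windows near the top) with period equal to the orbit spacing; the presence of the connector $a$ contradicts this periodicity, so no nontrivial orbit can exist and again $\phi|_A=Id$. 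Making this incompatibility uniform — simultaneously across the bottom windows, the symmetric top windows, and the configuration in which the plateau exits downward by a descent (so that $p-i-k+1\in N$ and the interval is not maximal) — is the technical core I would have to carry out carefully, and is where I expect the real work of the proof to lie.
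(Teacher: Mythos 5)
Your setup (the pointwise-fixed set $F$, the fact that any automorphism $\phi$ permutes the plateau $A$) and your climbing induction for the subcase $[i+1,i+k-1]\subseteq N$ are correct and coincide with the paper's own argument. But the proposal has a genuine gap exactly where you flag it: the subcase $[p-i-k+2,p-i]\subseteq N$ is never proved, only sketched, and your reduction to the internal tournament $P[A]$ points in the wrong direction --- when $P[A]$ is not rigid, it is external fixed vertices, not the internal structure of the plateau, that must break the symmetry, and your ``visibility/periodicity'' plan for exploiting them is not carried out (nor is it clear it could be made to work uniformly). The paper's treatment of this subcase is in fact short. If $N$ contains no connector $a<i$, then $\min(N)>i+k-1$, so the plateau induces a transitive tournament; an orbit of size at least $3$ would consist of plateau vertices only and would induce a regular sub-tournament (Observation~\ref{obs:orbit-regular}) inside a transitive one, which is impossible, so all plateau vertices are fixed. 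Otherwise, let $a$ be the \emph{largest} connector with $a<i$; then $[a+1,i+k-1]\cap N=\emptyset$, and your own climbing induction goes through with the anchor $j-a$ in place of $j-i$: for $j=i+s$, the vertex $j-a$ is fixed (it is either $<i$, or an already-fixed plateau vertex since $a\ge 1$), $j$ is an in-neighbour of $j-a$ (the difference is $a\in N$), while every $i+m$ with $m>s$ is an out-neighbour of $j-a$ (the difference $a+m-s$ lies in $[a+1,i+k-1]$, hence outside $N$). So $j$ disagrees with all later plateau vertices on a fixed vertex and, by Observation~\ref{obs:agree-fixpoint}, cannot share an orbit with any of them. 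No periodicity argument is needed.

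There is also a secondary flaw in your plan for case (iii)(b). Write the plateau as $(\frac{p}{2}-q,\dots,\frac{p}{2}+q)$ and take $[i+1,\frac{p}{2}]\subseteq N$; the plateau property then forces $[\frac{p}{2}+1,p-i]\cap N=\emptyset$. Consequently, at step $s$ of your induction the anchor $s$ does \emph{not} single out $i+s$: every upper-half vertex $i+m$ with $m-s>q$ has difference $(i+m)-s\in[\frac{p}{2}+1,p-i]$, so it too is an out-neighbour of $s$. Since $\phi$ permutes \emph{all} non-fixed plateau vertices, not just the lower half, you cannot conclude $\phi(i+s)=i+s$, and the strategy ``fix the lower half first, then symmetrize via Proposition~\ref{prop:fixed-by-symmetry}'' cannot get started. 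The paper instead proves two one-sided orbit statements --- no two vertices of $\{\frac{p}{2}-q,\dots,\frac{p}{2}\}$ share an orbit (by the disagreement argument above), and, anchoring near $p$ via Proposition~\ref{prop:fixed-by-symmetry}, no two vertices of $\{\frac{p}{2},\dots,\frac{p}{2}+q\}$ share an orbit --- and then invokes the fact that orbits of tournament automorphisms have size $1$ or at least $3$: a nontrivial orbit of plateau vertices would have to contain two vertices in the same half, a contradiction. So both halves must be handled simultaneously through orbits, not sequentially.
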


\begin{proof}
Suppose that $P$ and $i$ satisfy the conditions of the lemma.
According to condition (iv), it is enough to prove that no two vertices in $\{i,i+1,\dots,i+k-1\}$
can belong to the same orbit of any automorphism of $P$, 
since this implies that every such vertex is fixed by every automorphism of $P$.
Observe that $i\notin N$ since $i-1$ is an ascent-vertex.

We consider two cases, according to the position of the plateau.
\begin{enumerate}
\item $i+k-1<{\p2}$.\\
We consider two subcases, depending on the corresponding interval of negative connectors.
\begin{enumerate}
\item $[i+1,i+k-1]\subseteq N$.\label{item:utile1}\\
In that case, $P$ contains arcs $0i$, $(i+1)0$, $\dots$, $(i+k-1)0$,
so that 
 $i$ and all other vertices of the plateau disagree on 0,
and thus, by Observation~\ref{obs:agree-fixpoint}, $i$ and any other vertex of the plateau
cannot belong to the same orbit 
of any automorphism of~$P$.
Similarly, for every $j=i+1,\dots,i+k-1$ (in that order),
$j$ and 
vertices $j+1,\dots,i+k-1$ disagree on $j-i$ 
(which is fixed by every automorphism of~$P$),
so that, by Observation~\ref{obs:agree-fixpoint}, $j$  and any other vertex of the plateau
cannot belong to the same orbit of any automorphism of~$P$.
Therefore, all vertices of the plateau are fixed by every automorphism of~$P$.

\item $[q-i-k+2,q-i]\subseteq N$.\label{item:utile2}\\
If $N$ contains no negative connector $a<i$, then the sub-tournament
induced by the plateau is transitive
(since we then have $\min(N)>i+k-1$), so that all  vertices 
of the plateau are fixed by every automorphism of~$P$.

Otherwise, let $a$ be the largest negative connector such that $a<i$.
We then have $[a+1,i+k-1]\cap N=\emptyset$, so that 
$i$  and all other vertices of the plateau disagree on $i-a$,
and thus, by Observation~\ref{obs:agree-fixpoint}, 
$i$ and any other vertex of the plateau cannot belong to the same orbit of any automorphism of~$P$.
Similarly, for every $j=i+1,\dots,i+k-1$ (in that order),
$j$~and 
vertices $j+1,\dots,i+k-1$ disagree on $j-a$ (which is fixed by every automorphism of~$P$),
so that, by Observation~\ref{obs:agree-fixpoint}, 
$j$~and any other vertex of the plateau cannot belong to the same orbit of any automorphism of~$P$.
Therefore, all vertices of the plateau are fixed by every automorphism of~$P$.
\end{enumerate}

\item $i+k-1>{\p2}$.\\
In that case, $q$ is necessarily even.
Since $\IS(P)$ is centrally symmetric, we have 
$(i,i+1,\dots,i+k-1)=({\p2}-r,\dots,{\p2}+r)$, for some $r\ge 1$.

If $\left[{\p2}-r+1,{\p2}\right]\subseteq N$, then,
using the same proof as in Case~\ref{item:utile1} above, we get that no two vertices in 
$\{{\p2}-r,\dots,{\p2}\}$ can belong
to the same orbit of any automorphism of~$P$.
Symmetrically, starting with vertex $q$ instead of vertex 0
(recall that $q-j$, $j\ge 0$, is fixed whenever $j$ is fixed, 
by Proposition~\ref{prop:fixed-by-symmetry}), we can prove similarly that no two vertices in 
$\{{\p2},\dots,{\p2}+r\}$ can belong
to the same orbit of any automorphism of~$P$.
Therefore, no vertex of the plateau can belong to an orbit of size at least 3
of any automorphism of~$P$, which implies
that every vertex in $\{{\p2}-r,\dots,{\p2}+r\}$
is fixed by every automorphism of~$P$ and we are done.

If $\left[{\p2}+1,{\p2}+r\right]\subseteq N$, then we proceed as in Case~\ref{item:utile2} above: 
if the vertices of the plateau do not induce a transitive tournament, then
(i) using the largest negative connector 
$a<{\p2}-r$, we get that no two vertices in 
$\{{\p2}-r,\dots,{\p2}\}$ can belong
to the same orbit of any automorphism of~$P$,
and (ii) symmetrically, using the smallest negative connector 
$a'>{\p2}+r$, we get that
no two vertices in 
$\{{\p2},\dots,{\p2}+r\}$ can belong
to the same orbit of any automorphism of~$P$
and thus, again, every vertex in $\{{\p2}-r,\dots,{\p2}+r\}$
is fixed by every automorphism of~$P$.
\end{enumerate}
This concludes the proof.
\end{proof}

The second lemma says that every pseudo-cyclic tournament containing either ascent-vertices and no 
descent-vertices, or descent-vertices and no ascent-vertices,
is rigid whenever every plateau, if any, is produced by an interval of negative connectors.

%%%%%%%%%%%%%%%%%%%%%%%%%%%%%%%%%%%%%%
\begin{figure}
\begin{center}
%%%%%%%%%%%%%%%%%%%%%%%%%%%%%%%%% (a)
\begin{tikzpicture}[domain=0:11,x=1cm,y=1cm]
% axes
\draw[very thick,->] (-0.5,0) -- ++(4,0);
\draw[very thick,->] (0,-0.5) -- ++(0,3.5);
% légende des axes
\node[right]at(3.5,0){$i$};
\node[above]at(0,3){$d^-_P(i)$};
% sommets de la courbe
\SOMMET{0,2}; \SOMMET{0.5,2.5}; \SOMMET{1.5,1.5}; \SOMMET{2.5,0.5}; \SOMMET{3,1}; 
% courbe
\draw[very thick,dotted] (0,2)--(0.5,2.5);
\draw[very thick,dotted] (0.5,2.5)--(2.5,0.5);
\draw[very thick,dotted] (2.5,0.5)--(3,1);
% mediane et lignes verticales / horizontales
\draw[thick,dashed] (1.5,0)--(1.5,1.5);
\draw[thick,dashed] (0.5,0)--(0.5,2.5);
\draw[thick,dashed] (2.5,0)--(2.5,0.5);
%\draw[thick,dashed] (0,2.5)--(0.5,2.5);
\draw[thick,dashed] (0,1)--(3,1);
% legende axe x
\node[below] at(1.5,0){$\p2$};
\node[below] at(0.5,0){$a-1$};
\node[below] at(2.5,0){$b$};
\node[left]at(0,2){$b-a+1$};
\node[left]at(0,1){$q-b+a-1$};
% legende
\node[below]at(0.64,-1){(a) $a+b=q+1$};
\end{tikzpicture}
\hskip 1cm
%%%%%%%%%%%%%%%%%%%%%%%%%%%%%%%%% (a)
\begin{tikzpicture}[domain=0:11,x=1cm,y=1cm]
% axes
\draw[very thick,->] (-0.5,0) -- ++(5,0);
\draw[very thick,->] (0,-0.5) -- ++(0,3.5);
% légende des axes
\node[right]at(4.5,0){$i$};
\node[above]at(0,3){$d^-_P(i)$};
% sommets de la courbe
\SOMMET{0,2.5}; \SOMMET{1,2.5}; \SOMMET{2,1.5}; \SOMMET{3,0.5}; \SOMMET{4,0.5}; 
% courbe
\draw[very thick,dotted] (0,2.5)--(1,2.5);
\draw[very thick,dotted] (1,2.5)--(3,0.5);
\draw[very thick,dotted] (3,0.5)--(4,0.5);
% mediane et lignes verticales / horizontales
\draw[thick,dashed] (2,0)--(2,1.5);
\draw[thick,dashed] (1,0)--(1,2.5);
\draw[thick,dashed] (3,0)--(3,0.5);
%\draw[thick,dashed] (0,2.5)--(0.5,2.5);
\draw[thick,dashed] (0,0.5)--(3,0.5);
% legende axe x
\node[below] at(2,0){$\p2$};
\node[below] at(1,0){$q-b$};
\node[below] at(3,0){$b$};
\node[left]at(0,2.5){$b$};
\node[left]at(0,0.5){$q-b$};
% legende
\node[below]at(1.7,-1){(b) $a=1$};
\end{tikzpicture}
\mbox{}\\
\vskip 0.5cm
%%%%%%%%%%%%%%%%%%%%%%%%%%%%%%%%% (c)
\begin{tikzpicture}[domain=0:11,x=1cm,y=1cm]
% axes
\draw[very thick,->] (-0.5,0) -- ++(6,0);
\draw[very thick,->] (0,-0.5) -- ++(0,3.5);
% légende des axes
\node[right]at(5.5,0){$i$};
\node[above]at(0,3){$d^-_P(i)$};
% sommets de la courbe
\SOMMET{0,2}; \SOMMET{0.5,2.5}; \SOMMET{1.5,2.5}; \SOMMET{2.5,1.5}; \SOMMET{3.5,0.5}; \SOMMET{4.5,0.5}; \SOMMET{5,1};
% courbe
\draw[very thick,dotted] (0,2)--(0.5,2.5);
\draw[very thick,dotted] (0.5,2.5)--(1.5,2.5);
\draw[very thick,dotted] (1.5,2.5)--(3.5,0.5);
\draw[very thick,dotted] (3.5,0.5)--(4.5,0.5);
\draw[very thick,dotted] (4.5,0.5)--(5,1);
% mediane et lignes verticales / horizontales
\draw[thick,dashed] (2.5,0)--(2.5,1.5);
%\draw[thick,dashed] (0.5,0)--(0.5,2.5);
%\draw[thick,dashed] (4.5,0)--(4.5,0.5);
%\draw[thick,dashed] (0,2.5)--(0.5,2.5);
\draw[thick,dashed] (0,1)--(5,1);
% legende axe x
\node[below] at(2.5,0){$\p2$};
%\node[below] at(0.5,0){$a-1$};
%\node[below] at(4.5,0){$q-a+1$};
\node[left]at(0,2){$b-a+1$};
\node[left]at(0,1){$q-b+a-1$};
% legende
\node[below]at(1.7,-1){(c) $1<a$, $b<q$ and $a+b\neq q+1$};
\end{tikzpicture}
\caption{Indegree paths for the proof of Theorem~\ref{th:interval}, case $b-a+1>\p2$}
\label{fig:Nab-1}
\end{center}
\end{figure}
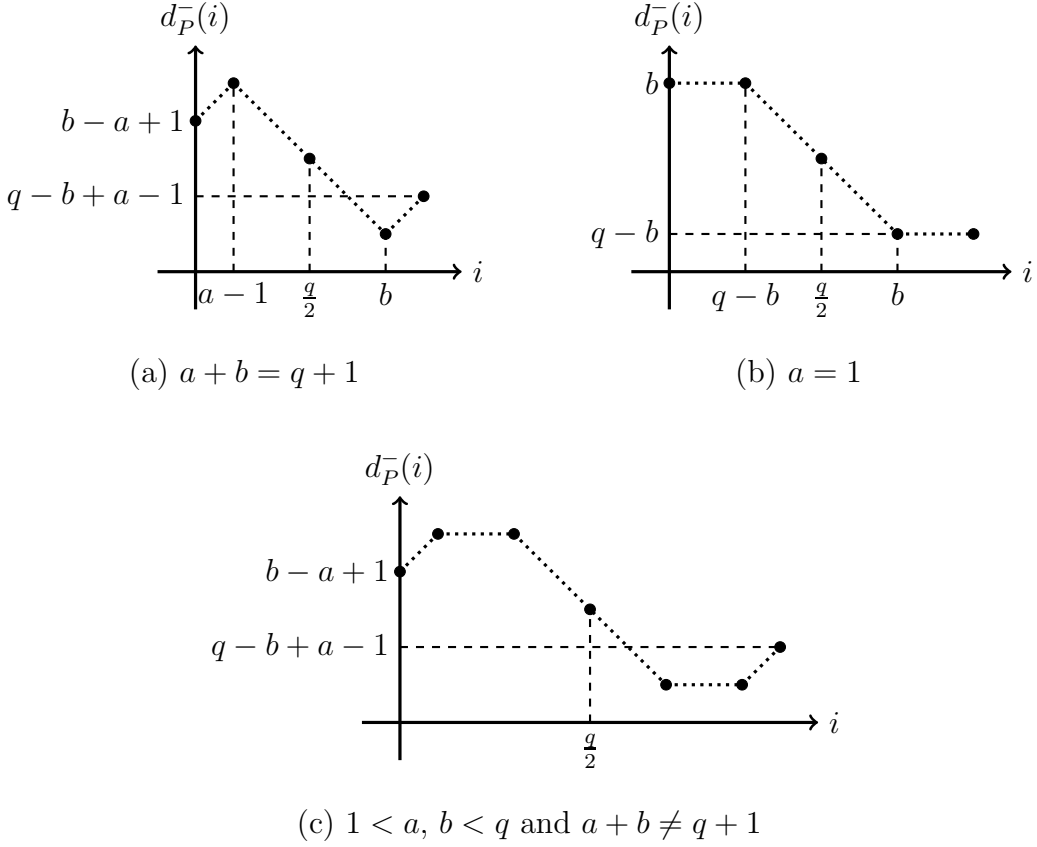

%%%%%%%%%%%%%%%%%%%%%%%%%%%%%%%%%%%%%%

\begin{lemma}
If $P=P(q;N)$ is a pseudo-cyclic tournament such that 
\begin{itemize}
\item[{\rm (i)}] $\alpha(P)+\delta(P)>0$,
\item[{\rm (ii)}] $\alpha(P)\delta(P)=0$, 
\item[{\rm (iii)}] for every plateau $(i,i+1,\dots,i+k-1)$ of size $k\ge 3$, with $i+k-1<{\p2}$,
either $[i+1,i+k-1]\subseteq N$, or $[q-i-k+2,q-i]\subseteq N$, and
\item[{\rm (iv)}] for every plateau $({\p2}-q,\dots,{\p2}+q)$, $q$ even, $q\ge 1$,
of size $2q+1$,
either $\left[{\p2}-q+1,{\p2}\right]\subseteq N$, 
or $\left[{\p2}+1,{\p2}+q\right]\subseteq N$,
\end{itemize}
then $P$ is rigid.
\label{lem:ascent-plateau}
\end{lemma}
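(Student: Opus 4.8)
The plan is to reduce to the case $\delta(P)=0$ and then fix the vertices one \emph{block} at a time from left to right, using Lemma~\ref{lem:rigid-plateau} as the engine. First, by Observation~\ref{obs:converse-pseudo}, passing from $P$ to its converse $P^c$ interchanges ascent- and descent-vertices while leaving the plateau-vertices (hence the plateaus themselves) unchanged, and it turns each interval hypothesis (iii)/(iv) into the same kind of hypothesis for $P^c$, the two admissible interval-forms merely swapping roles. Since $\Aut(P)=\Aut(P^c)$, we may assume without loss of generality that $\delta(P)=0$, so that by (i) we have $\alpha(P)>0$. Then $\IS(P)$ is nondecreasing, so each level set $V_d(P)$ is a set of consecutive vertices; these level sets are exactly the maximal runs of constant indegree, that is, either singletons or plateaus of size $\ge 2$. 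Our goal, by Corollary~\ref{cor:half-is-fixed}, is to show that every vertex $i\le\lfloor\p2\rfloor$ is fixed by every automorphism of $P$.

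I would prove this by induction on the blocks (maximal constant-indegree runs) $B_0,B_1,\dots$ read from left to right, the hypothesis being that all vertices in earlier blocks are fixed. If the current block $B_t=\{i,\dots,i+k-1\}$ is a singleton ($k=1$) with $i\le\lfloor\p2\rfloor$, then by monotonicity and central symmetry $V_{d_P^-(i)}(P)=\{i\}$, so $i$ is fixed. If $B_t$ is a plateau of size $k\ge 2$ with $i\ge 1$, I would apply Lemma~\ref{lem:rigid-plateau}: hypothesis (i) is the induction hypothesis; (ii) holds because $\delta(P)=0$ and $B_t$ is a maximal run, so $i-1$ is forced to be an ascent-vertex; the interval hypothesis (iii) of Lemma~\ref{lem:rigid-plateau} is precisely condition (iii) of the present lemma when $k\ge 3$ (and is automatic when $k=2$), while (iv) holds with equality since $V_{d_P^-(i)}(P)=B_t$ exactly. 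Lemma~\ref{lem:rigid-plateau} then fixes every vertex of $B_t$.

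Two points require care and constitute the main obstacle. First, the very first block $B_0=\{0,\dots,k-1\}$, if it is a plateau, cannot be handled by Lemma~\ref{lem:rigid-plateau}, since there is no predecessor $i-1$ to serve as an ascent-vertex anchor. Here I would argue directly: the interval hypothesis forces either $[1,k-1]\subseteq N$ or $[1,k-1]\cap N=\emptyset$, and in either case every difference $v-u$ between two vertices of $B_0$ lies in $[1,k-1]$ and so is uniformly inside or outside $N$; hence $P[B_0]$ is a transitive tournament, thus rigid, and all of $0,\dots,k-1$ are fixed. Second, I must correctly classify the plateaus that meet the center. When $p$ is odd, Proposition~\ref{prop:indegree-difference} together with $\delta(P)=0$ shows that $\lfloor\p2\rfloor$ is an ascent-vertex, so no plateau straddles the center and every processed plateau satisfies $i+k-1<\p2$ (case (a) of Lemma~\ref{lem:rigid-plateau}). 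When $p$ is even, any plateau containing vertices on both sides of $\p2$ must, by central symmetry, have the form $\{\p2-q,\dots,\p2+q\}$; this is the unique block reaching past the center, and it is covered by case (b) of Lemma~\ref{lem:rigid-plateau} through hypothesis (iv), its left endpoint $\p2-q$ being an ascent-vertex (the equality $\p2-q=0$ would force $P$ to be cyclic by Observation~\ref{obs:pseudo-is-cyclic}, contradicting $\alpha(P)>0$).

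Once every vertex $i\le\lfloor\p2\rfloor$ is fixed, Corollary~\ref{cor:half-is-fixed} yields the rigidity of $P$, completing the proof. The genuine crux is the bookkeeping of the first block and of the central plateau; away from these, the interval hypotheses (iii) and (iv) are tailored exactly so that Lemma~\ref{lem:rigid-plateau} applies verbatim to each successive block.
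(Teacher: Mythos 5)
Your proof is correct and takes essentially the same route as the paper's: reduce to $\delta(P)=0$ by passing to the converse (Observation~\ref{obs:converse-pseudo}), fix the vertices left-to-right by induction --- the initial plateau directly, via transitivity of the induced sub-tournament, each subsequent block via Lemma~\ref{lem:rigid-plateau} --- and conclude with Corollary~\ref{cor:half-is-fixed}; your extra bookkeeping (singleton blocks, the classification of center-straddling plateaus by parity) only makes explicit what the paper leaves implicit. The one slip is the phrase calling the left endpoint ${\p2}-q$ of the central plateau an ascent-vertex: it is a plateau-vertex, and what is actually needed (and what your general block argument already supplies once ${\p2}-q\ge 1$ is secured) is that its predecessor ${\p2}-q-1$ is an ascent-vertex.
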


\begin{proof}
If $\pi(P)=0$, then $P$ is a transitive tournament and is thus rigid.

Assume now that $\pi(P)>0$ and
suppose first that $P$ has no descent-vertices, that is, $\delta(P)=0$ and $\alpha(P)>0$.
We will prove by induction on $i$ that every vertex $i\le\left\lfloor{\p2}\right\rfloor$
is fixed by every automorphism of~$P$, 
so that the result follows by Corollary~\ref{cor:half-is-fixed}.

If 0 is an ascent-vertex or the first vertex of a plateau of size 2, 
then 0 is fixed by every automorphism of~$P$.
If 0 is the first vertex of a plateau $(0,\dots,k-1)$ of size $k\ge 3$,
then the sub-tournament induced by this
plateau is transitive since we have either $[1,k-1]\in N$ or $[q-k+2,q]\in N$. 
Therefore, all vertices $0,\dots,k-1$  are fixed by every automorphism of~$P$.

Suppose now that all vertices $0,\dots,i-1<\left\lfloor{\p2}\right\rfloor$
are fixed by every automorphism of~$P$.
If $i$ is an ascent-vertex, then $i$ is fixed by every automorphism of~$P$,
otherwise $i$ is the first vertex of a plateau and 
the result follows from Lemma~\ref{lem:rigid-plateau}.

The case $\alpha(P)=0$ and $\delta(P)>0$ follows from Observation~\ref{obs:converse-pseudo}, 
considering $P^c$ instead of~$P$ (we then have $\delta(P^c)=0$ and $\alpha(P^c)>0$).
\end{proof}

Note that conditions (iii) and (iv) in Lemma~\ref{lem:ascent-plateau}
are both necessary. For instance, the pseudo-cyclic tournament $P=P(5;\{2,5\})$
has only ascent-vertices and plateau-vertices (we have $\IS(P)=(2,2,2,3,3,3)$)
while $(0,1,2)(3,4,5)$ is an automorphism of~$P$ so that $P$ is not rigid,
and thus condition (iii) is necessary.
On the other hand, the pseudo-cyclic tournament $P=P(8;\{2,3,5\})$
has only ascent-vertices and plateau-vertices, but a single central plateau
 (we have $\IS(P)=(3,4,\dots,4,5)$)
while $(1,4,7)$ is an automorphism of~$P$ so that $P$ is not rigid,
and thus condition (iv) is necessary.

We are now able to prove the following result.

\begin{theorem}
\label{th:interval}
If $T=T(2q+1;S)$ is a cyclic tournament, such that $S^-=[a,b]$, $1\le a\le b\le q$,
then $T$ satisfies the Albertson-Collins Conjecture.
\end{theorem}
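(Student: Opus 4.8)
The plan is to apply Proposition~\ref{prop:T1orT2isrigid}: it suffices to prove that one of the two pseudo-cyclic sub-tournaments $P_1=T_{0,p}\cong P(p;[a,b])$ or $P_2=T_{p+1,2p}\cong P(p-1;[a,\min(b,p-1)])$ is rigid. So I would first analyse the shape of the indegree path of a pseudo-cyclic tournament $P(q;[a,b])$ whose connector set is an interval. By Proposition~\ref{prop:indegree} a vertex $i$ is a descent-vertex exactly when both $i+1$ and $q-i$ belong to $[a,b]$; hence descent-vertices exist \emph{if and only if} the interval straddles the midpoint $\frac{q+1}{2}$, and when they do they form a single block centred in the middle. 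Since the indegree path is centrally symmetric, this means each half of the path is unimodal: it rises through ascents and plateaus to a peak and then falls through the central descent block. The whole argument splits according to the dichotomy ``the interval lies off the centre'' versus ``the interval straddles the centre''.

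When $[a,b]$ lies entirely in the lower half ($b<\frac{q+1}{2}$) or entirely in the upper half ($a>\frac{q+1}{2}$), the corresponding tournament has $\delta=0$. Here I would check that every maximal plateau of size at least~$3$ is produced by the interval $[a,b]$: the lower-half plateau satisfies one of $[i+1,i+k-1]\subseteq N$ or $[q-i-k+2,q-i]\subseteq N$, so conditions~(iii) and~(iv) of Lemma~\ref{lem:ascent-plateau} are met. Provided $\alpha>0$, Lemma~\ref{lem:ascent-plateau} gives rigidity and Proposition~\ref{prop:T1orT2isrigid} finishes this case. The only configuration forcing $\alpha=\delta=0$ is $N=[1,\frac{p}{2}]$ (or its mirror, with $p$ even), where $P_1$ is itself cyclic; this I would dispatch either through Lemma~\ref{lem:sufficient}(2) or by switching to $P_2$, whose order has the opposite parity and which is therefore no longer cyclic.

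The hard part is when $[a,b]$ straddles the centre, for then $\alpha>0$ and $\delta>0$ simultaneously and Lemma~\ref{lem:ascent-plateau} does not apply. Using unimodality together with Proposition~\ref{prop:closeto} (which confines equal-indegree vertices to a short window), I would bound the sizes of the indegree classes $V_d$. Every class of size at most~$2$ is fixed pointwise, since an automorphism of a tournament has odd order and thus acts trivially on any $1$- or $2$-element class; each surviving class of size~$3$ is then broken by a vertex already fixed on which its three members do not all agree (Observation~\ref{obs:agree-fixpoint}), so no orbit of size~$3$ can occur, and $P_1$ is rigid. The genuine obstacle is the fully degenerate configuration, typified by $P(8;\{2,3,4,5\})$: when $|N|=\frac{p}{2}$ with $p$ even one has $\{0,\frac{p}{2},p\}\subseteq V_{p/2}$ and, worse, \emph{every} indegree class can have size~$3$, so there is no fixed vertex to seed the argument and $P_1$ is actually \emph{not} rigid (it admits, for instance, the automorphism $(0\,4\,8)(1\,2\,3)(5\,6\,7)$). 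The resolution is to pass to $P_2$: this coincidence requires the sub-tournament to have even order-parameter $q$ with $|N|=\frac{q}{2}$, and since $P_2$ has $q=p-1$ of the opposite parity the coincidence cannot recur, its classes shrink to size at most~$2$, and $P_2$ is rigid. Making this parity switch work in every degenerate instance, and verifying that size-$3$ classes are always broken by an available fixed vertex, is the step I expect to demand the most care.
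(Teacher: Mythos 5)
Your reduction via Proposition~\ref{prop:T1orT2isrigid} and your treatment of the off-centre case (Lemma~\ref{lem:ascent-plateau}, with the cyclic exception $N=[1,\frac{p}{2}]$ handled by switching halves) do match the paper. The gap is in the straddling case $a\le\frac{p}{2}<b$, which is the actual content of the theorem, and the one concrete claim you make there is false. The permutation $(0\,4\,8)(1\,2\,3)(5\,6\,7)$ is \emph{not} an automorphism of $P(8;\{2,3,4,5\})$: since $3\in N$, the arc between $0$ and $3$ is $3\to 0$, and its image would have to be the arc $1\to 4$, but $4-1=3\in N$ forces the arc $4\to 1$. In fact $P(8;[2,5])$ is rigid, exactly as the paper's case $b-a+1=\frac{p}{2}$ asserts: vertex $4$ is the unique vertex of $V_4=\{0,4,8\}$ having exactly one in-neighbour in $V_5=\{1,2,3\}$ (namely $3$, whereas $0$ and $8$ each have two), so every automorphism fixes $4$, hence also $0$ and $8$ (no automorphism can swap the ends of the arc $0\to 8$); then the classes $\{1,2,3\}$ and $\{5,6,7\}$ are broken because their members disagree on the fixed vertex $0$ (Observation~\ref{obs:agree-fixpoint}). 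So the ``fully degenerate non-rigid configuration'' around which your strategy is organised does not exist as described; conversely, producing a first fixed vertex ``to seed the argument'' is precisely the nontrivial step, and it is what the paper spends most of its proof on (e.g., excluding a $3$-cycle orbit $\{0,x,y\}$ by explicit arc computations, then propagating fixedness through plateaus with Lemma~\ref{lem:rigid-plateau} and concluding with Corollary~\ref{cor:half-is-fixed}).

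Your fallback for this case --- pass to $P_2$, of odd order parameter $q=p-1$, where ``the coincidence cannot recur, its classes shrink to size at most~$2$, and $P_2$ is rigid'' --- is also unsound. Odd $q$ only excludes the single coincidence $d^-_{P_2}(0)=d^-_{P_2}(q)$ (equivalently $|N|=\frac{q}{2}$); it does not bound the sizes of the indegree classes. For instance $P(5;[2,3])$ has indegree sequence $(2,3,3,2,2,3)$, hence $V_2=\{0,3,4\}$ and $V_3=\{1,2,5\}$, both of size $3$, despite the interval connector set and odd $q$. (This tournament happens to be rigid, but because both classes induce transitive sub-tournaments, i.e.\ by Lemma~\ref{lem:Vd-rigid} --- not by the size bound you invoke.) So in the straddling case your proposal establishes rigidity of neither $P_1$ nor $P_2$; the hard subcases $b-a+1>\frac{p}{2}$, $b-a+1=\frac{p}{2}$ and $b-a+1<\frac{p}{2}$, each requiring its own disagreement and orbit-exclusion arguments in the paper, remain entirely open in your write-up, deferred as ``the step demanding the most care.''
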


\begin{proof}
We will prove that $P=P(q;[a,b])$ is rigid whenever $q$ is even.
Therefore, considering either $T_{0,q}$ or $T_{q+1,2q}$ depending
on the parity of $q$, the result will follow by Proposition~\ref{prop:T1orT2isrigid}.

If $b\le a+1$, the result follows from Theorem~\ref{th:S-atmost2}, so that we can assume
$b\ge a+2$.
Suppose now that ${\p2}\ge b$ or ${\p2}<a$.
If $[a,b]\neq\left[1,{\p2}\right]$ and $[a,b]\neq\left[{\p2}+1,q\right]$,
then $T_{0,q}$ is rigid by Lemma~\ref{lem:ascent-plateau}.
Otherwise, $T_{q+1,2q}$ is rigid by Lemma~\ref{lem:ascent-plateau}
(since $\IS(T_{q+1,2q})=({\p2},\dots,{\p2},{\p2}-1,\dots,{\p2}-1)$ 
if $[a,b]=\left[1,{\p2}\right]$,
and $\IS(T_{q+1,2q})=({\p2}-1,\dots,{\p2}-1,{\p2},\dots,{\p2})$ 
if $[a,b]=\left[{\p2}+1,q\right]$).

From now on, we thus assume that $a\le {\p2}< b$.
We consider three subcases, depending on the size of $[a,b]$.

\begin{enumerate}
\item $b-a+1>{\p2}$.\\
In that case, $d^-_P(0)=b-a+1>q-(b-a+1)=d^-_P(q)$.
We consider three subcases, corresponding to the three possible forms
of the indegree path of~$P$.
\begin{enumerate}
\item $a+b=q+1$.\\
In that case, no vertex in $P$ can be a plateau-vertex, and thus
the indegree path of~$P$ contains an ascent, a descent and an ascent
(see Figure~\ref{fig:Nab-1}(a)). 
Therefore, each set $V_d(P)$ has cardinality at most 2 and $P$ is rigid by Lemma~\ref{lem:Vd-rigid}.

\item $a=1$ or $b=q$.\\
In that case, vertices $0,\dots,q-b-1$ (if $a=1$)
or $0,\dots,a-1$ (if $q=b$)
are plateau-vertices, and thus
the indegree path of~$P$ contains a plateau of size $q-b+1$ or $a+1$, a descent and a plateau
of size $q-b+1$ or $a+1$ (see Figure~\ref{fig:Nab-1}(b) for the case $a=1$). 
In both cases, the sub-tournaments induced by these plateaus are transitive, so
that every vertex of~$P$ is fixed by every automorphism of~$P$ by Observation~\ref{obs:orbit-regular},
and $P$ is rigid
by Corollary~\ref{cor:half-is-fixed}.

\item $1<a$, $b<p$ and $a+b\neq q+1$.\label{item:utile}\\
In that case, the indegree path of~$P$ contains an ascent, a plateau, a descent, a plateau
and an ascent (see Figure~\ref{fig:Nab-1}(c)). 
Every vertex not belonging to a plateau belongs to a set $V_d(P)$
with cardinality~2 and is thus fixed by every automorphism of~$P$.
By Lemma~\ref{lem:rigid-plateau}, every vertex belonging to a plateau is also
fixed by every automorphism of~$P$.
Using now Corollary~\ref{cor:half-is-fixed}, we get that $P$ is rigid.

\end{enumerate}

%%%%%%%%%%%%%%%%%%%%%%%%%%%%%%%%%%%%%%
\begin{figure}
\begin{center}
%%%%%%%%%%%%%%%%%%%%%%%%%%%%%%%%% (a)
\begin{tikzpicture}[domain=0:11,x=1cm,y=1cm]
% axes
\draw[very thick,->] (-0.5,0) -- ++(5,0);
\draw[very thick,->] (0,-0.5) -- ++(0,3.5);
% légende des axes
\node[right]at(4.5,0){$i$};
\node[above]at(0,3){$d^-_P(i)$};
% sommets de la courbe
\SOMMET{0,1.5}; \SOMMET{1,2.5}; \SOMMET{2,1.5}; \SOMMET{3,0.5}; \SOMMET{4,1.5}; 
% courbe
\draw[very thick,dotted] (0,1.5)--(1,2.5);
\draw[very thick,dotted] (1,2.5)--(3,0.5);
\draw[very thick,dotted] (3,0.5)--(4,1.5);
% mediane et lignes verticales / horizontales
\draw[thick,dashed] (1,0)--(1,2.5);
\draw[thick,dashed] (2,0)--(2,1.5);
\draw[thick,dashed] (3,0)--(3,0.5);
\draw[thick,dashed] (0,1.5)--(4,1.5);
% legende axe x
\node[below] at(2,0){$\p2$};
\node[below] at(1,0){$a-1$};
\node[below] at(3,0){$b$};
\node[left]at(0,1.5){$b-a+1$};
%\node[left]at(0,1){$q-b+a-1$};
% legende
\node[below]at(1.5,-1){(a) $a+b=q+1$};
\end{tikzpicture}
\mbox{}\\
\vskip 0.5cm
%%%%%%%%%%%%%%%%%%%%%%%%%%%%%%%%% (b)
\begin{tikzpicture}[domain=0:11,x=1cm,y=1cm]
% axes
\draw[very thick,->] (-0.5,0) -- ++(7,0);
\draw[very thick,->] (0,-0.5) -- ++(0,3.5);
% légende des axes
\node[right]at(6.5,0){$i$};
\node[above]at(0,3){$d^-_P(i)$};
% sommets de la courbe
\SOMMET{0,1.5}; \SOMMET{1,2.5}; \SOMMET{2,2.5}; \SOMMET{3,1.5}; \SOMMET{4,0.5}; \SOMMET{5,0.5}; \SOMMET{6,1.5};
% courbe
\draw[very thick,dotted] (0,1.5)--(1,2.5);
\draw[very thick,dotted] (1,2.5)--(2,2.5);
\draw[very thick,dotted] (2,2.5)--(4,0.5);
\draw[very thick,dotted] (4,0.5)--(5,0.5);
\draw[very thick,dotted] (5,0.5)--(6,1.5);
% mediane et lignes verticales / horizontales
\draw[thick,dashed] (3,0)--(3,1.5);
\draw[thick,dashed] (1,0)--(1,2.5);
\draw[thick,dashed] (5,0)--(5,0.5);
%\draw[thick,dashed] (0,2.5)--(0.5,2.5);
\draw[thick,dashed] (0,1.5)--(6,1.5);
% legende axe x
\node[below] at(3,0){$\p2$};
\node[below] at(1,0){$a-1$};
\node[below] at(5,0){$q-a+1$};
\node[left]at(0,1.5){$b-a+1$};
%\node[left]at(0,1){$q-b+a-1$};
% legende
\node[below]at(2.5,-1){(b) $a+b\neq q+1$};
\end{tikzpicture}
\caption{Indegree paths for the proof of Theorem~\ref{th:interval}, case $b-a+1=\p2$}
\label{fig:Nab-2}
\end{center}
\end{figure}

%%%%%%%%%%%%%%%%%%%%%%%%%%%%%%%%%%%%%%

\item $b-a+1={\p2}$.\\
Note first that since $a\le {\p2}< b$, we necessarily have $a>1$ and $b<p$.
The indegree path of~$P$
contains an ascent, possibly a plateau, a descent, possibly a plateau and an ascent.
More precisely, $\IS(P)$ contains a plateau if and only if $a+b\neq q+1$
(see Figure~\ref{fig:Nab-2}(a) and (b)).
Moreover, since $d^-_P(0)=d^-_P(q)=\p2$, we get $V_\p2(P)=\{0,\p2,q\}$
(and the sub-tournament induced by $V_\p2(P)$ is a 3-cycle).
Only two other sets $V_d(P)$ may have cardinality at least~3, namely those
two sets corresponding to the two plateaus.

We consider three subcases, depending on the values of $a$ and $b$.

\begin{enumerate}
\item $a>2$ and $b<q-1$.\\
We first prove that every vertex in $V_\p2(P)=\{0,\p2,q\}$ is fixed by every automorphism of~$P$.
Since the vertex 1 is an ascent-vertex with $d^-_P(1)=\p2+1$ and $|V_{\p2+1}|=2$, it is fixed by every automorphism of~$P$.
Since $q-1\notin[a,b]$, vertices 0 and $q$ disagree on~1,
so that, by Observation~\ref{obs:agree-fixpoint}, 0 and $q$ cannot belong to the same orbit 
of any automorphism of~$P$ and we are done.

Moreover, by Lemma~\ref{lem:rigid-plateau}, all vertices of the first plateau 
(whenever $a+b\neq q+1$)
are fixed by every automorphism of~$P$.
Hence, all vertices $i\le\p2$ are fixed by every automorphism of~$P$ and
the result follows by Corollary~\ref{cor:half-is-fixed}.

\item $b=q-1$.\\
In that case, we thus have $a=\p2$.
The first plateau contains vertices $1$ to $\p2-1$
and thus induces a transitive tournament, so that all its vertices are 
fixed by every automorphism of~$P$.
Now, since vertices $\p2$ and $q$ disagree on $\p2-1$,
the three vertices $\{0,\p2,q\}$ are all fixed by every automorphism of~$P$
and the result follows by Corollary~\ref{cor:half-is-fixed}.

\item $a=2$.\\
In that case, we thus have $b={\p2}+1$.
The first plateau contains vertices $1$ to $\p2-1$
and is isomorphic to the pseudo-cyclic tournament $PC=P(\p2-2;[2,\p2-2])$.
Since $PC^c$ is rigid by Theorem~\ref{th:S-atmost2}, $PC$ is rigid.
Now, since vertices $0$ and $q$ disagree on $1$,
the three vertices $\{0,\p2,q\}$ are all fixed by every automorphism of~$P$
and, again, the result follows by Corollary~\ref{cor:half-is-fixed}.

\end{enumerate}

%%%%%%%%%%%%%%%%%%%%%%%%%%%%%%%%%%%%%%
\begin{figure}
\begin{center}
%%%%%%%%%%%%%%%%%%%%%%%%%%%%%%%%% (a)
\begin{tikzpicture}[domain=0:11,x=1cm,y=1cm]
% axes
\draw[very thick,->] (-0.5,0) -- ++(5,0);
\draw[very thick,->] (0,-0.5) -- ++(0,3.5);
% légende des axes
\node[right]at(4.5,0){$i$};
\node[above]at(0,3){$d^-_P(i)$};
% sommets de la courbe
\SOMMET{0,0.5}; \SOMMET{1.5,2}; \SOMMET{2,1.5}; \SOMMET{2.5,1}; \SOMMET{4,2.5}; 
% courbe
\draw[very thick,dotted] (0,0.5)--(1.5,2);
\draw[very thick,dotted] (1.5,2)--(2.5,1);
\draw[very thick,dotted] (2.5,1)--(4,2.5);
% mediane et lignes verticales / horizontales
%\draw[thick,dashed] (2,0)--(2,1.5);
\draw[thick,dashed] (1.5,0)--(1.5,2);
\draw[thick,dashed] (2.5,0)--(2.5,1);
\draw[thick,dashed] (0,2)--(1.5,2);
\draw[thick,dashed] (0,1)--(2.5,1);
% legende axe x
%\node[below] at(2,0){$\p2$};
\node[below] at(1.5,0){$a-1$};
\node[below] at(2.5,0){$b$};
\node[left]at(0,0.5){$b-a+1$};
\node[left]at(0,1){$q-b$};
\node[left]at(0,2){$b$};
% legende
\node[below]at(1.5,-1){(a) $a+b=q+1$ and $q-b\ge b-a+1$};
\end{tikzpicture}
\mbox{}\\
\vskip 0.5cm
%%%%%%%%%%%%%%%%%%%%%%%%%%%%%%%%% (b)
\begin{tikzpicture}[domain=0:11,x=1cm,y=1cm]
% axes
\draw[very thick,->] (-0.5,0) -- ++(6,0);
\draw[very thick,->] (0,-0.5) -- ++(0,3.5);
% légende des axes
\node[right]at(5.5,0){$i$};
\node[above]at(0,3){$d^-_P(i)$};
% sommets de la courbe
\SOMMET{0,1}; \SOMMET{1.5,2.5}; \SOMMET{2.5,1.5}; \SOMMET{3.5,0.5}; \SOMMET{5,2}; 
% courbe
\draw[very thick,dotted] (0,1)--(1.5,2.5);
\draw[very thick,dotted] (1.5,2.5)--(2.5,1.5);
\draw[very thick,dotted] (2.5,1.5)--(3.5,0.5);
\draw[very thick,dotted] (3.5,0.5)--(5,2);
% mediane et lignes verticales / horizontales
\draw[thick,dashed] (1.5,0)--(1.5,2.5);
\draw[thick,dashed] (2.5,0)--(2.5,1.5);
\draw[thick,dashed] (3.5,0)--(3.5,0.5);
\draw[thick,dashed] (0,2.5)--(1.5,2.5);
\draw[thick,dashed] (0,0.5)--(3.5,0.5);
% legende axe x
%\node[below] at(2,0){$\p2$};
\node[below] at(1.5,0){$a-1$};
\node[below] at(2.5,0){$\p2$};
\node[below] at(3.5,0){$b$};
\node[left]at(0,0.5){$q-b$};
\node[left]at(0,1){$b-a+1$};
\node[left]at(0,2.5){$b$};
% legende
\node[below]at(2,-1){(b) $a+b=q+1$ and $q-b < b-a+1$};
\end{tikzpicture}
\caption{Indegree paths for the proof of Theorem~\ref{th:interval}, case $b-a+1<{\p2}$}
\label{fig:Nab-3}
\end{center}
\end{figure}

%%%%%%%%%%%%%%%%%%%%%%%%%%%%%%%%%%%%%%

\item $b-a+1<{\p2}$.\\
The indegree path of~$P$
contains an ascent,
possibly a plateau (if $a\neq q+1-b$),
a descent, possibly a plateau and an ascent.
If $a\le q+1-b$, the first vertex of the first plateau is $a-1$
and its last vertex is $q-b$ (so that there is no plateau if $a=q+1-b$).
If $a>q+1-b$, the first vertex of the first plateau is $q-b$
and its last vertex is $a-1$.

We will prove that every vertex of~$P$ is fixed by every automorphism of~$P$.
Note that $d^-_P(0)=b-a+1$, so that $d^-_P(q)=q-b+a-1>d^-_P(0)$
since $b-a+1<{\p2}$.

We consider two subcases, depending on whether $\IS(P)$ contains a plateau or not.

\begin{enumerate}
\item $a=q+1-b$.\\
In that case, $\IS(P)$ does not contain any plateau,
the first vertex of the descent is $a-1$ and its last vertex is $b$.
Moreover, $d^-_P(a-1)=b$ and $d^-_P(b)=q-b$.
Let us consider the vertex $b$. 
If $d^-_P(b)=q-b\ge b-a+1=d^-_P(0)$  (see Figure~\ref{fig:Nab-3}(a)), then $|V_{q-b}(P)|=2$ 
and thus $b$ is fixed by every automorphism of~$P$.
If $q-b>b-a+1$ (see Figure~\ref{fig:Nab-3}(b)), then $|V_{q-b}(P)|=1$ and thus, again,
$b$ is fixed by every automorphism of~$P$.

Now, we claim that every set $V_d(P)$ with $|V_d(P)|=3$ contains
three fixed vertices. Each such set is of the form $\{i,j,k\}$, $i<j<k$,
with $j+k=2b$.
Then, $j$ and $k$ disagree on $b$ (since $b-j=k-b<q-b$, so that
$jb$ and $bk$ are arcs in $P$) and, as $b$
is fixed by every automorphism of~$P$, by Observation~\ref{obs:agree-fixpoint},
$i$, $j$ and $k$ are also fixed by every automorphism of~$P$.
The result then follows by Corollary~\ref{cor:half-is-fixed}.

\item $a\neq q+1-b$.\\
In that case, the vertex $a-1$ is either the first or the last vertex of
the first plateau and $d^-_P(a-1)=b$ or $d^-_P(a-1)=q-a+1$, respectively.
Similarly, the vertex $b$ is either the first or the last vertex of
the last plateau and $d^-_P(b)=q-b$ or $d^-_P(b)=a-1$, respectively.

We first claim that whenever 0 is fixed by every automorphism of~$P$,
all vertices belonging to a set $V_d(P)$ with $|V_d(P)|=3$ are also
fixed by every automorphism of~$P$.
Indeed, for any such set $\{i,j,k\}$
with $i<j<k$,
we have $i<a-1$ and $a-1<j<b$, so that $i$ and $j$ disagree
on 0. Therefore, $i$, $j$ and $k$ are all 
fixed by every automorphism of~$P$.

%%%%%%%%%%%%%%%%%%%%%%%%%%%%%%%%%%%%%%
\begin{figure}
\begin{center}
%%%%%%%%%%%%%%%%%%%%%%%%%%%%%%%%% (a)
\begin{tikzpicture}[domain=0:11,x=1cm,y=1cm]
% axes
\draw[very thick,->] (-0.5,0) -- ++(6,0);
\draw[very thick,->] (0,-0.5) -- ++(0,3.5);
% légende des axes
\node[right]at(5.5,0){$i$};
\node[above]at(0,3){$d^-_P(i)$};
% sommets de la courbe
\SOMMET{0,0.5}; \SOMMET{1.5,2}; \SOMMET{2,2}; \SOMMET{2.5,1.5}; \SOMMET{3,1}; \SOMMET{3.5,1}; \SOMMET{4.5,2}; \SOMMET{5,2.5}; 
% courbe
\draw[very thick,dotted] (0,0.5)--(1.5,2);
\draw[very thick,dotted] (1.5,2)--(2,2);
\draw[very thick,dotted] (2,2)--(3,1);
\draw[very thick,dotted] (3,1)--(3.5,1);
\draw[very thick,dotted] (3.5,1)--(5,2.5);
% mediane et lignes verticales / horizontales
\draw[thick,dashed] (4.5,0)--(4.5,2);
\draw[thick,dashed] (1.5,0)--(1.5,2);
\draw[thick,dashed] (2.5,0)--(2.5,1.5);
\draw[thick,dashed] (3,0)--(3,1);
%\draw[thick,dashed] (3,0)--(3,1);
\draw[thick,dashed] (0,2)--(1.5,2);
\draw[thick,dashed] (2,2)--(4.5,2);
\draw[thick,dashed] (0,1)--(3,1);
% legende axe x
\node[below] at(4.5,0){$x$};
\node[below] at(3,0){$b$};
\node[below] at(1.5,0){$a-1$};
\node[below] at(2.5,0){$\p2$};
%\node[below] at(3,0){$b$};
\node[left]at(0,1){$q-b$};
\node[left]at(0,0.5){$b-a+1$};
\node[left]at(0,2){$b$};
% legende
\node[below]at(2,-1){(a) $a+b\neq q+1$ and $q-b > b-a+1$};
\end{tikzpicture}
\mbox{}\\
\vskip 0.5cm
%%%%%%%%%%%%%%%%%%%%%%%%%%%%%%%%% (b)
\begin{tikzpicture}[domain=0:11,x=1cm,y=1cm]
% axes
\draw[very thick,->] (-0.5,0) -- ++(6,0);
\draw[very thick,->] (0,-0.5) -- ++(0,2.5);
% légende des axes
\node[right]at(5.5,0){$i$};
\node[above]at(0,2){$d^-_P(i)$};
% sommets de la courbe
\SOMMET{0,0.5}; \SOMMET{1,1.5}; \SOMMET{2,1.5}; \SOMMET{2.5,1}; \SOMMET{3,0.5}; \SOMMET{3.5,0.5}; \SOMMET{4,0.5}; \SOMMET{5,1.5}; 
% courbe
\draw[very thick,dotted] (0,0.5)--(1,1.5);
\draw[very thick,dotted] (1,1.5)--(2,1.5);
\draw[very thick,dotted] (2,1.5)--(3,0.5);
\draw[very thick,dotted] (3,0.5)--(4,0.5);
\draw[very thick,dotted] (4,0.5)--(5,1.5);
% mediane et lignes verticales / horizontales
\draw[thick,dashed] (1,0)--(1,1.5);
\draw[thick,dashed] (2.5,0)--(2.5,1);
\draw[thick,dashed] (3,0)--(3,0.5);
\draw[thick,dashed] (3.5,0)--(3.5,0.5);
%\draw[thick,dashed] (2.5,0)--(2.5,1);
\draw[thick,dashed] (0,0.5)--(3,0.5);
\draw[thick,dashed] (0,1.5)--(1,1.5);
% legende axe x
\node[below] at(2.5,0){$\p2$};
\node[below] at(1,0){$a-1$};
\node[below] at(3,0){$b$};
\node[below] at(3.5,0){$x$};
%\node[below] at(3,0){$b$};
\node[left]at(0,0.5){$b-a+1$};
%\node[left]at(0,1){$b-a+1$};
\node[left]at(0,1.5){$b$};
% legende
\node[below]at(2,-1){(b) $a+b\neq q+1$, $q-b = b-a+1$ and $a<q+1-b$};
\end{tikzpicture}
\mbox{}\\
\vskip 0.5cm
%
%%%%%%%%%%%%%%%%%%%%%%%%%%%%%%%%% (c)
\begin{tikzpicture}[domain=0:11,x=1cm,y=1cm]
% axes
\draw[very thick,->] (-0.5,0) -- ++(7,0);
\draw[very thick,->] (0,-0.5) -- ++(0,3.5);
% légende des axes
\node[right]at(6.5,0){$i$};
\node[above]at(0,3){$d^-_P(i)$};
% sommets de la courbe
\SOMMET{0,1}; \SOMMET{1.5,2.5}; \SOMMET{2,2.5}; \SOMMET{3,1.5}; \SOMMET{4,0.5}; \SOMMET{4.5,0.5}; \SOMMET{6,2}; 
\SOMMET{3.5,1};\SOMMET{5,1};
% courbe
\draw[very thick,dotted] (0,1)--(1.5,2.5);
\draw[very thick,dotted] (1.5,2.5)--(2,2.5);
\draw[very thick,dotted] (2,2.5)--(4,0.5);
\draw[very thick,dotted] (4,0.5)--(4.5,0.5);
\draw[very thick,dotted] (4.5,0.5)--(6,2);
% mediane et lignes verticales / horizontales
\draw[thick,dashed] (2,0)--(2,2.5);
\draw[thick,dashed] (3,0)--(3,1.5);
\draw[thick,dashed] (4.5,0)--(4.5,0.5);
\draw[thick,dashed] (3.5,0)--(3.5,1);
\draw[thick,dashed] (5,0)--(5,1);
\draw[thick,dashed] (0,2.5)--(1.5,2.5);
\draw[thick,dashed] (0,0.5)--(4,0.5);
\draw[thick,dashed] (0,1)--(5,1);
% legende axe x
\node[below] at(3,0){$\p2$};
\node[below] at(2,0){$a-1$};
\node[below] at(4.5,0){$b$};
\node[below] at(3.5,0){$x$};
\node[below] at(5,0){$y$};
%\node[below] at(3,0){$b$};
\node[left]at(0,0.5){$a-1$};
\node[left]at(0,1){$b-a+1$};
\node[left]at(0,2.5){$q-a+1$};
% legende
\node[below]at(2.5,-1){(c) $a+b\neq q+1$, $a-1 < b-a+1$ and $a>q+1-b$};
\end{tikzpicture}
\caption{Indegree paths for the proof of Theorem~\ref{th:interval}, case $b-a+1<{\p2}$ (Cont.)}
\label{fig:Nab-3bis}
\end{center}
\end{figure}

%%%%%%%%%%%%%%%%%%%%%%%%%%%%%%%%%%%%%%

We now consider three subcases, depending on the values of $d^-_P(b)$ and $d^-_P(0)$.

\begin{enumerate}
\item $d^-_P(b)>d^-_P(0)$.\\
In that case (see Figure~\ref{fig:Nab-3bis}(a)),  the vertex 0
is fixed by every automorphism of~$P$, so that
all vertices belonging to a set $V_d(P)$ with $|V_d(P)|=3$ are also
fixed by every automorphism of~$P$.

If $a-1$ is the first vertex of the first plateau, let $x>\p2$ be the unique vertex with $d^-_P(x)=d^-_P(a-1)$.
We then have $V_b(P)=\{a-1,a,\dots,q-b,x\}$.
Since $a-1$ and $x$ are the only in-neighbours of 0 in $V_b(P)$,
both of them are fixed by every automorphism of~$P$.
Therefore, by Lemma~\ref{lem:rigid-plateau},
every vertex belonging to the first plateau is
fixed by every automorphism of~$P$.

Suppose now that $a-1$ is the last vertex of the first plateau.
Similarly, let $x<\p2$ be the unique vertex with $d^-_P(x)=d^-_P(b)$,
so that $V_{a-1}(P)=\{x,q-a+1,\dots,b\}$.
Since $x$ is the unique out-neighbour of~0 in $V_{a-1}(P)$
(as $x<a$ and $q-a+1>a$, which gives $V_{a-1}(P)\ \cap\ [a,b]=\{q-a+1,\dots,b\}$), 
$x$ is fixed by every automorphism of~$P$ and,
by Proposition~\ref{prop:fixed-by-symmetry}, $q-x$ is also fixed by every automorphism of~$P$.
Therefore, by Lemma~\ref{lem:rigid-plateau},
every vertex belonging to the first plateau is
fixed by every automorphism of~$P$.

The result then follows by Corollary~\ref{cor:half-is-fixed}.

\item $d^-_P(b)= d^-_P(0)$.\\
Suppose first that $a-1$ is the last vertex of the first plateau (that is, $a>q+1-b$), so that $b$ is
the last vertex of the second plateau.
In that case, the vertices of the second plateau induce a transitive
tournament and they all agree on 0. Therefore, the sub-tournament induced
by $V_{b-a+1}(P)$ is rigid, so that all its vertices, and in particular 0,
are fixed by every automorphism of~$P$.
Hence,
all vertices belonging to a set $V_d(P)$ with $|V_d(P)|=3$ are also
fixed by every automorphism of~$P$.
By Proposition~\ref{prop:fixed-by-symmetry}, all vertices of the first plateau
are also fixed by every automorphism of~$P$ and, again,
the result follows by Corollary~\ref{cor:half-is-fixed}.

\medskip

Suppose now that $a-1$ is the first vertex of the first plateau
(that is, $a<q+1-b$), so that $b$ is
the first vertex of the second plateau (see Figure~\ref{fig:Nab-3bis}(b)).

We first prove that the vertex 0 is fixed by every automorphism of~$P$.
Suppose to the contrary that 0 belongs
to an orbit of size at least~3 of some automorphism $\phi$ of~$P$. Among the vertices of the second
plateau, $b$ is the only in-neighbour of 0. Therefore, this orbit must be of 
size~3 (recall that, by Observation~\ref{obs:orbit-regular}, the sub-tournament induced by an orbit must be regular)
and thus a 3-cycle $b0x$. Note that since $xb$ is an arc of~$P$,
we then have $x-b\ge a$.
Consider now the set $V_{b-a+2}=\{1,b-1,q-a+2\}$.
The vertex 0 has only one in-neighbour in $V_{b-a+2}$, namely $b-1$,
since $b-1\in [a,b]$, $1\notin [a,b]$
and $q-a+2\notin [a,b]$ (since $q-a+2>b$).
On the other hand, $b$ has at least two in-neighbours in $V_{b-a+2}$, namely
$1$ (since $b-1\in [a,b]$) and $q-a+2$ (since $q-a+2-b>x-b\ge a$
and $q-a+2-b<\p2<b$).
By Observation~\ref{obs:agree-odd-orbits}, this implies that $V_{b-a+2}$ cannot be an orbit of $\phi$,
so that, in particular, $1$ is fixed by $\phi$.
Now, note that $1$ is an in-neighbour of $x$ (since
$x-1\ge b+a-1>b$) and an out-neighbour of 0,
a contradiction since, by Observation~\ref{obs:agree-fixpoint}, all vertices in $\{0,b,x\}$ should agree
on every vertex fixed by~$\phi$.

Therefore, 0 is fixed by every automorphism of~$P$, so that
all vertices belonging to a set $V_d(P)$ with $|V_d(P)|=3$ are also
fixed by every automorphism of~$P$.
By Proposition~\ref{prop:fixed-by-symmetry}, $q$ is also
fixed  by every automorphism of~$P$ and thus, by
 Lemma~\ref{lem:rigid-plateau},
 every vertex of the first plateau is fixed  by every automorphism of~$P$.
 The result then follows by Corollary~\ref{cor:half-is-fixed}.

\item $d^-_P(b)< d^-_P(0)$.\\
Suppose first that $a-1$ is the last vertex of the first plateau (that is, $a>q+1-b$), so that $b$ is
the last vertex of the second plateau (see Figure~\ref{fig:Nab-3bis}(c)).
In that case, the vertices of each plateau induce a transitive
tournament and thus all vertices of these plateaus are fixed  by every automorphism of~$P$.
Consider the set $V_{b-a+1}=\{0,x,y\}=\{0,q-b+a-1,2b-2a+2\}$.
Vertices 0 and $q-b+a-1$ disagree on $q-a+1$
(since $q-a+1\in [a,b]$
and $r=q-a+1-(q-b+a-1)<q-b<a-1$ so that $r\notin [a,b]$), 
which is fixed  by every automorphism of~$P$,
and thus $0$, $q-b+a-1$ and $2b-2a+2$ are all fixed by every automorphism of~$P$.

Therefore,
all vertices belonging to a set $V_d(P)$ with $|V_d(P)|=3$ are also
fixed by every automorphism of~$P$, and the 
result follows by Corollary~\ref{cor:half-is-fixed}.

\medskip

Suppose now that $a-1$ is the first vertex of the first plateau (that is, $a<q+1-b$), so that $b$ is
the first vertex of the second plateau.

We first prove that the vertex 0 is fixed by every automorphism of~$P$.
Suppose to the contrary that 0 belongs
to an orbit of size~3 of some automorphism $\phi$ of~$P$, say a 3-cycle 
$0xy$, with $a<x<b$ and $y>b+1$,
so that $x0$, $0y$ and $yx$ are arcs in $P$, implying in particular $a\le y-i\le b$.
Consider now the set $V_{b-a+2}=\{1,x-1,y+1\}$.
%Since $1<a$ and $i-1\ge a$, we get that $01$, $0(j+1)$, $(i-1)0$ and $(i-1)i$
%are all arcs in $P$.
Since $1<a$, $a\le x-2$ (as $a-1<\p2<x$) and $x<b$, we get that $01$, $x1$, $(x-1)0$ and $(x-1)x$
are all arcs in $P$, so that $1$ and $x-1$ disagree on the orbit of 0 with respect to $\phi$.
This implies that $V_{b-a+2}$ cannot be an orbit of $\phi$,
so that, in particular, $1$ is fixed by $\phi$.
Now, note that $1$ is an in-neighbour of $y$ (since $y-1>b$)
and an out-neighbour of 0,
a contradiction since all vertices in $\{0,x,y\}$ should agree
on every vertex fixed by~$\phi$.

Therefore, 0 is fixed by every automorphism of~$P$, so that
all vertices belonging to a set $V_d(P)$ with $|V_d(P)|=3$ are also
fixed by every automorphism of~$P$.
By Lemma~\ref{lem:rigid-plateau},
every vertex of the first plateau is fixed  by every automorphism of~$P$
and the result follows by Corollary~\ref{cor:half-is-fixed}.

\end{enumerate}

\end{enumerate}

\end{enumerate}

This concludes the proof.
\end{proof}

Again, since any tournament $T$ is isomorphic to its converse, Theorem~\ref{th:interval}
gives the following.

\begin{corollary}
If $T=T(2q+1;S)$ is a cyclic tournament, such that $S^-=[1,a]\cup[b,q]$, $1\le a\le b\le q$,
then $T$ satisfies the Albertson-Collins Conjecture.
\label{cor:interval}
\end{corollary}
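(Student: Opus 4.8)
The plan is to reduce the statement to Theorem~\ref{th:interval} by passing to the converse tournament, exactly in the spirit of the derivation of Corollary~\ref{cor:atmostp-1} from Theorem~\ref{th:S-atmost2}. Recall from Section~\ref{sec:basic} that the converse $T^c$ of $T=T(2p+1;S)$ is again a cyclic tournament, namely $T^c=T(2p+1;S^c)$ with $S^c=\{1,\dots,2p\}\setminus S$. The crucial point is that $T$ and $T^c$ share the same vertex set $\{0,\dots,2p\}$, hence the same canonical $2$-labeling $\lambda^*$, as well as the very same automorphism group, since a permutation of the vertices preserves every arc of $T$ if and only if it preserves every arc of $T^c$ (this is the precise content behind the self-converse property $T\cong T^c$). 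Consequently, $\lambda^*$ is distinguishing for $T$ if and only if it is distinguishing for $T^c$, and so it suffices to prove that $T^c$ satisfies the Albertson-Collins Conjecture.

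First I would determine the negative connectors of $T^c$. Since $(S^c)^+=S^c\cap\{1,\dots,p\}=\{1,\dots,p\}\setminus S^+=S^-$, we obtain $(S^c)^-=\{1,\dots,p\}\setminus S^-$; that is, taking the converse replaces $S^-$ by its complement in $\{1,\dots,p\}$. With $S^-=[1,a]\cup[b,p]$, this complement is the single interval $[a+1,b-1]$. Hence $T^c$ is a cyclic tournament whose negative connectors form an interval, and Theorem~\ref{th:interval}, applied to $T^c$, shows that $T^c$ satisfies the Albertson-Collins Conjecture. By the transfer principle above, $T$ does as well.

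The only point requiring any care is the boundary behaviour of the complement $[a+1,b-1]$. When $b\ge a+2$ this is a genuine nonempty interval $[a',b']$ with $1\le a'=a+1\le b'=b-1\le p$, so that Theorem~\ref{th:interval} applies verbatim. When $b\le a+1$, however, $S^-$ already equals the whole set $\{1,\dots,p\}$ and the interval $[a+1,b-1]$ is empty, so $(S^c)^-=\emptyset$; in this degenerate case $T^c$ satisfies the conjecture by Theorem~\ref{th:S-atmost2} (since $|(S^c)^-|=0$), or equivalently by Theorem~\ref{th:transitive-and-almost}. I expect no genuine obstacle here: the entire argument rests on the elementary observation that the complement of a pair of end-intervals $[1,a]\cup[b,p]$ in $[1,p]$ is precisely the middle interval $[a+1,b-1]$, combined with the converse-invariance of the distinguishing property of the canonical labeling.
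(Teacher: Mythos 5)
Your proposal is correct and follows essentially the same route as the paper: the paper derives Corollary~\ref{cor:interval} from Theorem~\ref{th:interval} in one line by passing to the converse tournament, whose negative connectors form the complementary interval $[a+1,b-1]$. If anything, your version is more careful than the paper's, since you justify the transfer via the exact equality $\Aut(T)=\Aut(T^c)$ (which is what really makes the canonical labeling distinguishing for $T$ iff for $T^c$, rather than the mere isomorphism $T\cong T^c$ the paper invokes), and you explicitly handle the degenerate case $b\le a+1$ where the complement is empty.
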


%%%%%%%%%%%%%%%%%%%%%%%%%%%%%%%%%%%%%%%%%%%%%%%%%%%%%%%%%%%%%%%%%%%%%%%%%%%%%%%%%%%%%%%%%%%%%%%%%%%%%%%%%%%%%%%%%%%%%%%%%%%%%
%%%%%%%%%%%%%%%%%%%%%%%%%%%%%%%%%%%%%%%%%%%%%%%%%%%%%%%%%%%%%%%%%%%%%%%%%%%%%%%%%%%%%%%%%%%%%%%%%%%%%%%%%%%%%%%%%%%%%%%%%%%%%
%%%%%%%%%%%%%%%%%%%%%%%%%%%%%%%%%%%%%%%%%%%%%%%%%%%%%%%%%%%%%%%%%%%%%%%%%%%%%%%%%%%%%%%%%%%%%%%%%%%%%%%%%%%%%%%%%%%%%%%%%%%%%
%%%%%%%%%%%%%%%%%%%%%%%%%%%%%%%%%%%%%%%%%%%%%%%%%%%%%%%%%%%%%%%%%%%%%%%%%%%%%%%%%%%%%%%%%%%%%%%%%%%%%%%%%%%%%%%%%%%%%%%%%%%%%
%%%%%%%%%%%%%%%%%%%%%%%%%%%%%%%%% GENERAL RESULTS
\subsection{Paley tournaments}
\label{subsec:paley}

Let $p$ be a prime, $p\equiv 3\pmod 4$, so that $p=2q+1$ with $q$ odd.
The \emph{Paley tournament} is the cyclic tournament 
$QR_p=T(p;S_p)$ where $S_p$ is the set of non-zero quadratic residues of $p$.
That is, for every $i,j$, $i<j$, $ij$ is an arc in $QR_p$  if and only if $j-i$ is a non-zero square in $\ZZZ_p$.
Paley tournaments are arc-transitive and the set $\Aut(QR_p)$
is the set of mappings $i\mapsto ai+b$, where $a,b\in\ZZZ_p$ and $a$ is a non-zero square.
Since the number of non-zero squares in $\ZZZ_p$ is $q=\left\lfloor\frac{p}{2}\right\rfloor$,
we thus have $|\Aut(QR_p)|=pq$.

In~\cite{AC99}, Albertson and Collins show that the Paley tournament $QR_7$ satisfies
their conjecture. We prove here that all Paley
tournaments satisfy the Albertson-Collins Conjecture.

\begin{theorem}
For every prime $p$, $p\equiv 3\pmod 4$, $QR_p$ satisfies the Albertson-Collins Conjecture.
\label{th:paley}
\end{theorem}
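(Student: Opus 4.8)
The plan is to exploit the explicit description of $\Aut(QR_n)$ recalled above, rather than trying to establish rigidity of one of the half-tournaments $T_{0,p}$ or $T_{p+1,2p}$. A permutation $\phi$ preserves $\lambda^*$ exactly when it preserves the bipartition of $V(QR_n)$ into label classes, that is, when $\phi(I)=I$ for $I=\{0,1,\dots,p\}$ (equivalently $\phi(\{p+1,\dots,2p\})=\{p+1,\dots,2p\}$). Since every automorphism of $QR_n$ has the form $\phi(x)=ax+b$ with $a$ a nonzero square in $\ZZZ_n$, it suffices to prove that the identity is the only such affine map fixing the set $I$ setwise; Gluck's theorem already guarantees $D(QR_n)=2$, and this would show the specific labeling $\lambda^*$ is distinguishing. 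I would first dispose of $n=3$ (the directed $3$-cycle): there the only nonzero square is $1$, so every automorphism is a rotation and only the trivial one fixes $I$. Henceforth assume $n\ge 7$, so that $\gcd(n,6)=1$ and every denominator appearing below is invertible modulo the odd prime $n$.

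First I would extract numerical constraints from the two lowest power sums over $I$, which any set-stabiliser must preserve. Writing $\Sigma_1=\sum_{x\in I}x$ and $\Sigma_2=\sum_{x\in I}x^2$ and using $\phi(I)=I$, one gets $a\Sigma_1+(p+1)b\equiv\Sigma_1\pmod n$ and $a^2\Sigma_2+2ab\,\Sigma_1+(p+1)b^2\equiv\Sigma_2\pmod n$. A short computation with $n=2p+1$ gives the clean normalisations $\Sigma_2=\tfrac{p(p+1)(2p+1)}{6}\equiv 0$ (since $2p+1=n$), $\Sigma_1=\tfrac{p(p+1)}{2}\equiv-\tfrac18$, and $p+1\equiv\tfrac12\pmod n$ (because $2(p+1)=n+1\equiv 1$).

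Substituting these values, the first relation reduces to $a\equiv 1+4b\pmod n$. For the second, using $\Sigma_2\equiv 0$ it becomes $b\bigl(2a\Sigma_1+(p+1)b\bigr)\equiv 0$; assuming $b\neq 0$ (the case $b=0$ forces $a\equiv 1$ directly from the first relation) and cancelling $b$ gives $a\equiv 2b\pmod n$. Solving the two congruences $2b\equiv 1+4b$ yields $b\equiv p$ and hence $a\equiv 2p\equiv -1\pmod n$; concretely, the only candidate nonidentity map is the reflection $x\mapsto p-x$, which indeed fixes $I$ setwise.

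The decisive step --- and the only place where the hypothesis $n\equiv 3\pmod 4$ enters --- is to rule this map out. Since $n\equiv 3\pmod 4$, the element $-1$ is a quadratic non-residue, so $a\equiv -1$ is not an admissible multiplier and $x\mapsto p-x$ is not an automorphism of $QR_n$: one checks that it reverses every arc, being in fact an isomorphism of $QR_n$ onto its converse $QR_n^c$. This contradiction shows that no nonidentity automorphism of $QR_n$ preserves $I$, so $\lambda^*$ is distinguishing and $QR_n$ satisfies the Albertson--Collins Conjecture. I expect the arithmetic normalisation of the power sums to be the only delicate bookkeeping; the conceptual heart is simply that the natural reflective symmetry of the interval $\{0,\dots,p\}$ reverses the arcs of $QR_n$ rather than preserving them, precisely because $-1$ is a non-residue when $n\equiv 3\pmod 4$.
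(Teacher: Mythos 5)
Your proof is correct, and it takes a genuinely different route from the paper's. Both arguments start from the same structural fact, that $\Aut(QR_n)$ consists of the affine maps $x\mapsto ax+b$ with $a$ a nonzero square, but the paper then argues group-theoretically: given a nontrivial $\lambda^*$-preserving $\phi$, it studies the cyclic group generated by $\phi$, forces $b=0$ by an orbit-size argument (leaning on Proposition~\ref{prop:no-single-orbit} together with an agreement/degree contradiction in the spirit of Observation~\ref{obs:agree-fixpoint}), and then derives the final contradiction from the orbit of $1$ having to be all of $\{1,\dots,p\}$. You instead convert the set-stabilization condition $\phi(I)=I$, $I=\{0,\dots,p\}$, into arithmetic: the power sums $\Sigma_1$ and $\Sigma_2$ are invariants of any stabilizing affine map, the resulting two congruences pin $(a,b)$ down to $(1,0)$ or $(-1,p)$, and the reflection $x\mapsto p-x$ is then excluded because $-1$ is a quadratic non-residue exactly when $n\equiv 3\pmod 4$. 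Your route is more elementary and self-contained (no orbit machinery, no reliance on earlier results of the paper), it isolates cleanly where the hypothesis $n\equiv 3\pmod 4$ enters, and it sidesteps a delicate step in the paper's argument: the paper asserts that a nonzero quadratic residue $a$ has multiplicative order exactly $p$ in $\ZZZ_n^*$, which is only guaranteed when $a$ generates the subgroup of squares (in general the order merely divides $p$, and $p$ can be composite here); your computation never invokes the order of $a$ at all. What the paper's approach buys in exchange is uniformity with the orbit/agreement framework it develops for all cyclic tournaments, whereas your moment computation is tailored to the affine structure of $\Aut(QR_n)$. The only bookkeeping in your write-up, the invertibility of $2$, $6$ and $8$ modulo $n$, is handled correctly by treating $n=3$ separately.
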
 

\begin{proof}
Suppose to the contrary that there exists a nontrivial $\lambda^*$-preserving
automorphism $\phi:i\mapsto ai+b$, 
where $a,b\in\ZZZ_p$ and $a$ is a non-zero quadratic residue, and let $\ell$
denote the order of $\phi$.

Consider the action on $V(QR_p)$ of the group
$H=\,\,<\phi>\,\,=\{\phi^k:\ 1\le k\le\ell\}$.
Note that $|H|$ divides $|\Aut(QR_p)|=pq$
and, since $\phi$ is $\lambda^*$-preserving, $|H|$ divides $q$.
Moreover, $q$ cannot be prime, since otherwise the sub-tournament $T_2$ of $QR_p$ induced by the set of vertices
$\{q+1,\dots,2q\}$ would have a single orbit with respect to $\phi$, in contradiction
with Proposition~\ref{prop:no-single-orbit}.

We claim that we necessarily have $b=0$. Assume to the contrary that $b\neq 0$ and
let $r$ be the size of the orbit of 0 with respect to $\phi$.
We have $\phi(0)=b$, $\phi^2(0)=ab+b$ and so on, so that
$$\phi^r(0)=a^{r-1}b+a^{r-2}b+\dots+ab+b=0$$
and $\phi^{r+1}(0)=\phi(0)=b$.
On one other hand, we have
$$\phi^{r+1}(0)=a^{r}b+a^{r-1}b+\dots+ab+b=a^rb.$$
Therefore, $a^rb=b$ and thus $a^r=1$.
Since $a$ is a non-zero quadratic residue in $\ZZZ_p$, its order 
in the multiplicative group $\ZZZ_p^*$ is $q$ and thus $q\,|\,r$. 
Moreover, since $\phi$ is $\lambda^*$-preserving, we have $r=q$.
Hence, the sub-tournament $T_1$ of $QR_p$ induced by the set of vertices
$\{0,\dots,q\}$ contains a fixed vertex $x$ such that all vertices of the orbit of~0
with respect to $\phi$
agree on $x$, which implies either $d^+_{T_1}(x)=0$ or $d^+_{T_1}(x)=q$
in contradiction with the definition of $QR_p$ since both sets
$\{1,\dots,q\}$ and $\{q+1,\dots,2q\}$ contain non-zero quadratic residues.

We thus have $b=0$, so that 0 is fixed by $\phi$.
Moreover, the orbit of 1 with respect to $\phi$ is
$\{1,a,a^2,\dots,a^{q-1}\}$ (recall that the order of $a$
in the multiplicative group $\ZZZ_p^*$ is $q$) and its size is~$q$.
Since $\phi$ is $\lambda^*$-preserving, the orbit of 1 with respect to $\phi$ is thus
$\{1,\dots,q\}$. Therefore, all vertices in $\{1,\dots,q\}$
agree on 0 and we obtain a contradiction as above.
\end{proof}

%%%%%%%%%%%%%%%%%%%%%%%%%%%%%%%%%%%%%%%%%%%%%%%%%%%%%%%%%%%%%%%%%%%%%%%%%%%%%%%%%%%%%%%%%%%%%%%%%%%%%%%
%%%%%%%%%%%%%%%%%%%%%%%%%%%%%%%%%%%%%%%%%%%%%%%%%%%%%%%%%%%%%%%%%%%%%%%%%%%%%%%%%%%%%%%%%%%%%%%%%%%%%%%
%%%%%%%%%%%%%%%%%%%%%%%%%%%%%%%%%%%%%%%%%%%%%%%%%%%%%%%%%%%%%%%%%%%%%%%%%%%%%%%%%%%%%%%%%%%%%%%%%%%%%%%
%%%%%%%%%%%%%%%%%%%%%%%%%%%%%%%%%%%%%%%%%%%%%%%%%%%%%%%%%%%%%%%%%%%%%%%%%%%%%%%%%%%%%%%%%%%%%%%%%%%%%%%
%%%%%%%%%%%%%%%%%%%%%%%%%%%%%%%%% DISCUSSION
\section{Discussion}
\label{sec:discussion}

In this paper, we proposed several sufficient conditions for a cyclic tournament
to satisfy the Albertson-Collins Conjecture.
Proposition~\ref{prop:T1orT2isrigid} says that the cyclic tournament $T=T(2q+1;S)$
satisfies the Albertson-Collins Conjecture whenever any of the sub-tournaments $T_{0,q}$
or $T_{q+1,2q}$ is rigid (however, this condition is not necessary, as shown by Example~\ref{ex:T13}).
Using this property, we then got several sufficient conditions on the structure of the
set $S^-$ of negative connectors for at least one of these two sub-tournaments to be rigid.

We finally propose some directions for future research.
\begin{enumerate}
\item The main question following our work is obviously to prove, or disprove,
the Albertson-Collins Conjecture.

\item In~\cite{B06}, Boutin introduces the notion of a \emph{determining set}
of a graph $G$, that is,
a subset $X$ of $V(G)$ such that, for every two automorphisms $\phi_1,\phi_2\in \Aut(G)$,
$\phi_1=\phi_2$ whenever $\phi_1(x)=\phi_2(x)$ for every vertex $x\in X$
(this notion was independently introduced by Erwin and Harary in~\cite{EH06}, 
under the name of \emph{fixing set}).
%The \emph{determining number} Det$(G)$ of $G$ is then the smallest integer $r$ such that
%$G$ has a determining set of size $r$.
Albertson and Boutin prove in~\cite{AB07} that $D(G)\le d$ if and only if $G$ has
a determining set $X$ with $D(X)\le d-1$.
Therefore $D(G)\le 2$ if and only if the subgraph of $G$ induced by $X$ is rigid.

For each cyclic tournament $T$ of order $2q+1$ for which we proved the Albertson-Collins
Conjecture, we exhibited a rigid determining set of size at most 
$\left\lceil\p2\right\rceil$.
%We also proved that every Paley tournament has a rigid determining set of size 2.
It is not difficult to see that every Paley tournament has a rigid determining set of size 2
(indeed, every pair of vertices is a rigid determining set).
In~\cite{L13}, Lozano proves that every tournament contains a (not necessarily rigid)
determining set of size $\left\lfloor\frac{p}{3}\right\rfloor$.

It would thus be interesting to determine the minimal size of a rigid determining
set in a cyclic tournament. % satisfying the Albertson-Collins Conjecture.
%Note here that proving that every cyclic tournament contains such a rigid determining
%set would settle the Albertson-Collins Conjecture.

\item In~\cite{B13a,B13b}, Boutin also defines the \emph{cost of distinguishing
a graph $G$} such that $D(G)=2$, denoted $\rho(G)$,
 as the minimum size of a label class in a distinguishing 2-labeling of $G$.
 It is easy to prove that $\rho(QR_p)=2$ for every Paley tournament
 $QR_p$ (indeed, for every two vertices $i$ and $j$ in $QR_p$, the 2-labeling 
 $\lambda_{i,j}$ defined by $\lambda_{i,j}(v)=1$ if and only if $v=i$ or $v=j$,
 is distinguishing). 
 
 It would  be interesting to determine $\rho(T)$ for every
 cyclic tournament $T=T(2q+1;S)$,
 or to characterize cyclic tournaments $T$ with $\rho(T)=2$.

\end{enumerate}

%\attention{TO BE COMPLETED...}

%\bigskip
%
%\noindent {\bf Acknowledgement.}
%We are grateful to the anonymous referee who pointed us to Gluck's Theorem.

%%%%%%%%%%%%%%%%%%%%%%%%%%%%%%%%% BIBLIOGRAPHY

\end{document}